%
%

\documentclass[10pt,twocolumn]{IEEEtran}

\usepackage{amsfonts}
\usepackage[latin2]{inputenc}
\usepackage{t1enc}
\usepackage{amssymb}
\usepackage{amsmath}
\usepackage[mathscr]{eucal}
\usepackage{amsthm}
\usepackage{array}
\usepackage{color}
\usepackage{cite}

\newtheorem{definition}{Definition}[section]
\newtheorem{thm}[definition]{Theorem}
\newtheorem{lemma}[definition]{Lemma}
\newtheorem{rem}[definition]{Remark}
\newtheorem{cor}[definition]{Corollary}
\newtheorem{ex}[definition]{Example}

\newcommand{\be}{\begin{equation}}
\newcommand{\ee}{\end{equation}}
\newcommand{\bea}{\begin{eqnarray}}
\newcommand{\eea}{\end{eqnarray}}
\newcommand{\beas}{\begin{eqnarray*}}
\newcommand{\eeas}{\end{eqnarray*}}

\def\vfi{\varphi}
\def\hil{{\mathcal H}}
\def\kil{{\mathcal K}}
\def\A{{\mathcal A}}
\def\B{{\mathcal B}}
\def\C{{\mathcal C}}
\def\E{{\mathcal E}}
\def\F{{\mathcal F}}
\def\I{\mathcal{I}}

\def\S{{\mathcal S}}
\def\X{{\mathcal X}}
\def\Y{{\mathcal Y}}

\def\half{\frac{1}{2}}
\def\iff{\Longleftrightarrow}
\def\imp{\Longrightarrow}
\def\ep{\varepsilon}
\def\bN{\mathbb{N}}
\def\bC{\mathbb{C}}
\def\bR{\mathbb{R}}
\def\bZ{\mathbb{Z}}
\def\Z{\mathbb{Z}}
\def\bz{\left(}
\def\jz{\right)}
\def\inv{^{-1}}
\def\kii{\emph}
\def\kiii{}
\def\unit{I}
\def\egy{\mathbf{1}}
\def\nw{^{*}}
\def\old{^{}}
\def\x{^{(v)}}
\def\xx{(v)}
\def\oldd{\{\s\}}
\def\neww{*}
\def\what{\widehat}
\def\ol{\overline}
\def\nn{\nonumber}
\def\map{\Phi}
\def\dimen{\nu}
\def\dd{\,d}
\def\car{\mathrm{CAR}}
\def\lm{\Lambda}
\def\alm{\ol\lm}
\def\lf{_{\circ}}
\def\sci{\underline{sc}}
\def\scs{\overline{sc}}
\def\t{^{(t)}}

\newcommand{\ki}{\emph}

\newcommand{\s}{\mbox{ }}
\newcommand{\ds}{\mbox{ }\mbox{ }}
\newcommand{\norm}[1]{\left\| #1\right\|}
\newcommand{\inner}[2]{\langle #1 , #2\rangle}
\newcommand{\abs}[1]{\left| #1 \right|}

\newcommand{\vect}[1]{\mathbf{#1}}
\newcommand{\vecc}[1]{\underline{#1}}
\newcommand{\diad}[2]{|#1\rangle\langle #2|}
\newcommand{\pr}[1]{\diad{#1}{#1}}

\newcommand{\rsr}[3]{D_{#3}\bz #1\,\|\, #2\jz}

\newcommand{\rsrn}[3]{D_{#3}\nw\bz #1\,\|\, #2\jz}
\newcommand{\rsrx}[3]{D_{#3}\x\bz #1\,\|\, #2\jz}

\newcommand{\pf}[3]{\psi(#1|#2\|#3)}

\newcommand{\pfo}[3]{\psi\old(#1|#2\|#3)}
\newcommand{\pfn}[3]{\psi\nw(#1|#2\|#3)}
\newcommand{\pfx}[3]{\psi\x(#1|#2\|#3)}

\DeclareMathOperator{\Tr}{Tr}
\DeclareMathOperator{\supp}{supp}
\DeclareMathOperator{\ran}{ran}
\DeclareMathOperator{\Exp}{\mathbb{E}}
\DeclareMathOperator{\Prob}{\mathbb{P}}

\DeclareMathOperator{\derright}{\partial^{+}}
\DeclareMathOperator{\spann}{span}

\renewcommand\theenumi{(\arabic{enumi})}
\renewcommand\theenumii{$\alph{enumii})$}

\begin{document}

\renewcommand\theenumi{(\roman{enumi})}
\renewcommand\theenumii{$\alph{enumii})$}


\title{Two approaches to obtain the strong converse exponent of quantum hypothesis testing for general sequences of quantum states}
\author{Mil\'an Mosonyi, Tomohiro Ogawa
\thanks{Mil\'an Mosonyi was with the F\'{\i}sica Te\`{o}rica: Informaci\'{o} i Fenomens Qu\`{a}ntics,  Departament de F\'isica,
Universitat Aut\`{o}noma de Barcelona.
He is now with the Technische Universit\"at M\"unchen -- Institute for Advanced Study \& Zentrum Mathematik,
and also with the Mathematical Institute, Budapest University of Technology and Economics.
(e-mail: milan.mosonyi@gmail.com)}
\thanks{Tomohiro Ogawa is with the Graduate School of Information Systems,
University of Electro-Communications,
1-5-1 Chofugaoka, Chofu-shi, Tokyo, 182-8585, Japan}

\thanks{Copyright (c) 2014 IEEE. Personal use of this material is permitted.  However, permission to use this material for any other purposes must be obtained from the IEEE by sending a request to pubs-permissions@ieee.org.}
}

%
%

\maketitle

\begin{abstract}

We present two general approaches to obtain the strong converse exponent of simple quantum hypothesis testing
for correlated quantum states. One approach requires that the states satisfy a certain factorization property; typical examples
of such states are the temperature states of translation-invariant finite-range interactions on a spin chain.
The other approach requires the differentiability of a regularized R\'enyi $\alpha$-divergence in the parameter $\alpha$; typical examples of such states include temperature states of non-interacting fermionic lattice systems, and classical irreducible Markov chains. In all cases, we get that the strong converse exponent is equal to the Hoeffding anti-divergence, which in turn is obtained from the regularized R\'enyi divergences of the two states.
\end{abstract}

\section{Introduction}

Assume that we have a quantum system with finite-dimensional Hilbert space $\hil$, and we know that the system is either prepared in the state $\rho_1$ (\ki{null-hypothesis} $H_0$), or  in the state $\sigma_1$ (\ki{alternative hypothesis} $H_1$).
We further assume that we have access to several identical copies of the system, either all prepared in state $\rho_1$, or all prepared in state $\sigma_1$; for $n$ copies this means that the state of the system is given by $\rho_n:=\rho_1^{\otimes n}$ or by $\sigma_n:=\sigma_1^{\otimes n}$.
Our task is to decide which hypothesis is true, by performing measurements on the system. It is easy to see that the most general decision scheme can be described by a binary POVM (\ki{positive operator-valued measure}), with POVM elements $T(0)=T$ corresponding to accepting the null-hypothesis, and $T(1)=I-T$ corresponding to
accepting the alternative hypothesis.
Here we assume that we are allowed to make collective measurements on all the available copies, i.e., $T$ can be any positive semidefinite operator on $\hil_n=\hil^{\otimes n}$, satisfying $T\le I$. Such an operator is called a \ki{test}. Obviously, the test $T$ and the POVM $(T,I-T)$ uniquely determine each other.

There are two possible ways of making an erroneous decision; either by accepting $H_1$ when $H_0$ is true (\ki{type I error}), or the other way around (\ki{type II error}).
For a test $T$, the probabilities of these errors are given by
\begin{align*}
\alpha_n(T)&:=\Tr\rho_n(I-T),\ds\ds\text{(type I)\ds\ds\ds\ds and}\\
\beta_n(T)&:=\Tr\sigma_nT,\ds\ds\text{(type II)}.
\end{align*}
Obviously, there is a trade-off between these two error probabilities, and there are various ways to jointly optimize them. Probably the most studied scenario is where the type I error is required to vanish in the asymptotics or, in a different formulation, to stay below a given threshold for all number of copies. The quantum \ki{Stein's lemma} \cite{HP,ON} states that in both formulations, the best achievable asymptotics for the type II error is an exponential decay, where the exponent is given by the relative entropy $D(\rho_1\|\sigma_1)$.

To get a more detailed view of the trade-off between the two error probabilities, one may ask about the asymptotics of the type I error when the type II error is made to
decay as $\sim e^{-nr}$, where $r$ is a fixed rate below or above the optimal rate $D(\rho_1\|\sigma_1)$. As it turns out, for rates below $D(\rho_1\|\sigma_1)$, the best
achievable asymptotics for the type I error is an exponential decay, with a rate
\begin{align}
d(r|\rho\|\sigma)&:=\sup\left\{-\limsup_{n\to+\infty}\frac{1}{n}\log\alpha_n(T_n):\right.\nonumber\\
&\ds\ds\ds\ds\ds\ds\left.\,\limsup_{n\to+\infty}\frac{1}{n}\log\beta_n(T_n)<-r\right\}\label{thm:iid direct rate1}\\
&=H_r(\rho_1\|\sigma_1):=
\sup_{0<\alpha<1}\frac{\alpha-1}{\alpha}\left[r-D_{\alpha}\old(\rho_1\|\sigma_1)\right].\label{thm:iid direct rate}
\end{align}
Here, $d(r|\rho\|\sigma)$ is the \ki{direct exponent} of the problem, $H_r(\rho_1\|\sigma_1)$ is the \ki{Hoeffding divergence}, and
$D_{\alpha}\old(\rho_1\|\sigma_1):=\frac{1}{\alpha-1}\log\Tr\rho_1^{\alpha}\sigma_1^{1-\alpha}$ is a quantum version of R\'enyi's $\alpha$-divergence
\cite{Renyi,OP,Petz}.
On the other hand, for rates above $D(\rho_1\|\sigma_1)$, we see a \ki{strong converse} behaviour; namely, the type I error not only does not vanish asymptotically, but it goes to $1$ exponentially fast, and the best achievable exponent is
\begin{align}
sc(r|\rho\|\sigma)&:=\inf\left\{-\liminf_{n\to+\infty}\frac{1}{n}\log(1-\alpha_n(T_n)):\right.\nonumber\\
&\ds\ds\ds\ds\ds\ds\left.\,\limsup_{n\to+\infty}\frac{1}{n}\log\beta_n(T_n)<-r\right\}\label{thm:iid sc rate1}\\
&=H_r^*(\rho_1\|\sigma_1):=
\sup_{1<\alpha}\frac{\alpha-1}{\alpha}\left[r-D_{\alpha}\nw(\rho_1\|\sigma_1)\right].\label{thm:iid sc rate}
\end{align}
Here, $sc(r|\rho\|\sigma)$ is the \ki{strong converse exponent} of the problem, $H_r^*(\rho_1\|\sigma_1)$ is the \ki{Hoeffding anti-divergence}, and
$D_{\alpha}\nw(\rho_1\|\sigma_1):=\frac{1}{\alpha-1}\log\Tr(\rho_1^{1/2}\sigma_1^{(1-\alpha)/\alpha}\rho_1^{1/2})^{\alpha}$ is an alternative version of the quantum R\'enyi
divergence, recently introduced in \cite{Renyi_new,WWY}.

The expression for the direct exponent in the classical case (corresponding to commuting $\rho_1$ and $\sigma_1$) has been obtained
in \cite{Hoeffding,Blahut}. The exponential decay of the type I error probabilities in the non-commuting case has been proved in \cite{HO}, with bounds on
$d(r|\rho\|\sigma)$ similar to \eqref{thm:iid direct rate} in form but with a different quantum version of the R\'enyi divergences.
The correct form of the direct exponent has been obtained in \cite{Hayashi,Nagaoka}, based on techniques developed for the quantum
Chernoff bound in
\cite{Aud,NSz,ANSzV}.
The strong converse exponent in the classical case has been first determined in \cite{HK}, expressed as an optimization of relative entropies.
The strong converse property of the quantum Stein's lemma has been first proved in \cite{ON}, with a suboptimal bound on
the strong converse exponent;
the proof was later much simplified in \cite{Nagaoka2}.
An expression for the strong converse exponent in quantum hypothesis testing was given in \cite{H:text},
with an asymptotic post-measurement version of the $D_{\alpha}$ R\'enyi divergences in place of $D_{\alpha}\nw(\rho_1\|\sigma_1)$.
The expression \eqref{thm:iid sc rate} has been obtained recently in \cite{MO}.

It is worth noting that two different notions of quantum R\'enyi divergence are needed to completely describe the trade-off curve.
Indeed, the direct exponent is expressed in terms of the $D_{\alpha}\old$ divergences with $\alpha\in(0,1)$, whereas the strong
converse exponent is a function of the $D_{\alpha}\nw$ divergences with $\alpha>1$. A more explicit operational interpretation of
these divergences as generalized cutoff rates \cite{Csiszar} has been given in \cite{MH,MO}.

These results give a complete description of the trade-off between the exponents of the two error probabilities
in the i.i.d.~(independent and identically distributed) case described above.
In reality, however, there may be correlations between the different copies. That is, while the state of the individual systems
are still described by $\rho_1$ if $H_0$ is true, the global state of $n$ copies may not be of the product form
$\rho_1^{\otimes n}$ as assumed above, and similarly, $\sigma_n$ might contain correlations among the different copies.
A physically relevant example of such a scenario is where the consecutive copies are parts of a chain of particles governed
by the translates of some local Hamiltonian, and $\rho_n$ and $\sigma_n$ are temperature (Gibbs) states of two different
Hamiltonians.

Exact trade-off formulas for such states have been obtained in the direct domain in \cite{HMO2}, where two general methods have been
developed for the hypothesis testing of correlated states. The first method works if the states corresponding to both hypotheses
satisfy a certain \ki{factorization property}, which was shown to be satisfied by temperature states of translation-invariant
finite-range Hamiltonians on a spin chain in \cite{HMO}. The other method requires the existence of the regularized R\'enyi
divergences $\ol D_{\alpha}\old(\rho\|\sigma):=\lim_n(1/n)D_{\alpha}\old(\rho_n\|\sigma_n)$ for all
$\alpha\in(0,1)$, and differentiability of the limit in the parameter $\alpha$. Typical examples of such states include classical irreducible Markov chains \cite{DZ}, certain finitely
correlated states \cite{FNW,HMO2}, and temperature states of non-interacting fermions \cite{MHOF} and bosons \cite{M} on a cubic
lattice. In both cases, the trade-off formula is a direct generalization of \eqref{thm:iid direct rate1}--\eqref{thm:iid direct rate}, with $\ol D_{\alpha}(\rho\|\sigma)$ in place of $D_{\alpha}(\rho_1\|\sigma_1)$.

Here we show analogous results for the trade-off in the strong converse region.
Namely, we show the following extension of the i.i.d.~result \eqref{thm:iid sc rate1}--\eqref{thm:iid sc rate}:
\begin{align}
&sc(r|\rho\|\sigma)=H_r^*(\rho\|\sigma):=
\sup_{1<\alpha}\frac{\alpha-1}{\alpha}\left[r-\ol D_{\alpha}(\rho\|\sigma)\right]\label{main result}\\
&\text{with}\ds\ds\ol D_{\alpha}(\rho\|\sigma):=\lim_n\frac{1}{n}D_{\alpha}\nw(\rho_n\|\sigma_n),\s\alpha>1,
\end{align}
if one of two conditions is satisfied:
(1) the states corresponding to both hypotheses satisfy the factorization property, or
(2) the limit $\ol D_{\alpha}(\rho\|\sigma)$ exists for all $\alpha>1$, it is a differentiable function of $\alpha$, and it coincides
with a variant of this limit, explained later.
The main examples satisfying the first condition are again the temperature states
of translation-invariant finite-range Hamiltonians on a spin chain. The second condition is satisfied by the same class of classical Markov states and finitely correlated states as in the previous paragraph,
and for temperature states of non-interacting fermions on a cubic
lattice.

The structure of the paper is as follows. In Section \ref{sec:prel} we list some mathematical preliminaries.
In Section \ref{sec:Renyi def} we summarize some properties of the quantum R\'enyi divergences,
in Section \ref{sec:asymp} we introduce their asymptotic versions,
which is needed when dealing with correlated states, and in Section \ref{sec:LF}
we introduce the Hoeffding anti-divergence based on the asymptotic R\'enyi $\alpha$-divergences.

Section \ref{sec:sc} is the main contribution of the paper. Here we start with a more general approach than described above, namely, we consider the strong converse exponent of the hypothesis testing problem between two general sequences of states
$\{\rho_n\}_{n\in\bN}$ and $\{\sigma_n\}_{n\in\bN}$, where $\rho_n$ and $\sigma_n$ are states on the same Hilbert space,
but states with different indices are not assumed to be related in any particular way. This general approach originates from the information spectrum method \cite{Han:book}.
Expressions for the strong converse exponent in the classical \cite{Han:sc} and in the quantum case \cite{NH}
were obtained in the information spectrum framework in terms of the exponents of the Neyman-Pearson tests.
Here we impose extra conditions on the two sequences in order to obtain the more explicit expression
\eqref{main result} for the strong converse exponent, which can be further evaluated for various classes of correlated
states with physical relevance. We also note that while in the information spectrum method all exponents are evalutated in terms of the asymptotics of the error probabilities along the Neyman-Pearson tests, here we also consider variants of the Neyman-Pearson tests, e.g., in Sections \ref{sec:differentiable} and \ref{sec:quasifree sc rate}.

In the beginning of Section \ref{sec:sc}, we start with two general observations.
First, a straightforward generalization of the results of \cite{ON,Nagaoka2}, utilizing the monotonicity of the R\'enyi divergences under measurements, yields $sc(r|\rho\|\sigma)\ge H_r^*(\rho\|\sigma)$ (with a slightly more general definition of $\ol D_{\alpha}(\rho\|\sigma)$); this is the content of Lemma \ref{lemma:sc lower bound}.
Next, we show in Theorems \ref{thm:exponent gen}--\ref{thm:exponent} that
if a parametric family of sequences of tests exists with certain properties then the inequality can be reversed, and
$sc(r|\rho\|\sigma)= H_r^*(\rho\|\sigma)$ holds. We then consider two general cases in which the existence of such tests can be verified.

In Section \ref{sec:differentiable} we show that if a certain classicalization of the problem, corresponding to a suitably chosen auxiliary sequence $\{\what\sigma_n\}_{n\in\bN}$, yields the same asymptotic R\'enyi divergences
as the original problem, which also satisfy some regularity condition (differentiability in $\alpha$) then
the Neyman-Pearson tests of the classicalized problem can be used to fulfill the conditions of Theorem \ref{thm:exponent} and
obtain \eqref{main result}. We show in Section \ref{sec:quasifree sc rate} that the hypothesis testing problem for
gauge-invariant fermionic quasi-free states satisfies these conditions; moreover, the asymptotic R\'enyi divergences can be
explicitly expressed in terms of the symbols of the two states.
The standard classical example for which these conditions hold is the hypothesis testing of irreducible Markov chains; an expression for the strong converse exponent of this problem has been determined in \cite{NKMarkov}. We explain in Appendix
\ref{sec:classical} how the results of \cite{NKMarkov} can be obtained from our general considerations.

In Section \ref{sec:fact} we consider a special class of states on an infinite spin chain, satisfying a
certain factorization property. We evaluate the exponents of the type I success- and the type II error probabilities corresponding to the standard Neyman-Pearson tests, and show that these tests
satisfy the conditions of Theorem \ref{thm:exponent}, from which we can conclude that \eqref{main result} holds.
Once these exponents are available, \eqref{main result} can also be obtained from the general information spectrum formula for the strong converse exponent, given in \cite[Theorem 4]{NH}, as we explain in Remark \ref{rem:NH}.
The factorization property is known to hold for the Gibbs states of finite-range translation-invariant Hamiltonians \cite{HMO}, as we discuss in Section \ref{sec:Gibbs}.

We remark that neither the conditions of Section \ref{sec:differentiable} nor the factorization property of
Section \ref{sec:fact} need to hold in the classical case, i.e., for commuting $\rho_n$ and $\sigma_n$. Hence, our results may have non-trivial applications even for classical hypothesis testing.

Background material on classical large deviations, fermionic quasi-free states
and Szeg\H o's theorem is given in the Appendices.

\section{Preliminaries}
\label{sec:prel}

For a finite-dimensional Hilbert space $\hil$, let $\B(\hil)$ denote the set of linear operators on $\hil$, let $\B(\hil)_+$ denote the set of non-zero positive semidefinite operators,
$\B(\hil)_{++}$ the set of positive definite operators on $\hil$,
and let $\S(\hil):=\{\rho\in\B(\hil)_+:\,\Tr\rho=1\}$ be the set of
\ki{density operators} or \ki{states}.

We call a map $\map:\,\B(\hil)\to\B(\kil)$ a \ki{positive map} if $\map$ is linear, and
$\map\bz\B(\hil)_+\jz\subseteq\B(\kil)_+$.
For every finite-dimensional Hilbert space $\hil$, $(X,Y)\mapsto \Tr X^*Y$ is an inner product on $\B(\hil)$ (called the Hilbert-Schmidt inner product),
and for a linear map $\map:\,\B(\hil)\to\B(\kil)$, we denote its adjoint with respect to the Hilbert-Schmidt inner products on $\B(\hil)$ and $\B(\kil)$ by $\map^*$.
It is easy to see that $\map$ is positive if and only if $\map^*$ is positive, and $\map$ is trace-preserving if and only if $\map^*$ is unital.

For a self-adjoint operator $X$ on a finite-dimensional Hilbert space, let $\{X\ge 0\}$ denote the spectral projection of $X$ corresponding to the non-negative eigenvalues of $X$. The spectral projections $\{X>0\}$, $\{X\le 0\}$ and $\{X<0\}$ are defined similarly. The positive part $X_+$ of $X$ is defined as $X_+:=X\{X>0\}$. It is easy to see that
\begin{align}\label{positive part}
\Tr X_+=\max\{\Tr XT:\,0\le T\le I\},
\end{align}
a fact that we will use without further notice.

\begin{lemma}\label{lem:pos-tr-mono}
For any Hermitian operators $A,B\in\B(\hil)$,
\begin{align}\label{lem:pos-tr-mono1}
A\ge B\ds\imp\ds\Tr A_+ \ge \Tr B_+.
\end{align}
For any Hermitian operator $A$ and any positive trace preserving map $\F$, we have
\begin{align}\label{lem:pos-tr-mono2}
\Tr A_+ \ge \Tr \F(A)_+
\end{align}
\end{lemma}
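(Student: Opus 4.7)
The plan is to reduce both statements to the variational characterization \eqref{positive part}, namely $\Tr X_+=\max\{\Tr XT:0\le T\le I\}$, by choosing clever test operators $T$.

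For \eqref{lem:pos-tr-mono1}, I would pick the spectral projection $T_B:=\{B>0\}$ as the witness. Then by definition $\Tr B_+=\Tr BT_B$. Since $A-B\ge 0$ and $T_B\ge 0$, the product has non-negative trace (one can write $\Tr(A-B)T_B=\Tr T_B^{1/2}(A-B)T_B^{1/2}\ge 0$), so $\Tr AT_B\ge \Tr BT_B=\Tr B_+$. Since $0\le T_B\le I$, the variational characterization \eqref{positive part} gives $\Tr AT_B\le \Tr A_+$, and chaining the two inequalities yields \eqref{lem:pos-tr-mono1}.

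For \eqref{lem:pos-tr-mono2}, the key step is to pass from $\mathcal{F}(A)$ back to $A$ via the Hilbert--Schmidt adjoint $\mathcal{F}^*$. Because $\mathcal{F}$ is positive, $\mathcal{F}^*$ is also positive; because $\mathcal{F}$ is trace-preserving, $\mathcal{F}^*$ is unital, i.e.\ $\mathcal{F}^*(I)=I$. I would choose $T:=\{\mathcal{F}(A)>0\}$, so that $\Tr \mathcal{F}(A)_+=\Tr \mathcal{F}(A)T=\Tr A\,\mathcal{F}^*(T)$. Positivity and unitality of $\mathcal{F}^*$ give $0\le \mathcal{F}^*(T)\le I$, so \eqref{positive part} applied to $A$ with the test $\mathcal{F}^*(T)$ yields $\Tr A\,\mathcal{F}^*(T)\le \Tr A_+$, which is \eqref{lem:pos-tr-mono2}.

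No serious obstacle is expected: both parts are one-line consequences of \eqref{positive part} once the right test operator is chosen. The only minor point worth being careful about is justifying that the adjoint of a positive trace-preserving map is positive and unital, which is standard and follows directly from the defining identity $\Tr \mathcal{F}(X)Y=\Tr X\mathcal{F}^*(Y)$ applied to $Y=I$ and to arbitrary positive $X,Y$.
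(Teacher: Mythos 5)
Your proof is correct and matches the paper's argument essentially line for line: both parts use the same witness operators $\{B>0\}$ and $\F^*(\{\F(A)>0\})$, together with the variational characterization \eqref{positive part} of $\Tr X_+$. The only stylistic difference is that you spell out why $\Tr(A-B)T_B\ge 0$ and why $0\le\F^*(T)\le I$, which the paper simply records.
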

\begin{proof}
The first assertion follows from
\begin{align*}
\Tr B_+ = \Tr B\{B>0\} \le \Tr A\{B>0\} \le\Tr A\{A>0\},
\end{align*}
where the first inequality is due to the assumption $B\le A$, and the second is due to \eqref{positive part}.
The second assertion follows by
\begin{align*}
\Tr \F(A)_+ &= \Tr \F(A)\{\F(A)>0\} = \Tr A \F^*(\{\F(A)>0\})\\
& \le \Tr A \{A>0\},
\end{align*}
where $\F^*$ is the Hilbert-Schmidt adjoint of $\F$, and we used that $\F^*$ is positivity preserving and unital.
\end{proof}

We will follow the convention that powers of a positive semidefinite operator $A$ are taken on its support only, and defined to
be $0$ on the orthocomplement of its support. That is, if $\lambda_1,\ldots,\lambda_r$ are the strictly positive eigenvalues of
$A$ with corresponding spectral projections $P_1,\ldots,P_r$, then
$A^t:=\sum_{i=1}^r\lambda_i^t P_i$.
In particular, $A^0$ denotes the projection onto the support of $A$,
and $A^0\le B^0$ is a shorthand for $\supp A\subseteq \supp B$ when $A,B\in\B(\hil)_+$.
Similarly, we define $\log A$ to be $0$ on the orthocomplement of $A$.

For an operator $\sigma\in\B(\hil)$, we denote by $v(\sigma)$ the number of different eigenvalues of $\sigma$.
If $\sigma$ is self-adjoint with spectral projections $P_1,\ldots,P_r$, then the \ki{pinching} by $\sigma$
is the map $\E_{\sigma}:\,\B(\hil)\to\B(\hil)$, defined as
\begin{align*}
\E_{\sigma}:\,X\mapsto \sum_{i=1}^r P_i X P_i,\ds\ds\ds X\in\B(\hil).
\end{align*}
The \ki{pinching inequality} \cite{H:pinching,H:text} tells that if $X$ is positive semidefinite then
\begin{align}\label{pinching inequality}
X\le v(\sigma)\E_{\sigma}(X).
\end{align}

\section{R\'enyi divergences and related quantities}
\label{sec:Renyi}

\subsection{Definitions and general properties}
\label{sec:Renyi def}

For non-zero positive semidefinite operators $\rho,\sigma$ on a finite-dimensional Hilbert space $\hil$, let
\begin{align*}
&Q_t\old(\rho\|\sigma):=\Tr\rho^t\sigma^{1-t},
&\pfo{t}{\rho}{\sigma}:=\log Q_t\old(\rho\|\sigma)
\end{align*}
for any $t\in\bR$, and
\begin{align*}
&Q_t\nw(\rho\|\sigma):=\Tr\bz \rho^{\half}\sigma^{\frac{1-t}{t}}\rho^{\half}\jz^t,
&\pfn{t}{\rho}{\sigma}:=\log Q_t\nw(\rho\|\sigma),
\end{align*}
for any $t>0$.
In the following, let $\xx$ denote either $\neww$ or $\oldd$, where $\oldd$ stands for the empty string. That is,
$Q_t\x(\rho\|\sigma)$ with $\xx=\oldd$ is simply $Q_t(\rho\|\sigma)$.
When $\rho$ and $\sigma$ commute, the expressions with and without $\neww$ coincide, and therefore we omit
$\neww$ in the notation.

The \ki{R\'enyi $\alpha$-divergences} of $\rho$ w.r.t.~$\sigma$ for parameter $\alpha\in[0,+\infty)\setminus\{1\}$ are defined as
\begin{align*}
&D_{\alpha}\x(\rho\|\sigma):=\lim_{\ep\searrow 0}\frac{1}{\alpha-1}\pfx{\alpha}{\rho}{\sigma+\ep I}
-\frac{1}{\alpha-1}\log\Tr\rho\\
&=
\begin{cases}
\frac{1}{\alpha-1}\pfx{\alpha}{\rho}{\sigma}-\frac{1}{\alpha-1}\log\Tr\rho,&\rho^0\le\sigma^0\\
 & \text{\s or \s}\alpha\in(0,1),\\
+\infty,&\text{otherwise}.
\end{cases}
\end{align*}
The equality above is straightforward to verify for $\xx=\oldd$, and it follows from Lemma 12 in \cite{Renyi_new} for
$\xx=\neww$.
For $\alpha=1$ we define
\begin{align}\label{relentropy def}
\rsr{\rho}{\sigma}{1}&:=\lim_{\alpha\to 1}\rsrx{\rho}{\sigma}{\alpha}\nonumber\\
&=
D(\rho\|\sigma)\nonumber\\
&:=
\begin{cases}
\frac{1}{\Tr\rho}\left[\Tr\rho\log\rho-\Tr\rho\log\sigma\right],&\rho^0\le\sigma^0,\\
+\infty,&\text{otherwise}.
\end{cases}
\end{align}
Note that $D(\rho\|\sigma)$ is the \ki{relative entropy} \cite{Umegaki,Wehrl,OP} of $\rho$ w.r.t.~$\sigma$. The above limit relation
for $D_{\alpha}$ is straightforward to verify, and for $D_{\alpha}^{*}$ it has been shown by different methods in
\cite{Renyi_new,WWY,Mosonyi}. We also provide a proof for it below.
It has been shown in \cite[theorem 5]{Renyi_new} that
\begin{align}\label{dmax}
\rsrn{\rho}{\sigma}{\infty}&:=\lim_{\alpha\to+\infty}\rsrn{\rho}{\sigma}{\alpha}\\
&=
D_{\max}(\rho\|\sigma):=
\inf\{\gamma:\,\rho\le e^{\gamma}\sigma\},
\end{align}
where $D_{\max}(\rho\|\sigma)$ is the \ki{max-relative entropy} of $\rho$ w.r.t.~$\sigma$ \cite{RennerPhD,Datta}.

\begin{lemma}\label{lemma:derivative}
Let $\rho$ and $\sigma$ be such that $\rho\sigma\ne 0$. Then $\pfx{.}{\rho}{\sigma}$ is differentiable on $(0,+\infty)$, and
\begin{align}
\frac{d}{dt}\pfo{t}{\rho}{\sigma}&=
\frac{1}{Q_t\old(\rho\|\sigma)}\Tr\rho^t\sigma^{1-t}(\log\rho-\log\sigma),\label{old psi derivative}\\
\frac{d}{dt}\pfn{t}{\rho}{\sigma}&=\frac{1}{Q_t\nw(\rho\|\sigma)}
\left[\Tr\bz \rho^{\half}\sigma^{\frac{1-t}{t}}\rho^{\half}\jz^t\log \bz\rho^{\half}\sigma^{\frac{1-t}{t}}\rho^{\half}\jz\right.\nonumber\\
&\left.-\frac{1}{t}\Tr\bz \rho^{\half}\sigma^{\frac{1-t}{t}}\rho^{\half}\jz^{t-1}
\rho^{\half}\sigma^{\frac{1-t}{t}}(\log\sigma)\rho^{\half}\right].\label{new psi derivative}
\end{align}
Moreover, \eqref{old psi derivative} is valid for all $t\in\bR$.
\end{lemma}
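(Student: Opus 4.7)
The plan is to treat the two formulas by direct matrix calculus, reducing each to the differentiation of a trace of an analytic operator-valued function. The hypothesis $\rho\sigma\neq 0$ ensures that $Q_t\old(\rho\|\sigma)>0$ for every $t\in\bR$ and $Q_t\nw(\rho\|\sigma)>0$ for every $t>0$ (a short spectral computation in either case: $\Tr PQ=0$ for positive $P,Q$ implies $PQ=0$), so the chain rule reduces each claim to the evaluation of $\frac{d}{dt}Q_t\old$ and $\frac{d}{dt}Q_t\nw$, with all logarithms understood as defined on the corresponding supports.

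For \eqref{old psi derivative}, both $t\mapsto\rho^t=e^{t\log\rho}$ and $t\mapsto\sigma^{1-t}=e^{(1-t)\log\sigma}$ are entire matrix-valued functions with derivatives $\rho^t\log\rho$ and $-\sigma^{1-t}\log\sigma$ respectively, because $\log\rho$ commutes with $\rho^t$ (and similarly for $\sigma$). The Leibniz rule together with cyclicity of the trace then yields
\[
\frac{d}{dt}\Tr\rho^t\sigma^{1-t}=\Tr\rho^t\sigma^{1-t}(\log\rho-\log\sigma),
\]
and dividing by $Q_t\old$ gives \eqref{old psi derivative} for every $t\in\bR$.

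For \eqref{new psi derivative}, put $A_t:=\rho^{\half}\sigma^{(1-t)/t}\rho^{\half}$. Factoring as $A_t=\bigl(\sigma^{(1-t)/(2t)}\rho^{\half}\bigr)^{*}\bigl(\sigma^{(1-t)/(2t)}\rho^{\half}\bigr)$ shows $\ker A_t=\{v:\sigma^0\rho^{\half}v=0\}$, which does not depend on $t>0$; on this fixed support $A_t$ is positive definite and analytic in $t$. Writing $Q_t\nw=\Tr e^{t\log A_t}$ and using the trace differentiation formula $\frac{d}{dt}\Tr e^{X(t)}=\Tr X'(t)e^{X(t)}$ (which follows from the Duhamel formula by cyclicity of the trace) with $X(t)=t\log A_t$, I obtain
\[
\frac{d}{dt}Q_t\nw=\Tr A_t^t\log A_t+t\,\Tr\bigl(A_t^t\tfrac{d}{dt}\log A_t\bigr).
\]
The key technical step is the identity $\Tr A_t^t\tfrac{d}{dt}\log A_t=\Tr A_t^{t-1}A_t'$, which I plan to derive from the integral representation $\log A=\int_0^\infty\bigl[(1+s)\inv-(A+s)\inv\bigr]ds$: differentiating gives $\frac{d}{dt}\log A_t=\int_0^\infty(A_t+s)\inv A_t'(A_t+s)\inv ds$, and after cyclicity (using that $A_t^t$ commutes with $(A_t+s)\inv$) the scalar evaluation $\int_0^\infty\lambda^t(\lambda+s)^{-2}ds=\lambda^{t-1}$ reduces the right-hand side to $\Tr A_t^{t-1}A_t'$. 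Substituting $A_t'=-t^{-2}\rho^{\half}\sigma^{(1-t)/t}(\log\sigma)\rho^{\half}$ and dividing by $Q_t\nw$ produces \eqref{new psi derivative}.

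The main obstacle is the non-commutativity of $A_t$ and $A_t'$, which prevents the naive chain rule $\frac{d}{dt}A_t^t=tA_t^{t-1}A_t'+A_t^t\log A_t$ from being valid operator-wise; it becomes valid only after taking the trace, and the crucial ingredient for verifying this is the trace identity derived above. Everything else is bookkeeping on the common support of the operators $A_t$, $t>0$.
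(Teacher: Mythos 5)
Your proof is correct, and it takes a genuinely different route from the paper's for the $\psi^*$ case. The paper structures the argument as a composite derivative: it introduces $g(t)=t\oplus\rho^{1/2}\sigma^{(1-t)/t}\rho^{1/2}$ and $f(t\oplus X)=\Tr X^t$, computes $dg$ directly, and then obtains $df(t,X):(s,Y)\mapsto s\Tr X^t\log X + t\Tr X^{t-1}Y$ by \emph{citing} Bhatia's Theorem V.3.3 for the operator Fr\'echet derivative of $X\mapsto\Tr X^t$, after which the chain rule finishes the computation. You instead write $Q_t^*=\Tr e^{t\log A_t}$, apply the Duhamel/trace identity $\frac{d}{dt}\Tr e^{X(t)}=\Tr X'(t)e^{X(t)}$, and then \emph{prove} from scratch the key trace identity $\Tr A_t^t\frac{d}{dt}\log A_t=\Tr A_t^{t-1}A_t'$ via the integral representation $\log A=\int_0^\infty[(1+s)^{-1}-(A+s)^{-1}]\,ds$, cyclicity, and the scalar evaluation $\int_0^\infty\lambda^t(\lambda+s)^{-2}\,ds=\lambda^{t-1}$. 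The underlying mathematics is the same perturbation fact, but your version is more self-contained (it re-derives the Bhatia ingredient in exactly the special case needed and makes transparent why the naive operator chain rule $\frac{d}{dt}A_t^t=tA_t^{t-1}A_t'+A_t^t\log A_t$ need not hold yet its trace does), while the paper's version is shorter by offloading that step to a reference. Both handle the $t$-independent support in the same way, and both handle \eqref{old psi derivative} by the elementary Leibniz-plus-commutativity computation.
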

\begin{proof}
The derivative of $\psi\old$ is straightforward to compute.
To see the derivative of $\psi\nw$,
first note that for every $t>0$, there exist $c_t,d_t>0$ such that $c_t\rho^{\half}\sigma^{0}\rho^{\half}\le
\rho^{\half}\sigma^{\frac{1-t}{t}}\rho^{\half}\le d_t\rho^{\half}\sigma^{0}\rho^{\half}$, and thus
$(\rho^{\half}\sigma^{\frac{1-t}{t}}\rho^{\half})^0=(\rho^{\half}\sigma^{0}\rho^{\half})^0=:P$, independently of $t$.
Hence, $\rho^{\half}\sigma^{\frac{1-t}{t}}\rho^{\half}$ can be seen as an invertible positive operator on $\ran P$.
Define
\begin{align*}
&g:\,\bR_{++}\to\bR_{++}\oplus\B(\ran P)_{++},\ds\ds\ds
g(t):= t\oplus\rho^{\half}\sigma^{\frac{1-t}{t}}\rho^{\half},\ds\ds\ds\text{and}\\
&f:\,\bR_{++}\oplus\B(\ran P)_{++}\to\bR,\ds\ds\ds\ds\s
f(t\oplus X):=\Tr X^t,
\end{align*}
where $\bR_{++}=(0,+\infty)$,
so that $\Tr\bz \rho^{\half}\sigma^{\frac{1-t}{t}}\rho^{\half}\jz^t=f(g(t))$. Then $g$ has derivative
\begin{equation*}
\frac{d}{dt}g(t)=1\oplus\bz-\frac{1}{t^2}\jz\rho^{\half}\sigma^{\frac{1-t}{t}}(\log\sigma)\rho^{\half},
\end{equation*}
and the derivative of $f$ at a point $(t,X)$ is the linear map
\begin{equation}\label{f der}
df(t,X):\,(s,Y)\mapsto s\Tr X^t\log X+t\Tr X^{t-1}Y.
\end{equation}
The second term in \eqref{f der} can be obtained e.g.~from Theorem V.3.3 in \cite{Bhatia}.
Using the chain rule for derivatives, we get that
\begin{align*}
&\frac{d}{dt}\Tr\bz \rho^{\half}\sigma^{\frac{1-t}{t}}\rho^{\half}\jz^t
=
\frac{d}{dt}f(g(t))\\
&=
\Tr\bz \rho^{\half}\sigma^{\frac{1-t}{t}}\rho^{\half}\jz^t
\log\bz \rho^{\half}\sigma^{\frac{1-t}{t}}\rho^{\half}\jz\\
&\ds-\frac{1}{t}\Tr\bz \rho^{\half}\sigma^{\frac{1-t}{t}}\rho^{\half}\jz^{t-1}
\rho^{\half}\sigma^{\frac{1-t}{t}}(\log\sigma)\rho^{\half},
\end{align*}
which yields \eqref{new psi derivative}.
\end{proof}

\begin{cor}
The limit relation in \eqref{relentropy def} holds.
\end{cor}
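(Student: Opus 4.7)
The plan is to reduce to an application of L'H\^opital's rule using the differentiability result of Lemma \ref{lemma:derivative}, with a case split according to whether $\rho^0\le\sigma^0$ holds.

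First I would treat the case $\rho^0\le\sigma^0$. Here the definition of $\rsrx{\rho}{\sigma}{\alpha}$ collapses to $\rsrx{\rho}{\sigma}{\alpha}=(\pfx{\alpha}{\rho}{\sigma}-\log\Tr\rho)/(\alpha-1)$ for every $\alpha\in(0,+\infty)\setminus\{1\}$. The key observation is that the numerator vanishes at $\alpha=1$: indeed, $Q_1\old(\rho\|\sigma)=\Tr\rho\sigma^0=\Tr\rho$ and $Q_1\nw(\rho\|\sigma)=\Tr(\rho^{1/2}\sigma^0\rho^{1/2})=\Tr\rho$, both because $\sigma^0$ acts as the identity on $\supp\rho$ under the standing assumption. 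Since Lemma \ref{lemma:derivative} supplies the derivative of $\pfx{\cdot}{\rho}{\sigma}$ at $1$, L'H\^opital's rule yields $\lim_{\alpha\to 1}\rsrx{\rho}{\sigma}{\alpha}=\frac{d}{dt}\pfx{t}{\rho}{\sigma}\big|_{t=1}$. The next step is to plug $t=1$ into the explicit derivative formulas from the lemma: formula \eqref{old psi derivative} collapses immediately to $\Tr\rho(\log\rho-\log\sigma)/\Tr\rho$, which is $D(\rho\|\sigma)$ by \eqref{relentropy def}; for the $\{*\}$ variant one substitutes into \eqref{new psi derivative}, using $\sigma^{(1-t)/t}\big|_{t=1}=\sigma^0$ together with $\rho^{1/2}\sigma^0\rho^{1/2}=\rho$ and $\sigma^0\log\sigma=\log\sigma$ (both consequences of $\rho^0\le\sigma^0$ and the support convention), so that the first trace reduces to $\Tr\rho\log\rho$ and the second to $\Tr\rho\log\sigma$, again giving $D(\rho\|\sigma)$.

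The remaining case $\rho^0\not\le\sigma^0$, where $D(\rho\|\sigma)=+\infty$, is handled by inspection. For $\alpha>1$ the definition itself gives $\rsrx{\rho}{\sigma}{\alpha}=+\infty$, so the right limit is automatic. For $\alpha\in(0,1)$, continuity of functional calculus gives $\pfx{\alpha}{\rho}{\sigma}\to\log\Tr\rho\sigma^0$ as $\alpha\to 1^-$, and this limit is strictly less than $\log\Tr\rho$ precisely because $\rho^0\not\le\sigma^0$; dividing a nonzero negative quantity by $\alpha-1\to 0^-$ sends $\rsrx{\rho}{\sigma}{\alpha}$ to $+\infty$, matching $D(\rho\|\sigma)$.

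The only step requiring real care is the $\{*\}$ computation at $t=1$: the support conventions for $\sigma^{(1-t)/t}$ and $\log\sigma$ must be applied consistently so that the second term of \eqref{new psi derivative} collapses to $\Tr\rho\log\sigma$ rather than something involving an extra $\sigma^0$ factor. Beyond this bookkeeping, the whole argument is L'H\^opital plus direct substitution into the derivative formulas just established.
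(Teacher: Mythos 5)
Your proposal is correct and follows essentially the same route as the paper's proof: split on whether $\rho^0\le\sigma^0$; in the supported case use the vanishing of $\psi^{\{v\}}(\alpha|\rho\|\sigma)-\log\Tr\rho$ at $\alpha=1$ plus the differentiability from Lemma \ref{lemma:derivative}; in the unsupported case use that $\psi^{\{v\}}(1|\rho\|\sigma)=\log\Tr\rho\sigma^0<\log\Tr\rho$ to force the quotient to $+\infty$ from the left. Your write-up is actually slightly more explicit than the paper's in one place: the paper asserts that the derivative at $\alpha=1$ equals $D(\rho\|\sigma)$ citing the lemma, whereas you carry out the substitution into \eqref{old psi derivative} and \eqref{new psi derivative} at $t=1$ and check the support conventions ($\rho^{1/2}\sigma^0\rho^{1/2}=\rho$, $\sigma^0\log\sigma=\log\sigma$) that make the second trace collapse to $\Tr\rho\log\sigma$; this is a step the paper leaves implicit, and your bookkeeping there is accurate. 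The argument in the unsupported case for $\alpha<1$ is a minor rearrangement of the paper's (they split off $\frac{\log\Tr\rho\sigma^0-\log\Tr\rho}{\alpha-1}$ as a separate term; you work directly with the quotient), but both are the same idea. No gaps.
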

\begin{proof}
Assume first that $\rho^0\le\sigma^0$. Then we have $D_{\alpha}\x(\rho\|\sigma)=\frac{\psi\x(\alpha|\rho\|\sigma)-\psi\x(1|\rho\|\sigma)}{\alpha-1}$, and hence
\begin{align*}
\lim_{\alpha\to 1}D_{\alpha}\x(\rho\|\sigma)=\frac{d}{d\alpha}\psi\x(\alpha|\rho\|\sigma)\Big\vert_{\alpha=1}=
D(\rho\|\sigma),
\end{align*}
where the last equality is due to lemma \ref{lemma:derivative}. Assume next that $\rho^0\nleq\sigma^0$.
Then $D_{\alpha}\x(\rho\|\sigma)=+\infty=D(\rho\|\sigma)$ for every $\alpha>1$. On the other hand,
$\psi\x(1|\rho\|\sigma)=\log\Tr\rho\sigma^0<\log\Tr\rho$, and hence
for $\alpha<1$ we have
\begin{align*}
D_{\alpha}\x(\rho\|\sigma)=&
\frac{\psi\x(\alpha|\rho\|\sigma)-\psi\x(1|\rho\|\sigma)}{\alpha-1}\\
&+\frac{\log\Tr\rho\sigma^0-\log\Tr\rho}{\alpha-1}.
\end{align*}
The first term has a finite limit, again due to lemma \ref{lemma:derivative}, while the second term goes to
$+\infty=D(\rho\|\sigma)$ as $\alpha\nearrow 1$.
\end{proof}

It is easy to see (by simply computing its second derivative) that for fixed $\rho,\sigma\in\B(\hil)_+$, the function
$\alpha\mapsto \psi\old(\alpha|\rho\|\sigma)$ is convex on $\bR$. We have the following:
\begin{lemma}
Let $\rho,\sigma\in\B(\hil)_+$ be such that $\rho^0\le\sigma^0$. For every $\alpha>1$,
\begin{align}\label{pinching limit}
\pfn{\alpha}{\rho}{\sigma}=\lim_{n\to+\infty}\frac{1}{n}\psi(\alpha|\E_{\sigma^{\otimes n}}\rho^{\otimes n}\|\sigma^{\otimes n}),
\end{align}
where $\E_{\sigma^{\otimes n}}$ is the pinching by $\sigma^{\otimes n}$.
In particular, $\alpha\mapsto \psi\nw(\alpha|\rho\|\sigma)$ is convex.
\end{lemma}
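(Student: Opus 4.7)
The plan is to sandwich $\psi^*(\alpha|\rho^{\otimes n}\|\sigma^{\otimes n})$ between a quantity and a polynomially-corrected version of the same quantity, both of which involve the pinched state $\E_{\sigma_n}(\rho_n)$. Throughout I write $\rho_n:=\rho^{\otimes n}$, $\sigma_n:=\sigma^{\otimes n}$, and $v_n:=v(\sigma^{\otimes n})$. A key preliminary observation is that $\E_{\sigma_n}(\rho_n)$ commutes with $\sigma_n$, so on the pinched side the two R\'enyi quantities coincide: $\psi\nw(\alpha|\E_{\sigma_n}(\rho_n)\|\sigma_n)=\psi(\alpha|\E_{\sigma_n}(\rho_n)\|\sigma_n)$. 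A second preliminary is that $Q_\alpha\nw$ is multiplicative on tensor products (a one-line check from the definition), so $\psi\nw(\alpha|\rho_n\|\sigma_n)=n\,\psi\nw(\alpha|\rho\|\sigma)$.

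For the upper bound I would apply the data-processing inequality for $D_\alpha\nw$ with $\alpha>1$ under the CPTP pinching channel $\E_{\sigma_n}$ (which satisfies $\E_{\sigma_n}(\sigma_n)=\sigma_n$); this gives $Q_\alpha\nw(\E_{\sigma_n}(\rho_n)\|\sigma_n)\le Q_\alpha\nw(\rho_n\|\sigma_n)$, hence after dividing by $n$,
\begin{align*}
\tfrac{1}{n}\psi(\alpha|\E_{\sigma_n}(\rho_n)\|\sigma_n)\le\psi\nw(\alpha|\rho\|\sigma).
\end{align*}

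For the lower bound I would invoke the pinching inequality $\rho_n\le v_n\E_{\sigma_n}(\rho_n)$ together with two elementary properties of $Q_\alpha\nw$ in its first argument for $\alpha>1$: positive $\alpha$-homogeneity $Q_\alpha\nw(c\rho\|\sigma)=c^\alpha Q_\alpha\nw(\rho\|\sigma)$, and monotonicity under the PSD order, which follows from $0\le A\le B\Rightarrow\lambda_k(A)\le\lambda_k(B)$ by Weyl and hence $\Tr A^\alpha\le\Tr B^\alpha$ applied to $A=\sigma_n^{(1-\alpha)/(2\alpha)}\rho_n\sigma_n^{(1-\alpha)/(2\alpha)}$. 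These give
\begin{align*}
Q_\alpha\nw(\rho_n\|\sigma_n)\le v_n^\alpha\, Q_\alpha\nw(\E_{\sigma_n}(\rho_n)\|\sigma_n)=v_n^\alpha\, Q_\alpha(\E_{\sigma_n}(\rho_n)\|\sigma_n).
\end{align*}
Taking logs, dividing by $n$, and observing that $v_n$ is at most the number of multisets of size $n$ from the eigenvalues of $\sigma$ — hence polynomial in $n$, so $(\log v_n)/n\to 0$ — yields $\psi\nw(\alpha|\rho\|\sigma)\le\liminf_n\tfrac{1}{n}\psi(\alpha|\E_{\sigma_n}(\rho_n)\|\sigma_n)$. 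Combined with the upper bound, \eqref{pinching limit} follows.

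For convexity, the pinched quantity $\psi(\alpha|\E_{\sigma_n}(\rho_n)\|\sigma_n)$ can be written in a joint spectral decomposition as $\log\sum_k r_k^\alpha s_k^{1-\alpha}=\log\sum_k s_k\,e^{\alpha\log(r_k/s_k)}$, the logarithm of a sum of positive exponentials in $\alpha$, which is convex. The pointwise limit of the convex functions $\tfrac{1}{n}\psi(\alpha|\E_{\sigma_n}(\rho_n)\|\sigma_n)$ is therefore convex, giving convexity of $\alpha\mapsto\psi\nw(\alpha|\rho\|\sigma)$. The only nontrivial ingredient in the whole argument — and the place where I would expect the main obstacle if one wanted a fully self-contained proof — is the data-processing inequality for $D_\alpha\nw$ at $\alpha>1$ used in the upper bound; the rest reduces to the pinching inequality plus elementary properties of Schatten norms and log-sum-exp.
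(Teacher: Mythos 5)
Your proof is correct and follows essentially the same route as the source the paper cites for this lemma: the paper simply invokes Theorem III.7 of~\cite{MO}, and the sandwich inequality you derive, namely
$\frac{1}{n}\psi\nw(\alpha|\rho_n\|\sigma_n)-\frac{\alpha}{n}\log v(\sigma_n)\le\frac{1}{n}\psi(\alpha|\what\rho_n\|\sigma_n)\le\frac{1}{n}\psi\nw(\alpha|\rho_n\|\sigma_n)$,
is exactly what the paper itself records later as \eqref{pinching inequality for psi}, attributed to the proof of that theorem. Your two ingredients (data processing for $D_\alpha\nw$ under the pinching channel for the upper bound; the pinching inequality together with $\alpha$-homogeneity and operator-order monotonicity of $A\mapsto\Tr\bigl(\sigma^{(1-\alpha)/(2\alpha)}A\sigma^{(1-\alpha)/(2\alpha)}\bigr)^\alpha$ for the lower bound, plus polynomial growth of $v(\sigma^{\otimes n})$) are precisely the standard ones, and the convexity argument via log-sum-exp plus pointwise limits matches the paper's.
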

\begin{proof}
The limit relation \eqref{pinching limit} is due to Theorem III.7 in \cite{MO}. By \eqref{pinching limit},
$\psi\nw(.|\rho\|\sigma)$ is the pointwise limit of convex functions, and hence itself is convex
on $(1,+\infty)$.
\end{proof}

\begin{cor}
Let $\rho,\sigma\in\B(\hil)_+$ be such that $\rho^0\le\sigma^0$.
Then $\alpha\mapsto D_{\alpha}\x(\rho\|\sigma)$ is monotone increasing on $(1,+\infty)$,
and
\begin{align}\label{D1 is inf}
D_1(\rho\|\sigma)=\inf_{\alpha>1}D_{\alpha}\x(\rho\|\sigma).
\end{align}
\end{cor}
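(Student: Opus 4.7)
The plan is to combine the convexity of $\pfx{\cdot}{\rho}{\sigma}$, already recorded in the preceding lemma and remark, with the observation that $\rho^0\le\sigma^0$ forces $\pfx{1}{\rho}{\sigma}=\log\Tr\rho\sigma^0=\log\Tr\rho$. These together give, for every $\alpha>1$,
\[
\rsrx{\rho}{\sigma}{\alpha}=\frac{\pfx{\alpha}{\rho}{\sigma}-\pfx{1}{\rho}{\sigma}}{\alpha-1},
\]
so that $\alpha\mapsto\rsrx{\rho}{\sigma}{\alpha}$ is precisely the difference quotient of $\pfx{\cdot}{\rho}{\sigma}$ with base point $1$.

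The convexity input is in hand for both values of $\{v\}$: in the case $\{v\}=\{\s\}$ the paragraph preceding the statement notes convexity of $\pfo{\cdot}{\rho}{\sigma}$ on all of $\bR$ via a direct second-derivative computation, while in the case $\{v\}=\{*\}$ the preceding lemma yields convexity of $\pfn{\cdot}{\rho}{\sigma}$ on $(1,+\infty)$ as a pointwise limit of convex functions.

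Monotonicity of $\alpha\mapsto\rsrx{\rho}{\sigma}{\alpha}$ on $(1,+\infty)$ would then follow from the standard three-chord property of a convex function $f$, namely that $\beta\mapsto (f(\beta)-f(a))/(\beta-a)$ is non-decreasing for $\beta>a$. In the sandwiched case the base point $1$ sits on the boundary of the convexity domain, which I would handle by applying the three-chord inequality at an auxiliary interior anchor $\alpha'\in(1,\alpha)$ and then sending $\alpha'\searrow 1$, using continuity of $\pfn{\cdot}{\rho}{\sigma}$ at $1$ (which is implicit in the argument for the preceding corollary).

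Identity \eqref{D1 is inf} then drops out immediately: by the monotonicity just established,
\[
\inf_{\alpha>1}\rsrx{\rho}{\sigma}{\alpha}=\lim_{\alpha\searrow 1}\rsrx{\rho}{\sigma}{\alpha},
\]
and the previous corollary identifies this one-sided limit with $D(\rho\|\sigma)=\rsr{\rho}{\sigma}{1}$. I do not anticipate any serious obstacle here; the only mildly delicate point is the boundary-continuity step in the sandwiched case, which is essentially already handled in the argument for the previous corollary.
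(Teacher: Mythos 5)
Your proposal is correct and follows essentially the same route as the paper: rewrite $D_{\alpha}\x$ as the difference quotient of $\psi\x(\cdot|\rho\|\sigma)$ anchored at $1$, invoke convexity of $\psi\x$, and identify the infimum with the one-sided limit at $1$ via the preceding corollary. The extra care you take over the boundary point $\alpha=1$ in the sandwiched case (working with an interior anchor $\alpha'$ and sending $\alpha'\searrow 1$, using the continuity of $\psi\nw$ supplied by its differentiability on $(0,+\infty)$) is a legitimate refinement of a step the paper leaves implicit.
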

\begin{proof}
 Note that $\rho^0\le\sigma^0$ implies that
$D_{\alpha}\x(\rho\|\sigma)=\frac{\psi\x(\alpha|\rho\|\sigma)-\psi\x(1|\rho\|\sigma)}{\alpha-1}$, and hence
convexity of $\psi\x(\alpha|\rho\|\sigma)$ in $\alpha$ yields that
$\alpha\mapsto D_{\alpha}\x(\rho\|\sigma)$ is a monotone increasing function of $\alpha$; in particular,
\eqref{D1 is inf} holds.
\end{proof}
\medskip

The $D_{\alpha}$ R\'enyi divergences are known to be monotone non-increasing under completely positive trace-preserving maps for
$\alpha\in[0,2]$. Monotonicity for the $D_{\alpha}^*$ R\'enyi divergences has been proved for different ranges of $\alpha$ and with different methods in
\cite{Beigi,Hiai,FL,MO,Renyi_new,WWY}:
\begin{lemma}\label{lemma:monotonicity}
Let $\rho,\sigma\in\B(\hil)_+$ and $\map:\,\B(\hil)\to\B(\kil)$ be a linear completely positive trace-preserving map. Then
\begin{align*}
D_{\alpha}\nw(\map(\rho)\|\map(\sigma))\le D_{\alpha}\nw(\rho\|\sigma),\ds\ds\ds\alpha\in[1/2,+\infty].
\end{align*}
\end{lemma}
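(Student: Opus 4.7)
The plan is to establish monotonicity over three ranges of $\alpha$ using different tools: the endpoint $\alpha=\infty$ directly, the range $\alpha\in(1,\infty)$ via the pinching limit \eqref{pinching limit} just proved combined with monotonicity of the classical-type $D_\alpha$, and the range $\alpha\in[1/2,1]$ via complex interpolation.

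For $\alpha=\infty$, the identification of $D_\infty\nw$ with the max-relative entropy $\inf\{\gamma:\rho\le e^\gamma\sigma\}$ makes the claim immediate: positivity of $\map$ preserves the defining inequality, so any admissible $\gamma$ remains admissible after applying $\map$. For $\alpha\in(1,\infty)$, the key idea is to use \eqref{pinching limit} to reduce $D_\alpha\nw$ on the original states to the commuting (classical) quantity $D_\alpha$ applied to pinched states, and then exploit monotonicity of the latter. Concretely, let $T_n:=\E_{\map(\sigma)^{\otimes n}}\circ\map^{\otimes n}$. This CPTP map's image commutes with the spectral projections of $\map(\sigma)^{\otimes n}$. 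Applying the pinching inequality \eqref{pinching inequality} to $\rho^{\otimes n}$ and then the positive maps $\map^{\otimes n}$ and $\E_{\map(\sigma)^{\otimes n}}$, one obtains
\begin{align*}
\E_{\map(\sigma)^{\otimes n}}(\map(\rho)^{\otimes n})\;\le\;v(\sigma^{\otimes n})\,T_n\bz\E_{\sigma^{\otimes n}}(\rho^{\otimes n})\jz,
\end{align*}
with both sides block-diagonal in the eigenbasis of $\map(\sigma)^{\otimes n}$. A block-wise application of Weyl's monotonicity $A\le B\Rightarrow\Tr A^\alpha\le\Tr B^\alpha$ converts this operator bound into an inequality on the corresponding $\psi$'s, at scale $\alpha\log v(\sigma^{\otimes n})$, which is polynomial in $n$ and therefore vanishes after dividing by $n$. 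Combined with the standard monotonicity of $D_\alpha$ under $T_n$ (valid for $\alpha\in[0,2]$) and the limit \eqref{pinching limit} on both sides as $n\to\infty$, this yields the claim for $\alpha\in(1,2]$. The extension to $\alpha>2$ proceeds by complex interpolation between $\alpha=2$ and $\alpha=\infty$.

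For $\alpha=1$ the statement reduces to Uhlmann's data-processing for the Umegaki relative entropy, which is also recovered by letting $\alpha\searrow 1$ in the previous step and invoking \eqref{relentropy def}. The range $\alpha\in[1/2,1)$ is the main obstacle, since neither \eqref{pinching limit} nor the classical monotonicity of $D_\alpha$ is directly applicable. The cleanest route is the Frank-Lieb / Beigi complex interpolation: rewrite $Q_\alpha\nw(\rho\|\sigma)^{1/\alpha}=\norm{\sigma^{(1-\alpha)/(2\alpha)}\,\rho\,\sigma^{(1-\alpha)/(2\alpha)}}_\alpha$ as a weighted Schatten norm, extend to an analytic operator-valued function on a complex strip, and apply the Hadamard three-lines theorem with boundary data at $\alpha=1/2$ (where $Q_{1/2}\nw$ equals the fidelity, monotone by Uhlmann's theorem) and $\alpha=1$ (Umegaki). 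Verifying that the analytic extension is admissible and bounded---the standard noncommutative $L_p$-theoretic check---is the chief source of technical delicacy in the whole proof.
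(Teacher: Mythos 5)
The paper does not actually prove this lemma: the surrounding text simply records that monotonicity of $D_{\alpha}\nw$ under CPTP maps has been established, for various ranges of $\alpha$ and by various methods, in the cited references (Beigi, Frank--Lieb, Hiai, M\"uller-Lennert et al., Wilde--Winter--Yang, Mosonyi--Ogawa). So there is no in-paper argument to compare against; you are reconstructing from scratch. As a reconstruction your outline is reasonable in broad strokes, and the $\alpha=\infty$ and $\alpha=1$ endpoints are handled correctly, but two steps have genuine gaps.

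First, for $\alpha\in(1,\infty)$ the pinching construction in fact already reduces the claim to a \emph{classical} data-processing inequality, valid for all $\alpha\ge 0$: applying $\E_{\sigma^{\otimes n}}$ before and $\E_{\map(\sigma)^{\otimes n}}$ after $\map^{\otimes n}$, the two input operators mutually commute and the two output operators mutually commute, so in simultaneous eigenbases the action of $\E_{\map(\sigma)^{\otimes n}}\circ\map^{\otimes n}$ on the relevant eigenvalue vectors is a single stochastic matrix. This removes the need for the quantum $D_{\alpha}$-DPI (restricted to $\alpha\in[0,2]$) and makes the argument work uniformly on $(1,\infty)$. As written, however, you invoke the quantum DPI for $\alpha\le 2$ and then claim to reach $\alpha>2$ by ``complex interpolation between $\alpha=2$ and $\alpha=\infty$.'' That step fails: there is no interpolation scheme that turns a DPI holding at two isolated parameters into a DPI on the interval between them, since $\alpha\mapsto D_{\alpha}\nw(\map(\rho)\|\map(\sigma))-D_{\alpha}\nw(\rho\|\sigma)$ has no usable convexity or concavity. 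Either use the classical DPI observation above, or run the Beigi three-lines argument over the whole of $[1,\infty)$ at once. Second, for $\alpha\in[1/2,1)$ you name the right tool (Frank--Lieb/Beigi interpolation with boundary data from Uhlmann's theorem at $\alpha=1/2$ and Umegaki DPI at $\alpha=1$), but you flag the admissibility and boundedness checks as the main difficulty and then do not carry them out. That interval is exactly where pinching/classical reductions do not help and where the real content of the lemma lies, so without that argument written out the proposal does not establish the lemma; it only identifies which of the cited proofs would have to be reproduced.
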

\medskip

The following lemma is straightforward to verify:
\begin{lemma}
Let $\rho,\sigma\in\B(\hil)_+$ and $\lambda,\kappa>0$. For every $\alpha\in[0,+\infty]$,
\begin{align}
\psi\x(\alpha|\lambda\rho\|\kappa\sigma)&=\alpha\log\lambda+(1-\alpha)\log\kappa+\psi\x(\alpha|\rho\|\sigma),
\label{psi scaling}\\
D_{\alpha}\x(\lambda\rho\|\kappa\sigma)&=\log\lambda-\log\kappa+D_{\alpha}\x(\rho\|\sigma).
\label{D scaling}
\end{align}
\end{lemma}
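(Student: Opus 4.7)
The proof is essentially a direct computation. My plan is to first establish the $\psi$-identity by pulling the positive scalars out of the trace expressions defining $Q_\alpha\x$, and then derive the $D_\alpha$-identity by substituting into the definition of the R\'enyi divergence.

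For the $\psi\old$ case, since $\lambda,\kappa>0$ I can use homogeneity of fractional powers of positive semidefinite operators: $(\lambda\rho)^\alpha=\lambda^\alpha\rho^\alpha$ and $(\kappa\sigma)^{1-\alpha}=\kappa^{1-\alpha}\sigma^{1-\alpha}$ (with the convention that powers are taken on the support, which is unchanged by scaling since $(\lambda\rho)^0=\rho^0$). Hence $Q_\alpha\old(\lambda\rho\|\kappa\sigma)=\lambda^\alpha\kappa^{1-\alpha}Q_\alpha\old(\rho\|\sigma)$, and taking logarithms gives \eqref{psi scaling}. For the $\psi\nw$ case, the same homogeneity yields
\begin{equation*}
(\lambda\rho)^{1/2}(\kappa\sigma)^{(1-\alpha)/\alpha}(\lambda\rho)^{1/2}
=\lambda\,\kappa^{(1-\alpha)/\alpha}\,\rho^{1/2}\sigma^{(1-\alpha)/\alpha}\rho^{1/2},
\end{equation*}
and raising to the $\alpha$-th power pulls out a factor of $\lambda^\alpha\kappa^{1-\alpha}$, yielding the same scaling relation.

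For the $D_\alpha$-identity when $\alpha\in(0,+\infty)\setminus\{1\}$, I would substitute \eqref{psi scaling} and the obvious relation $\log\Tr(\lambda\rho)=\log\lambda+\log\Tr\rho$ into the definition; the $\log\lambda$-terms combine as $\tfrac{1}{\alpha-1}[\alpha\log\lambda-\log\lambda]=\log\lambda$, and the $\log\kappa$-term becomes $\tfrac{1}{\alpha-1}(1-\alpha)\log\kappa=-\log\kappa$, giving \eqref{D scaling}. The boundary cases require a little extra care. For $\alpha=1$, I plug $\lambda\rho$ and $\kappa\sigma$ directly into \eqref{relentropy def} and use $\log(\lambda\rho)=(\log\lambda)\rho^0+\log\rho$ and $\log(\kappa\sigma)=(\log\kappa)\sigma^0+\log\sigma$ (combined with the support condition $\rho^0\le\sigma^0$, which is unaffected by scaling) to pull the scalars through. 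For $\alpha=+\infty$, I invoke the max-relative entropy characterization: $\lambda\rho\le e^\gamma\kappa\sigma\iff\rho\le e^{\gamma+\log\kappa-\log\lambda}\sigma$, so the infimum shifts by exactly $\log\lambda-\log\kappa$. The case $\alpha=0$ is handled separately (or by continuity from $\alpha\in(0,1)$).

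There is no real obstacle here: the lemma is asserted to be straightforward precisely because everything reduces to homogeneity of $A\mapsto A^t$ for $A\in\B(\hil)_+$, $t\in\bR$, and linearity of the trace. The only mild bookkeeping is checking that (i) support conditions such as $\rho^0\le\sigma^0$ and infinities propagate consistently under scaling (they do, since $(\lambda\rho)^0=\rho^0$), and (ii) the boundary values $\alpha\in\{1,+\infty\}$ are consistent with the $\alpha\to 1$ and $\alpha\to+\infty$ limits, so that the identity holds on the full interval $[0,+\infty]$.
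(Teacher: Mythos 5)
Your proof is correct. The paper states this lemma without proof, calling it ``straightforward to verify,'' and your argument — pulling the scalars $\lambda,\kappa$ through via homogeneity of fractional powers of positive semidefinite operators, then bookkeeping the $\tfrac{1}{\alpha-1}$ factors and handling the edge cases $\alpha\in\{0,1,+\infty\}$ separately — is precisely the intended direct computation.
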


\subsection{Asymptotic R\'enyi quantities}
\label{sec:asymp}

For every $n\in\bN$, let $\hil_n$ be a finite-dimensional Hilbert space, let
$\rho_n\in\S(\hil_n)$ be a state, and $\sigma_n\in\B(\hil_n)_+$ be a positive semidefinite operator.
These will play the role of the null- and the alternative hypotheses in the later sections. Note that we don't require the
$\sigma_n$ to be normalized; the reason is that this more general case can be treated the same way as the normalized case, and it turns out to be useful e.g., in state compression
(see, e.g., \cite{Mosonyi}).
We will use the notation
\begin{align*}
\rho:=\{\rho_n\}_{n\in\bN}\ds\ds\ds\text{and}\ds\ds\ds\sigma:=\{\sigma_n\}_{n\in\bN}.
\end{align*}
We will assume throughout that
\begin{align*}
\supp\rho_n\subseteq\supp\sigma_n,\ds n\in\bN,\ds\ds\ds \text{which we abbreviate as }
\end{align*}
\begin{align*}
\supp\rho\subseteq\supp\sigma.
\end{align*}

We will also consider an additional sequence $\what\sigma=\{\what\sigma_n\}_{n\in\bN}$ such that
\begin{align}\label{sigma hat}
\sigma_n\le\what \sigma_n,\ds\ds\ds\text{and}\ds\ds\ds (\sigma_n)^0=(\what\sigma_n)^0,\ds\ds\ds n\in\bN.
\end{align}
This sequence will be specified later, depending on the concrete problem. Given the sequence $\what\sigma$, we introduce
\begin{align*}
\what\rho_n:=\E_{\what\sigma_n}(\rho_n),
\end{align*}
the pinching of $\rho_n$ by $\what\sigma_n$. By the pinching inequality
\eqref{pinching inequality}, we have
\begin{equation}\label{pinching n}
\rho_n\le v(\what\sigma_n)\what\rho_n,
\end{equation}
where $v(\what\sigma_n)$ stands for the number of different eigenvalues of $\what\sigma_n$.

\begin{rem}
The application of the pinching technique in Quantum Information Theory goes back to \cite{HP} and \cite{H:pinching}.
The pinching of $\rho_n$ with $\what\sigma_n:=\sigma_n$ was the main tool to obtain the first expression for the strong converse exponent of i.i.d.~binary quantum state discrimination in \cite{H:text}, as well as for the expression in terms of the sandwiched
R\'enyi divergences in \cite{MO}. The key property used in these applications is the pinching inequality
\eqref{pinching n}, and that $\lim_{n\to+\infty}\frac{1}{n}\log v(\sigma_n)=0$ in the i.i.d.~case, where
$\sigma_n=\sigma_1^{\otimes n}$. This latter property, however, need not hold in the non-i.i.d.~case, and therefore pinching
with $\sigma_n$ may not be a viable way to extend results from the i.i.d. to the non-i.i.d.~setting. To circumvent this problem,
a clever way of grouping together the eigenvalues of $\sigma_n$ was introduced in \cite{TH}, which we review below in Example
\ref{ex:TH}. This results in a new reference operator $\what\sigma_n$ satisfying \eqref{sigma hat}, and with the additional
property that $\lim_{n\to+\infty}\frac{1}{n}\log v(\what\sigma_n)=0$ under much weaker conditions than i.i.d.
We will use this trick to obtain the strong converse exponent for gauge-invariant quasi-free states in Section \ref{sec:quasifree sc rate}. We are grateful to an anonymous referee for drawing our attention to this technique.
\end{rem}

For $\alpha>1$, we define the asymptotic R\'enyi quantities
\begin{align}
&\ol\psi(\alpha|\rho\|\sigma):=\limsup_{n\to+\infty}\frac{1}{n}\psi\nw(\alpha|\rho_n\|\sigma_n),\nonumber\\
&\ol D_{\alpha}(\rho\|\sigma):=\frac{1}{\alpha-1}\ol\psi(\alpha|\rho\|\sigma)=
\limsup_{n\to+\infty}\frac{1}{n}D_{\alpha}\nw(\rho_n\|\sigma_n),\label{mean psis}\\
&\what\psi(\alpha|\rho\|\sigma):=\limsup_{n\to+\infty}\frac{1}{n}\pf{\alpha}{\what\rho_n}{\what\sigma_n},\nonumber\\
&\what D_{\alpha}(\rho\|\sigma):=\frac{1}{\alpha-1}\what\psi(\alpha|\rho\|\sigma)
=
\limsup_{n\to+\infty}\frac{1}{n} D_{\alpha}(\what\rho_n\|\what\sigma_n).\label{mean psis2}
\end{align}

\begin{lemma}
$\ol\psi(\alpha|\rho\|\sigma)$ and
$\what\psi(\alpha|\rho\|\sigma)$ are convex in $\alpha$ on $(1,+\infty)$,
the functions
\begin{align}
&\alpha\mapsto\ol D_{\alpha}(\rho\|\sigma)\ds\ds\text{and}\ds\ds
\alpha\mapsto\what D_{\alpha}(\rho\|\sigma)\nonumber\\
&\text{are monotone increasing},\label{monotonicity in alpha}
\end{align}
and hence
\begin{align}
\ol D_{1}(\rho\|\sigma)&:=\inf_{\alpha>1}\ol D_{\alpha}(\rho\|\sigma)
=
\lim_{\alpha\searrow 1}\ol D_{\alpha}(\rho\|\sigma),\\
\ol D_{\infty}(\rho\|\sigma)&=\sup_{\alpha>1}\ol D_{\alpha}(\rho\|\sigma)
=
\lim_{\alpha\to+\infty}\ol D_{\alpha}(\rho\|\sigma),\label{mean dmax}\\
\what D_{1}(\rho\|\sigma)&:=\inf_{\alpha>1}\what D_{\alpha}(\rho\|\sigma)
=
\lim_{\alpha\searrow 1}\ol D_{\alpha}(\rho\|\sigma),\\
\what D_{\infty}(\rho\|\sigma)&=\sup_{\alpha>1}\what D_{\alpha}(\rho\|\sigma)
=
\lim_{\alpha\to+\infty}\what D_{\alpha}(\rho\|\sigma).
\end{align}
\end{lemma}
\begin{proof}
Both $\ol\psi$ and $\what\psi$ are the limsup of convex functions, and hence are convex.
Note that $\supp\rho\subseteq\supp\sigma$ implies $\ol\psi(1|\rho\|\sigma)=0$, and hence
$\ol D_{\alpha}(\rho\|\sigma)=\frac{1}{\alpha-1}\ol\psi(\alpha|\rho\|\sigma)=\frac{1}{\alpha-1}(\ol\psi(\alpha|\rho\|\sigma)-\ol\psi(1|\rho\|\sigma))$.
From this  \eqref{monotonicity in alpha} follows for $\ol D_{\alpha}(\rho\|\sigma)$, and the proof for
$\what D_{\alpha}(\rho\|\sigma)$ goes exactly the same way.
\end{proof}

We say that $\ol\psi(\alpha|\rho\|\sigma)$ (resp., $\what\psi(\alpha|\rho\|\sigma)$) exists as a limit, if the
corresponding limsup in \eqref{mean psis}--\eqref{mean psis2} can be replaced with a limit.
We have the following:
\begin{lemma}\label{lemma:ol equals what}
For every $\alpha>1$ and $n\in\bN$,
\begin{align}
&\frac{1}{n}\psi\nw(\alpha|\rho_n\|\sigma_n)-\frac{\alpha}{n}\log v(\what\sigma_n)+\frac{1-\alpha}{n}D_{\max}(\what\sigma_n\|\sigma_n)\nonumber\\
&\ds\le
\frac{1}{n}\psi(\alpha|\what\rho_n\|\what\sigma_n)\le
\frac{1}{n}\psi\nw(\alpha|\rho_n\|\sigma_n).\label{pinching inequality for psi}
\end{align}
In particular, if
$\lim_{n\to+\infty}\frac{1}{n}\log v(\what\sigma_n)=0=\lim_{n\to+\infty}\frac{1}{n}D_{\max}(\what\sigma_n\|\sigma_n)$
then
\begin{equation*}
\ol\psi(\alpha|\rho\|\sigma)=\what\psi(\alpha|\rho\|\sigma),
\end{equation*}
and $\ol\psi(\alpha|\rho\|\sigma)$ exists as a limit if and only if
$\what\psi(\alpha|\rho\|\sigma)$ exists as a limit.
\end{lemma}
\begin{proof}
By the monotonicity of $D_{\alpha}^*$ under pinching \cite[Proposition 14]{Renyi_new}, we have
$\psi(\alpha|\what\rho_n\|\what\sigma_n)\le \psi\nw(\alpha|\rho_n\|\what\sigma_n)$. For $\alpha>1$, the function $x\mapsto x^{\frac{1-\alpha}{\alpha}}$ is operator monotone decreasing on $(0,+\infty)$, and $X\mapsto\Tr X^{\alpha}$ is monotone increasing on positive semidefinite operators (with respect to the positive semidefinite ordering), and hence
$\sigma_n\le\what\sigma_n$ yields
$\psi\nw(\alpha|\rho_n\|\what\sigma_n)\le \psi\nw(\alpha|\rho_n\|\sigma_n)$. This proves the second inequality in
\eqref{pinching inequality for psi}.

According to the proof of \cite[Theorem 3.7]{MO}),
\begin{align*}
\psi(\alpha|\what\rho_n\|\what\sigma_n)\ge\psi\nw(\alpha|\rho_n\|\what\sigma_n) -\alpha\log v(\what\sigma_n).
\end{align*}
By \eqref{dmax}, $\what\sigma_n\le c_n\sigma_n$, where $c_n:=e^{D_{\max}(\what\sigma_n\|\sigma_n)}$. By the same monotonicity argument as above,
\begin{align*}
\psi\nw(\alpha|\rho_n\|\what\sigma_n)&\ge
\psi\nw(\alpha|\rho_n\|c_n\sigma_n)\\
 &=
\psi\nw(\alpha|\rho_n\|\sigma_n)+(1-\alpha)D_{\max}(\what\sigma_n\|\sigma_n),
\end{align*}
where the last identity is due to \eqref{psi scaling}. This proves the first inequality in
\eqref{pinching inequality for psi}.

The rest of the Lemma is obvious from \eqref{pinching inequality for psi}.
\end{proof}

The following construction is from the poof of \cite[Theorem 14]{TH}, which we review here in detail for readers' convenience:
\begin{ex}\label{ex:TH}
Let $\lambda_{1,n},\ldots,\lambda_{r_n,n}$ be the different non-zero eigenvalues of $\sigma_n$ with corresponding spectral projections $P_{1,n},\ldots,P_{r_n,n}$, and let $\lambda_{\max}(\sigma_n):=\lambda_{1,n}$,
$\lambda_{\min}(\sigma_n):=\lambda_{r_n,n}$.
Let 
$q_n:=\lambda_{\max}(\sigma_n)/\lambda_{\min}(\sigma_n)$, and $l_n:=\lfloor\log q_n\rfloor+1$. Then for every
$i$, there exists a unique $k_i\in\{-1,\ldots,l_n-1\}$ such that
$\lambda_{\min}(\sigma_n)q_n^{\frac{k_i}{l_n}}<\lambda_{i,n}\le\lambda_{\min}(\sigma_n)q_n^{\frac{k_i+1}{l_n}}$. Define
$\what\lambda_{i,n}:=\lambda_{\min}(\sigma_n)q_n^{\frac{k_i+1}{l_n}},\,i=1,\ldots,r_n$, and
$\what\sigma_n:=\sum_{i=1}^{r_n}\what\lambda_{i,n}P_{i,n}$.
Then
\begin{align}
&v(\what\sigma_n)\le \left\lfloor\log \frac{\lambda_{\max}(\sigma_n)}{\lambda_{\min}(\sigma_n)}\right\rfloor+1
\le\lfloor-\log\lambda_{\min}(\sigma_n)\rfloor+1\label{TH0}\\
&\text{and}\ds\ds\ds
\sigma_n\le\what\sigma_n\le q_n^{\frac{1}{l_n}}\sigma_n,\label{TH1}
\end{align}
and by the last inequality,
\begin{align}\label{TH2}
D_{\max}(\what\sigma_n\|\sigma_n)\le\frac{1}{l_n}\log q_n\le 1.
\end{align}
\end{ex}

Following \cite{TH}, we introduce the notation
\begin{align*}
\theta(\sigma_n):=\min\left\{v(\sigma_n),\left\lfloor\log \frac{\lambda_{\max}(\sigma_n)}{\lambda_{\min}(\sigma_n)}\right\rfloor+1\right\}.
\end{align*}

\begin{cor}\label{cor:TH}
Consider one of the following scenarios:
\begin{enumerate}
\item
$\lim_{n\to+\infty}\frac{1}{n}\log v(\sigma_n)=0$, and we define $\what\sigma_n:=\sigma_n,\,n\in\bN$.

\item\label{ii}
$\lim_{n\to+\infty}\frac{1}{n}\log\bz\left\lfloor\log \frac{\lambda_{\max}(\sigma_n)}{\lambda_{\min}(\sigma_n)}\right\rfloor+1\jz=0$, and we define
$\{\what\sigma_n\}_{n\in\bN}$ as in Example \ref{ex:TH}.

\item\label{iii}
$\lim_{n\to+\infty}\frac{1}{n}\log\theta(\sigma_n)=0$, and
for every $n\in\bN$, if $\theta(\sigma_n)=v(\sigma_n)$ then let $\what\sigma_n:=\sigma_n$, otherwise let
$\what\sigma_n$ be the state constructed in Example \ref{ex:TH}.
\end{enumerate}
Then
\begin{align}\label{psi equality}
\ol\psi(\alpha|\rho\|\sigma)=\what\psi(\alpha|\rho\|\sigma).
\end{align}
\end{cor}
\begin{proof}
Immediate from \eqref{pinching inequality for psi}--\eqref{TH2}.
\end{proof}

\begin{cor}\label{cor:TH2}
Assume that there exist constants $c,d>0$ and $\nu\in\bR$ such that
$c^{n^{\nu}}(\sigma_n)^0\le\sigma_n\le d^{n^{\nu}}(\sigma_n)^0$ for all large enough $n$. Then the sequence $\{\what\sigma_n\}_{n\in\bN}$ constructed in Example \ref{ex:TH} satisfies
\eqref{psi equality}.
\end{cor}
\begin{proof}
Immediate from \ref{ii} of Corollary \ref{cor:TH}.
\end{proof}

\begin{rem}
A similar condition as in Corollary \ref{cor:TH2} was applied to the exponent of secret key generation in
\cite[Section VI C]{Hayashi-ld}.
\end{rem}

\subsection{Generalized Legendre transforms}
\label{sec:LF}

In this section we consider the extension of $H_r^*$ in \eqref{thm:iid sc rate} to general correlated states.
First we present a more general definition, corresponding to a general convex function $f$, and we will obtain the desired quantity by specializing to
$f=\ol\psi$.

For what follows, let $f:\,[1,+\infty)\to\bR_+$ be a non-negative convex function such that $f(1)=0$, and let
\begin{align}
f\lf(a)&:=\sup_{t> 1}\{a(t-1)-f(t)\},\ds\ds\ds a\in\bR,\label{general phi def}\\
H_{f,r}^*&:=\sup_{t> 1}\frac{r(t-1)-f(t)}{t}\nonumber\\
&=\sup_{0<s<1}\left\{sr-(1-s)f\bz\frac{1}{1-s}\jz\right\},\ds\ds\ds r\in\bR.\label{general H def}
\end{align}
Note that $a\mapsto f\lf(a)+a$ is the \ki{Legendre-Fenchel transform} (or \ki{polar transform}) of $f$ on $(1,+\infty)$, and
$r\mapsto H_{f,r}^*$ is the Legendre-Fenchel transform of $s\mapsto (1-s)f\bz\frac{1}{1-s}\jz$ on $(0,1)$.
Let
\begin{align}
D_{f,1}&:=a_{f,\min}:=\inf_{1<t<+\infty}\frac{f(t)}{t-1}=\lim_{t\searrow 1}\frac{f(t)}{t-1}:=\derright f(1),\label{amin}\\
&\text{and}\ds\ds\ds\ds
r_{f,\min}:=f\lf(a_{f,\min})+a_{f,\min},\\
D_{f,\infty}&:=a_{f,\max}:=\sup_{1<t<+\infty}\frac{f(t)}{t-1}=\lim_{t\to+\infty}\frac{f(t)}{t-1},\label{amax}\\
&\text{and}\ds\ds\ds\ds
r_{f,\max}:=f\lf(a_{f,\max})+a_{f,\max}.
\end{align}
Note that $a_{f,\min}$ is always finite, whereas $a_{f,\max}$ can be $+\infty$, in which case also $r_{f,\max}=+\infty$.

\begin{lemma}\label{lemma:converse Hoeffding repr}
For any $a\in\bR$,
\begin{align}\label{phi pos}
f\lf(a)\ge 0,\ds\ds\ds\text{and}\ds\ds\ds f\lf(a)>0\ds\iff\ds a>\derright f(1).
\end{align}
For any $r\in [0,+\infty)$, we have
\begin{align}
0\le H_{f,r}^*
=\begin{cases}
r-a_r=f\lf(a_r), & r< f\lf(a_{f,\max})+a_{f,\max}, \\
r - a_{f,\max}, & r\ge f\lf(a_{f,\max})+a_{f,\max},
\end{cases}
\label{Hr expressions}
\end{align}
where $a_r$ is the unique solution of $r-a_r=f\lf(a_r)$. Moreover,
\begin{align}\label{Hr positivity}
0< H_{f,r}^*\ds\iff \ds r>\derright f(1)=a_{f,\min}=r_{f,\min}.
\end{align}
\end{lemma}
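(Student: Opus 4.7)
My plan is to prove both assertions by elementary convex analysis, leaning repeatedly on the duality inequality $a(t-1)-f(t)\le f\lf(a)$ (immediate from the definition of $f\lf$) and on the standard fact that, since $f$ is convex with $f(1)=0$, the slope function $t\mapsto f(t)/(t-1)$ is monotone non-decreasing on $(1,+\infty)$ with infimum $\derright f(1)=a_{f,\min}$ and supremum $a_{f,\max}$. The assertion \eqref{phi pos} follows right away: letting $t\searrow 1$ in the defining supremum of $f\lf(a)$ yields $f\lf(a)\ge 0$, and factoring $a(t-1)-f(t)=(t-1)(a-f(t)/(t-1))$ together with the slope monotonicity shows that this quantity is strictly positive for some $t>1$ iff $a>a_{f,\min}$.

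For \eqref{Hr expressions} in the regime $r<r_{f,\max}$, I would first check that $\phi:a\mapsto a+f\lf(a)$ is continuous and strictly increasing on the effective domain $\{f\lf<+\infty\}$, with $\phi(a_{f,\min})=a_{f,\min}$, $\phi(a)\to -\infty$ as $a\to -\infty$, and $\sup\phi=r_{f,\max}$; hence the equation $r-a=f\lf(a)$ has a unique solution $a_r$. Rearranging the duality inequality into
\[
\frac{r(t-1)-f(t)}{t}\le (r-a)\bz 1-\frac{1}{t}\jz+\frac{f\lf(a)}{t}
\]
and substituting $a=a_r$ (so that $r-a_r=f\lf(a_r)$) collapses the right-hand side to the constant $r-a_r$, giving $H_{f,r}^*\le r-a_r=f\lf(a_r)$. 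The reverse bound follows by picking, for each $\ep>0$, some $t_\ep>1$ with $a_r(t_\ep-1)-f(t_\ep)>f\lf(a_r)-\ep$, writing $r(t_\ep-1)-f(t_\ep)=(r-a_r)(t_\ep-1)+[a_r(t_\ep-1)-f(t_\ep)]$, dividing by $t_\ep$, and sending $\ep\searrow 0$. Non-negativity $H_{f,r}^*\ge 0$ is visible by taking $t\searrow 1$ in the definition of $H_{f,r}^*$.

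In the regime $r\ge r_{f,\max}$, the same displayed inequality applied with $a=a_{f,\max}$ gives
\[
\frac{r(t-1)-f(t)}{t}\le (r-a_{f,\max})-\frac{r-r_{f,\max}}{t}\le r-a_{f,\max},
\]
while the matching lower bound comes from the elementary limit $\lim_{t\to\infty}(r(t-1)-f(t))/t=r-a_{f,\max}$, which uses $f(t)/t=(f(t)/(t-1))\cdot((t-1)/t)\to a_{f,\max}$. Finally, \eqref{Hr positivity} reduces to a case check: $r\le a_{f,\min}$ forces $a_r=r$ with $f\lf(a_r)=0$, hence $H_{f,r}^*=0$; $a_{f,\min}<r<r_{f,\max}$ gives $a_r>a_{f,\min}$ and so $f\lf(a_r)>0$ by \eqref{phi pos}; and $r\ge r_{f,\max}$ yields $r-a_{f,\max}>0$ whenever $a_{f,\min}<a_{f,\max}$. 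The main delicate point I anticipate is this last sub-case: one must observe that $f\lf(a_{f,\max})=0$ would make $f\lf$ vanish on all of $[a_{f,\min},a_{f,\max}]$ and hence, by Fenchel biduality, force $f$ to be affine on $[1,+\infty)$, collapsing $a_{f,\min}=a_{f,\max}$ so that the boundary case is trivial.
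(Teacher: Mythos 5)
Your argument is correct and follows essentially the same route as the paper's proof: both rest on the duality inequality $a(t-1)-f(t)\le f\lf(a)$, specialize at $a=a_r$ satisfying $r-a_r=f\lf(a_r)$ to collapse the Legendre expression (you use a clean $\ep$-supremum argument where the paper asserts attainment of the defining supremum), and handle $r\ge r_{f,\max}$ via the same $t\to\infty$ limit. One small remark: the appeal to Fenchel biduality in your final sub-case is superfluous, since $f\lf(a_{f,\max})=0\imp a_{f,\max}\le a_{f,\min}$ is exactly the content of \eqref{phi pos}.
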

\begin{proof}
Non-negativity of $f\lf(a)$ and $H^*_r$ are obvious from their definitions \eqref{general phi def} and \eqref{general H def} and the fact that
$f(1)=0$. Due to the convexity of $f$, $t\mapsto\frac{f(t)}{t-1}=\frac{f(t)-f(1)}{t-1}$
is monotone increasing, proving the equalities of the limits and the infimum/supremum in \eqref{amin} and \eqref{amax}.
There exists a $t> 1$ such that
$a(t-1)-f(t)>0$ if and only if $a>\inf_{t>1}\frac{f(t)}{t-1}=\derright f(1)$, proving
\eqref{phi pos}. In particular, $f\lf(a_{f,\min})=0$, and hence $a_{f,\min}=r_{f,\min}$, proving the last identity in \eqref{Hr positivity}.
Non-negativity of $H_r^*$ and the rest of \eqref{Hr positivity} follow the same way as \eqref{phi pos}.
Hence, we have to prove the identities in \eqref{Hr expressions}.

First, we consider the case $0\le r<r_{f,\max}$.
Note that $a\mapsto f\lf(a)+a$ is strictly increasing and continuous on $[0,a_{f,\max})$, and hence
for every $0\le r<r_{f,\max}$ there exists a unique $a_r$ such that $r=f\lf(a_r)+a_r$.
By definition,
\begin{equation*}
f\lf(a_r)\ge a_r (t-1)-f(t)=(t-1)(r-f\lf(a_r))-f(t),\ds\ds\ds t\ge 1,
\end{equation*}
and equality holds in the above inequality for some $t_r\in[1,+\infty)$.
Rearranging, we get
\begin{equation*}
f\lf(a_r)\ge\frac{r(t-1)-f(t)}{t},\ds\ds\ds t\ge 1,
\end{equation*}
with equality for $t_r$, and hence
\begin{equation*}
f\lf(a_r)=\max_{t\ge 1}\frac{r(t-1)-f(t)}{t}=H^*_r.
\end{equation*}

Next, assume that $r\ge r_{f,\max}$.
Since $r<+\infty$ by assumption, we only have to consider the case $r_{f,\max}<+\infty$, which implies $a_{f,\max}<+\infty$.
Note that
\begin{align}
\lim_{t\to +\infty}\frac{r(t-1)-f(t)}{t} =
r-\lim_{t\to +\infty}\frac{t-1}{t}\frac{f(t)}{t-1}
 =r-a_{f,\max}.
\end{align}
Hence it is enough to show that
\begin{align}
\frac{r(t-1)-f(t)}{t}\le r-a_{f,\max}
\end{align}
for every $t> 1$.
Note that $r\ge r_{f,\max}=f\lf(a_{f,\max})+a_{f,\max}$ implies
\begin{align}
r-a_{f,\max}\ge f\lf(a_{f,\max})\ge a_{f,\max}(t-1)-f(t)
\end{align}
for every $t> 1$, from which we obtain
\begin{align}
\frac{r+f(t)}{t}\ge a_{f,\max}.
\end{align}
Thus we have
\begin{align}
r-a_{f,\max}
\ge r-\frac{r+f(t)}{t}
=\frac{r(t-1)-f(t)}{t},
\end{align}
and hence $H^*_r=r-a_{f,\max}$, as required.
\end{proof}
\medskip

We will mainly be interested in the above general Legendre-Fenchel transforms when $f=\ol\psi$ or $f=\what\psi$. In these special cases we have
\begin{align}
\phi(a)&:=\ol\psi\lf(a)=\sup_{\alpha>1}\{a(\alpha-1)-\ol\psi(\alpha|\rho\|\sigma)\}\label{phi def}\\
\what\phi(a)&:=\what\psi\lf(a)=\sup_{\alpha>1}\{a(\alpha-1)-\what\psi(\alpha|\rho\|\sigma)\},\ds\ds\ds a\in\bR,\label{phi hat def}
\end{align}
and for every $r\ge 0$,
\begin{align*}
H_r^*(\rho\|\sigma):=H_{\ol\psi,r}^*&=
\sup_{\alpha> 1}\frac{r(\alpha-1)-\ol\psi(\alpha|\rho\|\sigma)}{\alpha}\\
&=
\sup_{\alpha>1}\frac{\alpha-1}{\alpha}\left[ r-\ol D_{\alpha}(\rho\|\sigma)\right],\\
\what H_r^*(\rho\|\sigma):=H_{\what\psi,r}^*&=
\sup_{\alpha> 1}\frac{r(\alpha-1)-\what\psi(\alpha|\rho\|\sigma)}{\alpha}\\
&=
\sup_{\alpha>1}\frac{\alpha-1}{\alpha}\left[ r-\what D_{\alpha}(\rho\|\sigma)\right].
\end{align*}
Note that
\begin{align*}
D_{\ol\psi,1}=\ol D_1(\rho\|\sigma),\ds\ds
D_{\ol\psi,\infty}=\ol D_{\infty}(\rho\|\sigma),\\
D_{\what\psi,1}=\what D_1(\rho\|\sigma),\ds\ds
D_{\what\psi,\infty}=\what D_{\infty}(\rho\|\sigma).
\end{align*}

By lemma \ref{lemma:converse Hoeffding repr}, we have
\begin{align}\label{Hr representations}
0\le H_r^*(\rho\|\sigma)
=\begin{cases}
r-a_r=\phi(a_r), & r< \phi(a_{\max})+a_{\max}, \\
r - a_{\max}, & r\ge \phi(a_{\max})+a_{\max},
\end{cases}
\end{align}
where $a_r$ is the unique solution of $r-a_r=\phi(a_r)$ and $a_{\max}=\ol D_{\infty}(\rho\|\sigma)$, and
\begin{align}\label{Hr positivity2}
0< H_r^*(\rho\|\sigma)\ds\iff \ds r>\ol D_1(\rho\|\sigma).
\end{align}
The same relations hold for $\what H_r^*(\rho\|\sigma)$ with $\what\phi,\,\what D_1(\rho\|\sigma)$ and $\what D_{\infty}(\rho\|\sigma)$ in place of $\phi,\,\ol D_1(\rho\|\sigma)$
and $\ol D_{\infty}(\rho\|\sigma)$, respectively.

We call $H_r^*(\rho\|\sigma)$ the \ki{Hoeffding anti-divergence} of $\rho$ and $\sigma$ with parameter $r$.
It differs from the (regularized) Hoeffding divergence \cite{ANSzV,Hayashi,HMO2,Nagaoka}
in two ways: first, it is based on the $D_{\alpha}^*$ R\'enyi divergences instead of $D_{\alpha}$, and second, the optimization
is over $\alpha>1$ instead of $\alpha\in(0,1)$. Due to the latter it is monotone non-decreasing under completely positive trace-preserving maps, which is the reason why we call it an anti-divergence.
\medskip

We close this section with some observations about a differentiable $f$.
Recall that by \eqref{general H def},
\begin{align}\label{general H2}
H_{f,r}^*=\sup_{t> 1}\frac{r(t-1)-f(t)}{t}=\sup_{0<s<1}\left\{sr-F(s)\right\},
\end{align}
with $F(s):=(1-s)f\bz\frac{1}{1-s}\jz$,
where the second equality is due to the change of variables
\begin{align}\label{s-t}
s:=\frac{t-1}{t}\ds\ds\ds\text{so that}\ds\ds\ds t=\frac{1}{1-s}.
\end{align}
For the rest, we will always assume that $s$ and $t$ are related as in \eqref{s-t}.
We start with the following lemma:
\begin{lemma}\label{lemma:c-preserving transformation}
Let $f:\,(1,+\infty)\to\bR$ be a convex function. Then
\begin{align*}
F:\,s\mapsto (1-s)f\bz\frac{1}{1-s}\jz\ds\text{is convex on }\ds (0,1).
\end{align*}
\end{lemma}
\begin{proof}
Since $f$ is convex, it can be written as the supremum of affine functions, i.e.,
$f(x)=\sup_{i\in \I}\{a_ix+b_i\}$, where $\I$ is some index set, and $a_i,b_i\in\bR$. Hence,
\begin{align*}
(1-s)f\bz\frac{1}{1-s}\jz&=(1-s)\sup_{i\in\I}\left\{\frac{a_i}{1-s}+b_i\right\}\\
&=
\sup_{i\in\I}\left\{a_i+b_i(1-s)\right\},
\end{align*}
which, as the supremum of affine functions, is convex in $s$.
\end{proof}

Assume for the rest that $f$ is differentiable on $(1,+\infty)$, and it is continuous at $1$.
Then $F$ is differentiable in $(0,1)$, and
\begin{align*}
F'(s)=-f\bz\frac{1}{1-s}\jz+\frac{1}{1-s}f'\bz\frac{1}{1-s}\jz=-f(t)+tf'(t).
\end{align*}
Using the assumption that $\lim_{t\searrow 1} f(t)=f(1)=0$, we get
\begin{align*}
F'(0^+)&:=\lim_{s\searrow 0}F'(s)=\lim_{t\searrow 1}(-f(t)+tf'(t))\\
&=\lim_{t\searrow 1}f'(t)=
\derright{f}(1)=D_{f,1}=a_{f,\min},\\
F'(1^-)&:=\lim_{s\nearrow 1}F'(s)=\lim_{t\nearrow +\infty}(-f(t)+tf'(t)).
\end{align*}
Convexity of $F$ guarantees that $F'$ is continuous and monotone increasing, and hence
for every $r\in(D_{f,1},F'(1^-))$, there exists an $s_r=(t_r-1)/t_r\in(0,1)$ such that
\begin{align}
r&=F'(s_r)=-f(t_r)+t_rf'(t_r)=f\lf(a_r)+a_r,\label{diff r}
\end{align}
and hence,
\begin{align}
H_{f,r}^*&=s_rr-F(s_r)=
\frac{t_r-1}{t_r}\bz-f(t_r)+t_rf'(t_r)\jz-\frac{1}{t_r}f(t_r)\nonumber\\
&=-f(t_r)+(t_r-1)f'(t_r)=f\lf(a_r),\label{diff Hr}
\end{align}
where
\begin{align*}
a_r:=f'(t_r)=r-H_{f,r}^*.
\end{align*}
Note that convexity of $f$ implies that
\begin{align*}
D_{f,\infty}&=a_{f,\max}=\sup_{1<t<+\infty}\frac{f(t)}{t-1}=\lim_{t\to+\infty}f'(t)=:f'(+\infty),
\end{align*}
and it is easy to see that $a_r\in(D_{f,1},D_{f,\infty})$. Note that
$a\mapsto f\lf(a)+a$ is convex, monotone increasing and lower semicontinuous on $\bR$, and hence
if $a_{\min}<a_{\max}$, we have
\begin{align*}
r_{f,\max}=f\lf(a_{\max})+a_{\max}=\sup_{a_{\min}<a<a_{\max}}\left\{f\lf(a)+a\right\}.
\end{align*}
Since for every $a\in(a_{\min},a_{\max})$, there exists an $s=(t-1)/t\in(0,1)$ such that
$a=f'(t),\, F'(s)=-f(t)+tf'(t)=f\lf(a)+a$, and vice versa, for every $s=(t-1)/t\in(0,1)$, we have
$a=f'(t)\in(a_{\min},a_{\max})$
and $F'(s)=f\lf(a)+a$, we see that
\begin{align}
r_{f,\max}&=\sup_{a_{\min}<a<a_{\max}}\left\{f\lf(a)+a\right\}=
\sup_{1<t<+\infty}(-f(t)+tf'(t))\nonumber\\
&=
\sup_{s\in(0,1)}F'(s)=F'(1^-).\label{rmax}
\end{align}

\section{The strong converse exponent in binary hypothesis testing}
\label{sec:sc}

For every $n\in\bN$, let $\hil_n,\,\rho_n$ and $\sigma_n$ be as in Section \ref{sec:asymp}.
As before, we assume that $\supp\rho\subseteq\supp\sigma$, i.e.,
$\supp\rho_n\subseteq\supp\sigma_n$ for every $n\in\bN$.
For every parameter $r>0$, the lower and upper \ki{strong converse exponents} $\sci(r|\rho\|\sigma)$ and $\scs(r|\rho\|\sigma)$ of the hypothesis testing problem
with null-hypothesis $\rho$ and alternative hypothesis $\sigma$ are defined as
\begin{align*}
\sci(r|\rho\|\sigma):=\inf&\left\{\liminf_{n\to+\infty}-\frac{1}{n}\log\Tr\rho_n T_n:\right.\\
&\ds\left.\limsup_{n\to+\infty}\frac{1}{n}\log\Tr\sigma_n T_n\le-r\right\},\\
\scs(r|\rho\|\sigma):=\inf&\left\{\limsup_{n\to+\infty}-\frac{1}{n}\log\Tr\rho_n T_n:\right.\\
&\ds\left.\limsup_{n\to+\infty}\frac{1}{n}\log\Tr\sigma_n T_n\le-r\right\},
\end{align*}
where the infimum is over all sequences of tests $T_n\in\B(\hil_n)_+,\,T_n\le I,\,n\in\bN$ (cf.~\eqref{thm:iid direct rate1}).
It is easy to see that $\scs(r|\rho\|\sigma)$ can be alternatively expressed as
\begin{align}
&\scs(r|\rho\|\sigma)\nonumber\\
&=
\sup\Bigl\{ R \Bigm|
\forall \{T_n\}_{n=1}^{\infty},\; 0\le T_n \le I,\,\nonumber\\
&\ds\ds\ds\ds\ds\ds\ds\limsup_{n\to\infty}\frac{1}{n}\log\Tr\sigma_nT_n \le -r\nonumber\\
&\ds\ds\ds\ds\ds\ds\ds\Rightarrow\, \liminf_{n\to\infty}\frac{1}{n}\log\Tr\rho_nT_n \le -R
\Bigr\}\nn\\
&=
\inf\Bigl\{ R \Bigm|
\exists \{T_n\}_{n=1}^{\infty}, \; 0\le T_n \le I,\nonumber\\
&\ds\ds\ds\ds\ds\ds\ds\limsup_{n\to\infty}\frac{1}{n}\log\Tr\sigma_nT_n \le -r,\nonumber\\
&\ds\ds\ds\ds\ds\ds\ds\liminf_{n\to\infty}\frac{1}{n}\log\Tr\rho_nT_n \ge -R
\Bigr\},
\label{sc rate definition}
\end{align}
and similar expressions hold for $\sci(r|\rho\|\sigma)$ as well.
\medskip

The following lemma is essentially due to \cite{Nagaoka2} and \cite{H:text}, the only difference is that we use $D_{\alpha}^*$ instead of $D_{\alpha}$.

\begin{lemma}\label{lemma:sc lower bound}
For any $r\ge 0$, we have $\sci(r|\rho\|\sigma)\ge H_r^*(\rho\|\sigma)$.
\end{lemma}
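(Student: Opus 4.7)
The plan is to use the monotonicity of the sandwiched R\'enyi divergence (Lemma~\ref{lemma:monotonicity}) under a two-outcome measurement channel to obtain a single-shot inequality in the spirit of Nagaoka and Hayashi, and then to lift it to the asymptotic setting by elementary $\liminf$/$\limsup$ manipulations. Fix $\alpha>1$ and a test $0\le T_n\le I$. The completely positive trace-preserving map $\map:\,X\mapsto \Tr(XT_n)\oplus \Tr(X(I-T_n))$ sends $\rho_n$ and $\sigma_n$ to the two-point distributions $\map(\rho_n)=(\Tr\rho_nT_n,\,1-\Tr\rho_nT_n)$ and $\map(\sigma_n)=(\Tr\sigma_nT_n,\,1-\Tr\sigma_nT_n)$. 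Writing $D_{\alpha}\nw$ on these classical distributions explicitly (where it coincides with the classical $D_{\alpha}$), applying Lemma~\ref{lemma:monotonicity}, and dropping the non-negative contribution of the second coordinate (the assumption $\supp\rho\subseteq\supp\sigma$ rules out the degenerate possibilities) yields
\begin{align*}
\bz\Tr\rho_nT_n\jz^{\alpha}\bz\Tr\sigma_nT_n\jz^{1-\alpha}\le Q_\alpha\nw(\rho_n\|\sigma_n),
\end{align*}
which, upon taking logarithms and rearranging, gives the key single-shot bound
\begin{align*}
\log\Tr\rho_nT_n \,\le\, \frac{\alpha-1}{\alpha}\log\Tr\sigma_nT_n+\frac{1}{\alpha}\pfn{\alpha}{\rho_n}{\sigma_n}.
\end{align*}

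Next, I would divide this inequality by $n$. Since $(\alpha-1)/\alpha>0$, subadditivity of $\limsup$ combined with the hypothesis $\limsup_n\tfrac{1}{n}\log\Tr\sigma_nT_n\le-r$ and the definition $\ol\psi(\alpha|\rho\|\sigma)=\limsup_n\tfrac{1}{n}\pfn{\alpha}{\rho_n}{\sigma_n}$ produces
\begin{align*}
\liminf_{n\to\infty}\frac{1}{n}\log\Tr\rho_nT_n \,\le\, \limsup_{n\to\infty}\frac{1}{n}\log\Tr\rho_nT_n \,\le\, -\frac{\alpha-1}{\alpha}r+\frac{1}{\alpha}\ol\psi(\alpha|\rho\|\sigma)\,=\,-\frac{\alpha-1}{\alpha}\bz r-\ol D_{\alpha}(\rho\|\sigma)\jz.
\end{align*}
Negating and taking the supremum over $\alpha>1$ gives $-\liminf_n\tfrac{1}{n}\log\Tr\rho_nT_n\ge H_r^*(\rho\|\sigma)$; since $\{T_n\}$ was an arbitrary admissible sequence, this is the desired lower bound on $sc(r|\rho\|\sigma)$.

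The only substantive ingredient is the single-shot inequality, which is the natural transposition of the Nagaoka--Hayashi argument from $D_{\alpha}$ to $D_{\alpha}\nw$, made possible by the monotonicity of $D_{\alpha}\nw$ for $\alpha\in[1/2,+\infty)$ recorded in Lemma~\ref{lemma:monotonicity}; once that is in hand, the asymptotic step is routine and, crucially, uses no product or factorization structure of $\rho$ or $\sigma$. Thus the lower bound holds in complete generality in the correlated setting, in contrast to the matching upper bound that will be established later under additional hypotheses.
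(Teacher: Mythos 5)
Your proof is correct and follows essentially the same route as the paper's: in both cases the key step is applying the data-processing inequality for $D_{\alpha}\nw$ (Lemma~\ref{lemma:monotonicity}) to the two-outcome measurement channel $X\mapsto(\Tr XT_n,\Tr X(I-T_n))$, dropping the nonnegative second term of the classical $Q_\alpha$, rearranging, passing to the limit, and optimizing over $\alpha>1$. The only difference is cosmetic — you phrase the single-shot step in terms of $Q_\alpha$ and $\psi\nw$ while the paper writes it directly as a chain of $D_\alpha$ inequalities.
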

\begin{proof}
Let $T_n\in\B(\hil_n)$ be a test and let $p_n:=\bz\Tr\rho_n T_n,\Tr\rho_n (I-T_n)\jz$ and
$q_n:=\bz\Tr\sigma_n T_n,\Tr\sigma_n (I-T_n)\jz$ be the post-measurement probability distributions.
By the monotonicity of the R\'enyi divergences under measurements, we have, for any
$\alpha>1$,
\begin{align*}
\rsrn{\rho_n}{\sigma_n}{\alpha}
&\ge
\rsr{p_n}{q_n}{\alpha}\\
&\ge
\frac{1}{\alpha-1}\log\left[(\Tr\rho_n T_n)^{\alpha}(\Tr\sigma_n T_n)^{1-\alpha}\right]\\
&=
\frac{\alpha}{\alpha-1}\log\Tr\rho_n T_n-\log\Tr\sigma_n T_n,
\end{align*}
or equivalently,
\begin{align*}
\frac{1}{n}\log\Tr\rho_n T_n
\le
\frac{\alpha-1}{\alpha}\left[\frac{1}{n}\rsrn{\rho_n}{\sigma_n}{\alpha}+\frac{1}{n}\log\Tr\sigma_n T_n\right].
\end{align*}
If $\limsup_{n\to\infty}\frac{1}{n}\log\Tr\sigma_nT_n \le -r$
then
\begin{align*}
\limsup_{n\to\infty}\frac{1}{n}\log\Tr\rho_n T_n
\le
\frac{\alpha-1}{\alpha}\left[\ol D_{\alpha}(\rho\|\sigma)-r\right],\ds\ds\ds \alpha>1.
\end{align*}
Taking the infimum in $\alpha>1$, and multiplying both sides by $-1$, the assertion follows.
\end{proof}
\smallskip

It is known that the inequality in Lemma \ref{lemma:sc lower bound} holds as an equality in the i.i.d.~case \cite{MO}, and our aim is to extend this equality to various correlated scenarios. We start with the following general converse:

\begin{thm}
\label{thm:exponent gen}
Let $f:\,[0,+\infty)\to\bR$ be a convex function such that $f(1)=0$.
Assume that for every $a\in(D_{f,1},D_{f,\infty})$ there exists a sequence of tests
$0\le T_n(a)\le I_n,\,n\in\bN$, such that
\begin{align}
\limsup_{n\to\infty}\frac{1}{n}\log\Tr\sigma_n T_n(a) &\le -(f\lf(a)+a),\label{assumption1 gen}\\
\liminf_{n\to\infty}\frac{1}{n}\log\Tr\rho_n T_n(a) &\ge - f\lf(a).\label{assumption2 gen}
\end{align}
Then
\begin{align}\label{sc rate gen}
\scs(r|\rho\|\sigma)\le H_{f,r}^*,\ds\ds\ds r\ge 0.
\end{align}
\end{thm}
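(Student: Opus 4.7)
\medskip\noindent\textbf{Proof plan.} By the second representation of the strong converse exponent in \eqref{sc rate definition}, it suffices, for every $\epsilon>0$, to exhibit a single sequence of tests $0\le T_n\le I_n$ satisfying
\[
\limsup_n\frac{1}{n}\log\Tr\sigma_nT_n\le -r\ds\text{and}\ds\liminf_n\frac{1}{n}\log\Tr\rho_nT_n\ge -(H_{f,r}^*+\epsilon),
\]
after which taking $\epsilon\searrow 0$ gives the bound $sc(r|\rho\|\sigma)\le H_{f,r}^*$. My plan is to build $T_n$ as a constant scalar multiple of the hypothesised test $T_n(a)$, with $a\in(D_{f,1},D_{f,\infty})$ chosen adaptively to $r$ and the scaling factor employed only to depress the $\sigma$-exponent further in the regime where the natural value $f\lf(a)+a$ is still larger than $r$.

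Concretely, for any $a\in(D_{f,1},D_{f,\infty})$ let $c:=\max\{0,\,r-(f\lf(a)+a)\}\ge 0$ and define $T_n:=e^{-nc}T_n(a)$, which lies in $[0,I_n]$ since $e^{-nc}\in(0,1]$. The assumptions \eqref{assumption1 gen}--\eqref{assumption2 gen} then yield
\[
\limsup_n\frac{1}{n}\log\Tr\sigma_nT_n\le -c-(f\lf(a)+a)\le -r,\ds\ds
\liminf_n\frac{1}{n}\log\Tr\rho_nT_n\ge -c-f\lf(a),
\]
so only the $\rho$-bound needs tuning: I need $c+f\lf(a)\le H_{f,r}^*+\epsilon$, which reads $f\lf(a)\le H_{f,r}^*+\epsilon$ when $r\le f\lf(a)+a$ (no scaling) and $r-a\le H_{f,r}^*+\epsilon$ when $r>f\lf(a)+a$ (scaling active).

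The remaining work is to select $a$ according to the three cases of \eqref{Hr expressions}. If $r_{f,\min}<r<r_{f,\max}$, I take $a:=a_r$, which lies in $(a_{f,\min},a_{f,\max})=(D_{f,1},D_{f,\infty})$ because $f\lf+\mathrm{id}$ is strictly increasing on $[a_{f,\min},a_{f,\max})$ with value $r_{f,\min}$ at $a_{f,\min}$; then $c=0$ and $f\lf(a_r)=H_{f,r}^*$ exactly. If $0\le r\le r_{f,\min}$, so $H_{f,r}^*=0$ by \eqref{Hr positivity}, I pick $a$ slightly above $a_{f,\min}$ with $f\lf(a)<\epsilon$, using continuity of the convex function $f\lf$ at $a_{f,\min}$ together with $f\lf(a_{f,\min})=0$; scaling stays inactive and the $\rho$-target is met. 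If $r\ge r_{f,\max}$ (which forces $a_{f,\max}<\infty$), I choose any $a\in(a_{f,\max}-\epsilon,a_{f,\max})\cap(D_{f,1},D_{f,\infty})$; then scaling is active and $r-a<r-a_{f,\max}+\epsilon=H_{f,r}^*+\epsilon$, as required.

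I expect the genuine content to sit in the regime $r\ge r_{f,\max}$: this is the only case in which the given tests $T_n(a)$ alone cannot deliver sharp enough $\sigma$-decay and the additional scalar shrinkage $e^{-nc}$ is essential; elsewhere the theorem is a direct application of the hypothesis. The truly hard labour---producing tests $T_n(a)$ whose $\sigma$- and $\rho$-traces simultaneously achieve the prescribed exponents---is delegated to the assumption, and what remains is only the Legendre-type reshuffling that converts the parameter family $\{(f\lf(a),\,f\lf(a)+a):a\in(D_{f,1},D_{f,\infty})\}$ into the trade-off curve indexed by $r$.
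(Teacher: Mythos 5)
Your proof is correct and follows essentially the same route as the paper's: a three-way case split on $r$, the choice $a=a_r$ in the central regime, a small perturbation above $a_{f,\min}$ when $r\le r_{f,\min}$, and the scalar shrinkage $e^{-nc}$ together with $a$ near $a_{f,\max}$ when $r\ge r_{f,\max}$. The only cosmetic difference is that you unify the scaled and unscaled regimes by setting $c:=\max\{0,r-(f\lf(a)+a)\}$, whereas the paper treats the scaling as a separate modification reserved for case (iii); the substance is identical.
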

\begin{proof}
Due to the representation \eqref{sc rate definition} of $\scs(r|\rho\|\sigma)$ as an infimum of rates, it is sufficient to show that
for any rate $R>H_{f,r}^*$ there exists a sequence of tests $\{T_n\}_{n=1}^{\infty}$ satisfying
\begin{align}
&\limsup_{n\to\infty}\frac{1}{n}\log\Tr\sigma_nT_n \le -r
\ds\ds\ds\ds\ds\text{and}\nonumber\\
&\liminf_{n\to\infty}\frac{1}{n}\log\Tr\rho_n T_n\ge -R.\label{semioptimal tests}
\end{align}
We prove the claim by considering three different regions of $r$.
\begin{enumerate}
\item\label{gen i}
In the case $D_{f,1}< r<r_{f,\max}$,
there exists a unique $a_r\in(D_{f,1},D_{f,\infty})$ satisfying $r-a_r= f\lf(a_r)$,
and \eqref{assumption1 gen} and \eqref{assumption2 gen} yield
\begin{align*}
\limsup_{n\to\infty}\frac{1}{n}\log\Tr\sigma_n T_n(a_{r}) &\le -( f\lf(a_r)+a_r)= -r,\\
\liminf_{n\to\infty}\frac{1}{n}\log\Tr\rho_n T_n(a_{r}) &\ge - f\lf(a_{r})=H_{f,r}^*,
\end{align*}
where the last identity is due to \eqref{Hr expressions}.

\item\label{gen ii}
In the case $0\le r\le D_{f,1}$, we have $H_{f,r}^*=0$, according to
\eqref{Hr positivity}.
For any $R>0$, we can find an $a\in(D_{f,1},D_{f,\infty})$
such that $0< f\lf(a)<R$. Note that $ f\lf(a)+a>D_{f,1}\ge r$, and
\eqref{assumption1 gen} and \eqref{assumption2 gen} yield
\begin{align*}
\limsup_{n\to\infty}\frac{1}{n}\log\Tr\sigma_n T_n(a) &\le -( f\lf(a)+a)< -r,\\
\liminf_{n\to\infty}\frac{1}{n}\log\Tr\rho_n T_n(a) &\ge - f\lf(a)>-R.
\end{align*}

\item\label{gen iii} In the case $r\ge r_{f,\max}$,
we use a modification of the tests $T_n(a)$,
following the method of the proof of Theorem 4 in \cite{NH}.
For every $a,r\in\bR$, let
\begin{align*}
T_{n}(r,a):=e^{-n(r-a- f\lf(a))}T_n(a).
\end{align*}
If
$a\in(D_{f,1},D_{f,\infty})$ and $r\ge r_{f,\max}$
then
$r> f\lf(a)+a$, and hence $0\le T_{n}(r,a)\le I$, i.e., $T_{n}(r,a)$ is a test, and
\begin{align*}
&\limsup_{n\to\infty}\frac{1}{n}\log\Tr\sigma_n T_n(r,a)\\
&\ds\ds\le-r+a+ f\lf(a)-(a+ f\lf(a))= -r,\\
&\liminf_{n\to\infty}\frac{1}{n}\log\Tr\rho_n T_n(r,a)\\
&\ds\ds\ge-r+a+ f\lf(a)- f\lf(a)=-(r-a),
\end{align*}
by \eqref{assumption1 gen} and \eqref{assumption2 gen}.
Now for any $R>H_{f,r}^*=r- D_{f,\infty}$, we can find
an $a\in(D_{f,1},D_{f,\infty})$
such that $r-D_{f,\infty}<r-a<R$, and the assertion follows.
\end{enumerate}
\end{proof}

Specializing to $f=\ol\psi$ in the above Theorem yields the following:

\begin{thm}
\label{thm:exponent}
Assume that for every $a\in(\ol D_1(\rho\|\sigma),\ol D_{\infty}(\rho\|\sigma))$ there exists a sequence of tests
$0\le T_n(a)\le I_n,\,n\in\bN$, such that
\begin{align}
\limsup_{n\to\infty}\frac{1}{n}\log\Tr\sigma_n T_n(a) &\le -(\phi(a)+a),\label{assumption1}\\
\liminf_{n\to\infty}\frac{1}{n}\log\Tr\rho_n T_n(a) &\ge -\phi(a),\label{assumption2}
\end{align}
where $\phi$ is given in \eqref{phi def}.
Then
\begin{align}\label{sc rate}
\sci(r|\rho\|\sigma)=\scs(r|\rho\|\sigma)= H_r^*(\rho\|\sigma),\ds\ds\ds r\ge 0.
\end{align}
\end{thm}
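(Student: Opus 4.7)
The plan is to combine Lemma \ref{lemma:sc lower bound}, which already provides the lower bound $sc(r|\rho\|\sigma)\ge H_r^*(\rho\|\sigma)$ without any additional assumptions, with Theorem \ref{thm:exponent gen} applied to the specific function $f(t):=\ol\psi(t|\rho\|\sigma)$, to produce the matching upper bound.

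Concretely, I would first verify that $f=\ol\psi(\cdot|\rho\|\sigma)$ meets the hypotheses of Theorem \ref{thm:exponent gen}: it is convex on $(1,+\infty)$ as a limsup of convex functions---convexity of $\alpha\mapsto\frac{1}{n}\psi\nw(\alpha|\rho_n\|\sigma_n)$ on $(1,+\infty)$ being the content of the lemma following \eqref{pinching limit}---and the standing assumption $\supp\rho\subseteq\supp\sigma$ forces $f(1)=0$. For this choice of $f$, the generic polar transform of \eqref{general phi def} coincides with $\phi$ as defined in \eqref{phi def}, and the endpoints satisfy $D_{f,1}=\ol D_1(\rho\|\sigma)$ and $D_{f,\infty}=\ol D_\infty(\rho\|\sigma)$, as recorded in the displayed equation immediately following \eqref{phi hat def}.

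Under these identifications, the hypotheses \eqref{assumption1 gen}--\eqref{assumption2 gen} of Theorem \ref{thm:exponent gen} are word-for-word the assumptions \eqref{assumption1}--\eqref{assumption2} of the present theorem, and its conclusion \eqref{sc rate gen} reads $sc(r|\rho\|\sigma)\le H_{\ol\psi,r}^*=H_r^*(\rho\|\sigma)$ for every $r\ge 0$. Combining this with the lower bound from Lemma \ref{lemma:sc lower bound} yields the claimed equality \eqref{sc rate}.

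There is no substantial obstacle in the argument: the core technical work has been carried out in Theorem \ref{thm:exponent gen} (the three-regime construction that, in particular, rescales the given tests $T_n(a)$ when $r\ge r_{f,\max}$) and in Lemma \ref{lemma:sc lower bound} (monotonicity of $D_\alpha\nw$ under measurements). Theorem \ref{thm:exponent} is just the specialization of the abstract statement to the convex function $\ol\psi$ that arises from the hypothesis testing problem at hand, and what remains is purely a matching of notation.
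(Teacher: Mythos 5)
Your proof is correct and is essentially the paper's own: the paper also proves Theorem \ref{thm:exponent} by combining the lower bound of Lemma \ref{lemma:sc lower bound} with Theorem \ref{thm:exponent gen} specialized to $f=\ol\psi(\cdot|\rho\|\sigma)$. You have merely spelled out the notational bookkeeping (convexity and $f(1)=0$ via $\supp\rho\subseteq\supp\sigma$, identification of $D_{f,1},D_{f,\infty},f\lf$ with $\ol D_1,\ol D_\infty,\phi$) that the paper leaves implicit.
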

\begin{proof}
Immediate from Lemma \ref{lemma:sc lower bound} and Theorem \ref{thm:exponent gen}.
\end{proof}

\begin{rem}\label{rem:two r}
The separate treatment of two different regions of $r$ values for the strong converse exponent, as in
\ref{gen i} and \ref{gen iii} in Theorem \ref{thm:exponent gen}, dates back to \cite{NKiid}, where it was noted that randomized tests are necessary for $r$ values above a critical one.
\end{rem}

\subsection{States with differentiable $\ol\psi=\what\psi$}
\label{sec:differentiable}

Now we fix a sequence $\{\what\sigma_n\}_{n\in\bN}$ satisfying \eqref{sigma hat} and,
as before, we denote by $\what\rho_n$ the pinching of $\rho_n$ by $\what\sigma_n$. Let
\begin{align}\label{pinched NP}
\what S_n(a):=\{\what\rho_n-e^{na}\what\sigma_n>0\}
\end{align}
be a Neyman-Pearson test for every $a\in\bR$ and every $n\in\bN$.

\begin{thm}\label{thm:sc rate with differentiability}
Assume that for every $\alpha>1$, $\what\psi(\alpha|\rho\|\sigma)$
exists as a limit, and $\alpha\mapsto \what\psi(\alpha|\rho\|\sigma)$ is differentiable on $(1,+\infty)$.
Then
\begin{align}
\limsup_{n\to +\infty}\frac{1}{n}\log\Tr\sigma_n \what S_n(a)
&\le
\lim_{n\to +\infty}\frac{1}{n}\log\Tr\what\sigma_n \what S_n(a)\nonumber\\
&=-(\what\phi(a)+a),\label{pinched NP rates}\\
\lim_{n\to +\infty}\frac{1}{n}\log\Tr\rho_n \what S_n(a)&=-\what\phi(a)\label{pinched NP rates2}
\end{align}
for every $a\in\big(\what D_1(\rho\|\sigma),\what D_{\infty}(\rho\|\sigma)\big)$,
where $\what\phi$ is given in \eqref{phi hat def}, and
\begin{align}\label{diff sc}
H_r^*(\rho\|\sigma)\le \sci(r|\rho\|\sigma)\le \scs(r|\rho\|\sigma)\le \what H_r^*(\rho\|\sigma),\ds\ds\ds r\ge 0.
\end{align}
If, moreover, $\ol\psi(\alpha|\rho\|\sigma)=\what\psi(\alpha|\rho\|\sigma)$ for every $\alpha>1$ then
\begin{equation}\label{differentiable sc rate}
\sci(r|\rho\|\sigma)= \scs(r|\rho\|\sigma)=H_r^*(\rho\|\sigma),\ds\ds\ds r\ge 0.
\end{equation}
\end{thm}
\begin{proof}
First, note that $\what S_n(a)=\E_{\what\sigma_n}(\what S_n(a))$, and hence
\begin{align*}
\Tr\rho_n\what S_n(a)&=\Tr\rho_n\E_{\what\sigma_n}(\what S_n(a))=
\Tr\E_{\what\sigma_n}(\rho_n)\what S_n(a)\\
&=
\Tr\what\rho_n\what S_n(a).
\end{align*}
Since $\what\rho_n$ and $\what\sigma_n$ commute, we may consider them as probability mass functions on some finite set $\X_n$, and write
\begin{align}
\Tr \sigma_n \what S_n(a)&\le\Tr \what\sigma_n \what S_n(a)=\Prob_{\what\sigma_n}\bz\{x\in\X_n:\,Y_n(x)>a\}\jz\nonumber\\
&=\mu_{n,1}\bz(a,+\infty)\jz,\label{diff error bound1}\\
\Tr \what\rho_n \what S_n(a)&=\Prob_{\what\rho_n}\bz\{x\in\X_n:\,Y_n(x)>a\}\jz\nonumber\\
&=\mu_{n,2}\bz(a,+\infty)\jz,\label{diff error bound2}
\end{align}
where $Y_n(x):=\frac{1}{n}\log\frac{\what\rho_n(x)}{\what\sigma_n(x)}$, and $\mu_{n,1}$ and $\mu_{n,2}$ are probability measures on
$\bR$, defined for any Borel subset $H$ of $\bR$ by
\begin{align*}
\mu_{n,1}(H)&:=\Prob_{\what\sigma_n}\bz\left\{x\in\X_n:\,Y_n(x)\in H\right\}\jz,\nonumber\\
\mu_{n,2}(H)&:=\Prob_{\rho_n}\bz\left\{x\in\X_n:\,Y_n(x)\in H\right\}\jz.
\end{align*}
The first inequality in \eqref{diff error bound1} is due to \eqref{sigma hat}.
Let $\lm_{n,1}$ and $\lm_{n,2}$ be the
logarithmic moment generating functions of
$\mu_{n,1}$ and $\mu_{n,2}$ respectively (see Appendix \ref{sec:ld}). Then we have
\begin{align*}
\lm_{n,1}(nt)&=\log\Exp_{\what\sigma_n}e^{t\log(\what\rho_n/\what\sigma_n)}=\log\sum_{x\in\X_n}\what\rho_n(x)^t\what\sigma_n(x)^{1-t}\\
&=\log\Tr\what\rho_n^t\what\sigma_n^{1-t}=\psi(t|\what\rho_n\|\what\sigma_n),\\
\lm_{n,2}(nt)&=\log\Exp_{\what\rho_n}e^{t\log(\what\rho_n/\what\sigma_n)}=
\log\sum_{x\in\X_n}\what\rho_n(x)^{1+t}\what\sigma_n(x)^{-t}\\
&=\log\Tr\what\rho_n^{1+t}\what\sigma_n^{-t}=\psi(1+t|\what\rho_n\|\what\sigma_n).
\end{align*}
By assumption,
\begin{align*}
\alm_1(t)&:=\lim_{n\to+\infty}\frac{1}{n}\lm_{n,1}(nt)=\what\psi(t|\rho\|\sigma),\ds\ds\ds\ds\ds\s t>1,\\
\alm_2(t)&:=\lim_{n\to+\infty}\frac{1}{n}\lm_{n,2}(nt)=\what\psi(1+t|\rho\|\sigma),\ds\ds\ds t>0.
\end{align*}
By convexity, it is easy to see that
\begin{align*}
\lim_{t\searrow 1}\alm_1'(t)=\what D_1(\rho\|\sigma),\ds\ds\text{and}\ds\ds
\lim_{t\nearrow +\infty}\alm_1'(t)\ge\what D_{\infty}(\rho\|\sigma),
\end{align*}
and, similarly,
\begin{align*}
\lim_{t\searrow 0}\alm_2'(t)=\what D_1(\rho\|\sigma),\ds\ds\text{and}\ds\ds
\lim_{t\nearrow +\infty}\alm_2'(t)\ge\what D_{\infty}(\rho\|\sigma).
\end{align*}
Using now Lemmas \ref{lemma:ldp upper bounds} and \ref{lemma:ldp lower bounds}, we get
that for every $a\in\big(\what D_1(\rho\|\sigma),\what D_{\infty}(\rho\|\sigma)\big)$,
\begin{align*}
\lim_{n\to +\infty}\frac{1}{n}\log\Tr\what\sigma_n \what S_n(a)
&=
\lim_{n\to +\infty}\frac{1}{n}\log\mu_{n,1}\bz(a,+\infty)\jz\\
&=
-\sup_{t>1}\{at-\what\psi(t|\rho\|\sigma)\}\\\
&=-(\what\phi(a)+a),\\
\lim_{n\to +\infty}\frac{1}{n}\log\Tr\rho_n \what S_n(a)
&=
\lim_{n\to +\infty}\frac{1}{n}\log\mu_{n,2}\bz(a,+\infty)\jz\\
&=
-\sup_{t>0}\{at-\what\psi(1+t|\rho\|\sigma)\}\\
&=-\what\phi(a),
\end{align*}
proving the identities in \eqref{pinched NP rates}--\eqref{pinched NP rates2}. The inequality in
\eqref{pinched NP rates} is obvious from the inequality in \eqref{diff error bound1}.

Applying Theorem \ref{thm:exponent gen} with $f:=\what\psi$ and $T_n(a):=\what S_n(a)$ yields the last inequality in \eqref{diff sc}, and the first inequality is immediate from
 Lemma \ref{lemma:sc lower bound}.
Finally, if $\ol\psi(\alpha|\rho\|\sigma)=\what\psi(\alpha|\rho\|\sigma)$ for every $\alpha>1$ then
$\what H_r^*(\rho\|\sigma)=H_r^*(\rho\|\sigma)$ for every $r$, and \eqref{diff sc} reduces to
\eqref{differentiable sc rate}.
\end{proof}

Combining Theorem \ref{thm:sc rate with differentiability} and Corollary \ref{cor:TH}, we get immediately the following:
\begin{cor}\label{cor:diff}
Assume that for every $\alpha>1$, $\ol\psi(\alpha|\rho\|\sigma)$
exists as a limit, and $\alpha\mapsto \ol\psi(\alpha|\rho\|\sigma)$ is differentiable on $(1,+\infty)$.
Assume also that $\lim_{n\to+\infty}\frac{1}{n}\log\theta(\sigma_n)=0$. Then
\begin{align}\label{differentiable sc rate2}
\sci(r|\rho\|\sigma)= \scs(r|\rho\|\sigma)=H_r^*(\rho\|\sigma),\ds\ds\ds r\ge 0.
\end{align}
Moreover, the optimal sequence of tests can be chosen as in \eqref{pinched NP}, with
$\what\sigma_n$ as in \ref{iii} of Corollary \ref{cor:TH}.
\end{cor}
\medskip

We say that the hypothesis testing problem is i.i.d. if $\hil_n=\hil_1^{\otimes n}$, $\rho_n=\rho_1^{\otimes n}$ and
$\sigma_n=\sigma_1^{\otimes n}$ for every $n\in\bN$.
Let $H_r^*(\rho_1\|\sigma_1)$ be as given in \eqref{thm:iid sc rate}.
An expression for the strong converse exponent
in the i.i.d.~case was first given in \cite{H:text}, using the tests $\what S_n(a)$ corresponding to the choice
$\what\sigma_n:=\sigma_n$. There it was shown that the inequality
\begin{align}
\scs(r|\rho\|\sigma)&\ge \what H_r^*(\rho_1\|\sigma_1)\nonumber\\
&:=
\sup_{\alpha>1}\frac{\alpha-1}{\alpha}\left[r-\lim_{n\to+\infty}\frac{1}{n}D_{\alpha}\bz\E_{\sigma_n}\rho_n\|\sigma_n\jz\right]
\label{Hayashi lower}
\end{align}
holds (cf.~Lemma \ref{lemma:sc lower bound});
the converse inequality can be obtained by applying the classical strong converse result of \cite{HK}
to the commuting states $\what\rho_n=\E_{\sigma_n}\rho_n$ and $\sigma_n$. It was shown later in
\cite{MO} that
$\lim_{n\to+\infty}\frac{1}{n}D_{\alpha}\bz\E_{\sigma_n}\rho_n\|\sigma_n\jz=D_{\alpha}^*(\rho_1\|\sigma_1)=\ol D_{\alpha}(\rho\|\sigma)$;
with this addition, \eqref{Hayashi lower} yields Lemma \ref{lemma:sc lower bound}.
The strong converse exponent was later shown to be equal to
 $H_r^*(\rho_1\|\sigma_1)$ in \cite{MO}, by showing that
\eqref{pinched NP rates}--\eqref{pinched NP rates2} hold with
$S_{n}(a):=\{\rho_n-e^{na}\sigma_n>0\}$ in place of $\what S_n(a)$.
Here we give an alternative proof, based on
Theorem \ref{thm:sc rate with differentiability}. Note that neither the proof in \cite{MO}, nor the proof below
uses the classical result as an ingredient; on the contrary, the classical result follows as a special case.

\begin{thm}\label{thm:iid}
In the i.i.d.~case, $\sci(r|\rho\|\sigma)=\scs(r|\rho\|\sigma)=H_r^*(\rho_1\|\sigma_1)$ for every $r\ge 0$.
\end{thm}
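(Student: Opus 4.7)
The plan is to derive the identity as an immediate application of Theorem~\ref{thm:sc rate with differentiability}, so the whole job reduces to verifying its three hypotheses in the i.i.d.~setting: (a)~$\what\psi(\alpha|\rho\|\sigma)$ exists as a limit for every $\alpha>1$, (b)~$\alpha\mapsto\what\psi(\alpha|\rho\|\sigma)$ is differentiable on $(1,+\infty)$, and (c)~$\ol\psi(\alpha|\rho\|\sigma)=\what\psi(\alpha|\rho\|\sigma)$ on $(1,+\infty)$.

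First, multiplicativity of $Q_t^{*}$ under tensor products (immediate from the factorization $(\rho_1^{\otimes n})^{1/2}(\sigma_1^{\otimes n})^{(1-t)/t}(\rho_1^{\otimes n})^{1/2}=(\rho_1^{1/2}\sigma_1^{(1-t)/t}\rho_1^{1/2})^{\otimes n}$ and taking the $t$-th power followed by a trace) yields $\pfn{\alpha}{\rho_1^{\otimes n}}{\sigma_1^{\otimes n}}=n\,\pfn{\alpha}{\rho_1}{\sigma_1}$, whence $\ol\psi(\alpha|\rho\|\sigma)=\pfn{\alpha}{\rho_1}{\sigma_1}$ as a genuine limit. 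Second, specializing the pinching limit \eqref{pinching limit} to $(\rho_1,\sigma_1)$ gives $\what\psi(\alpha|\rho\|\sigma)=\lim_{n\to+\infty}\tfrac{1}{n}\psi(\alpha|\E_{\sigma_1^{\otimes n}}(\rho_1^{\otimes n})\|\sigma_1^{\otimes n})=\pfn{\alpha}{\rho_1}{\sigma_1}$, again as a genuine limit. Together these establish (a) and (c) at once.

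For (b), I would invoke Lemma~\ref{lemma:derivative} applied to the single-copy pair $(\rho_1,\sigma_1)$: since $\rho_1\ne 0$ (it is a state) and $\supp\rho_1\subseteq\supp\sigma_1$, we have $\rho_1\sigma_1\ne 0$, so $t\mapsto\pfn{t}{\rho_1}{\sigma_1}$ is differentiable on $(0,+\infty)$ with the explicit formula \eqref{new psi derivative}; in particular it is differentiable on $(1,+\infty)$, which by the above is exactly $\alpha\mapsto\what\psi(\alpha|\rho\|\sigma)$.

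With (a)--(c) in hand, Theorem~\ref{thm:sc rate with differentiability} delivers $sc(r|\rho\|\sigma)=H_r^{*}(\rho\|\sigma)$ for every $r\ge 0$; and since the multiplicativity noted above forces $\ol D_\alpha(\rho\|\sigma)=D_\alpha^{*}(\rho_1\|\sigma_1)$ for all $\alpha>1$, the definition of $H_r^{*}$ gives $H_r^{*}(\rho\|\sigma)=H_r^{*}(\rho_1\|\sigma_1)$, closing the argument. There is no substantive obstacle to overcome here: all of the non-commutative analysis has already been packaged into the pinching limit \eqref{pinching limit}, the derivative computation of Lemma~\ref{lemma:derivative}, and the convex-analytic machinery of Theorem~\ref{thm:sc rate with differentiability}, leaving only the two bookkeeping identifications above.
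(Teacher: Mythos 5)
Your proof is correct and takes essentially the same route as the paper's: verify the hypotheses of Theorem~\ref{thm:sc rate with differentiability} in the i.i.d.~setting and then read off the conclusion. The one small variation is that you establish $\what\psi(\alpha|\rho\|\sigma)=\ol\psi(\alpha|\rho\|\sigma)$ (and existence of the limit) by appealing directly to the pinching-limit identity \eqref{pinching limit}, whereas the paper reaches the same equality by combining the polynomial growth of $v(\sigma_1^{\otimes n})$ with Lemma~\ref{lemma:ol equals what}; these are two facets of the same fact and both close the argument cleanly.
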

\begin{proof}
It is easy to see that the i.i.d.~assumption implies $\ol \psi(\alpha|\rho\|\sigma)=\psi\nw(\alpha|\rho_1\|\sigma_1)$, and thus also $H_r^*(\rho\|\sigma)=H_r^*(\rho_1\|\sigma_1)$.
The choice $\what\sigma_n:=\sigma_n=\sigma_1^{\otimes n}$ yields that
$v(\what\sigma_n$) grows polynomially with $n$, and $D_{\max}(\what\sigma_n\|\sigma_n)=0$, and hence, by Lemma \ref{lemma:ol equals what},
$\ol \psi(\alpha|\rho\|\sigma)=\what \psi(\alpha|\rho\|\sigma)$. Finally, differentiability of
$\ol \psi(\alpha|\rho\|\sigma)=\psi\nw(\alpha|\rho_1\|\sigma_1)$ in $\alpha$ for $\alpha>1$ follows from Lemma \ref{lemma:derivative}. Thus, all the conditions of Theorem
\ref{thm:sc rate with differentiability} are satisfied, and therefore \eqref{differentiable sc rate} holds.
\end{proof}

An expression for the strong converse exponent in the classcial i.i.d.~case was first given in \cite{HK}, followed by a different expression, based on the Hellinger arc, in \cite{NKiid}, where it was also explained how the expression in \cite{HK} can obtained from the one in  \cite{NKiid}. In Appendix \ref{sec:classical} we briefly explain how the expressions in \cite{NKiid}
can be obtained from Theorem \ref{thm:iid}.
\medskip

In \cite[Example B.1]{HMO2} a class of finitely correlated states \cite{FNW} with commutative auxiliary algebra has been studied, and it has been shown that for these states, $\lim_{n}(1/n)\psi(\alpha|\rho_n\|\sigma_n)$ is differentiable in $\alpha$
for every $\alpha\in\bR$. In particular, this class includes classical Markov chains with an irreducible transition matrix.
Exactly the same argument as in \cite{HMO2} yields that for this class of states, also $\ol \psi(\alpha|\rho\|\sigma)$ exists as a limit and is differentiable in $\alpha$ for $\alpha>1$. It is also easy to verify that if $\sigma$ is in this class then
$\lim_n(1/n)v(\sigma_n)=0$, and thus $\ol \psi(\alpha|\rho\|\sigma)=\what \psi(\alpha|\rho\|\sigma)$
due to Lemma \ref{lemma:ol equals what}.
In particular, the strong converse exponent can be expressed
as in \eqref{differentiable sc rate}, due to Theorem \ref{thm:sc rate with differentiability}.
An alternative expression for the strong converse exponent of classical Markov chains was given before in \cite{NKMarkov}. In Appendix \ref{sec:classical}, we explain how the exponent of \cite{NKMarkov} can be obtained from Theorem \ref{thm:sc rate with differentiability}.

In Section \ref{sec:quasifree sc rate} we show that Theorem \ref{thm:sc rate with differentiability} can be applied to obtain
the strong converse exponent for the hypothesis testing problem of gauge-invariant fermionic quasi-free states.

\subsection{States with factorization property}
\label{sec:fact}

Let $\hil$ be a finite-dimensional Hilbert space, and for every $n\in\bN$, let $\omega_n$ be a state on $\hil^{\otimes n}$.
We say that  $\omega:=\{\omega_n\}_{n\in\bN}$ satisfies the \ki{factorization property} if there exists an $\eta\ge 1$ such that
for every $k,m,r\in\bN$,
\begin{align*}
\omega_{km+r}&\le
\eta^{k}\omega_m^{\otimes k}\otimes\omega_{r}
& &\text{(upper factorization),\ds\ds and}\\
\omega_{km+r}&\ge \eta^{-k}\omega_m^{\otimes k}\otimes\omega_{r}& &\text{(lower factorization)}.
\end{align*}
We call $\eta$ a \ki{factorization constant} for $\omega$.
Note that if $\rho=\{\rho_n\}_{n\in\bN}$ and $\sigma=\{\sigma_n\}_{n\in\bN}$ both satisfy the factorization property
then we can always choose an $\eta$ which is a common factorization constant for both $\rho$ and $\sigma$.

Obviously, if $\omega$ is i.i.d., i.e., of the form $\omega_n=\omega_1^{\otimes n},\,n\in\bN$, then it satisfies the
factorization property with $\eta=1$. It has been shown in \cite{HMO} that finitely correlated states \cite{FNW}
satisfy the upper factorization property, but not necessarily the lower factorization property. In particular, if $\omega$ is a
classical Markov chain then it satisfies both the upper and the lower factorization property if and only if all the entries of
its transition matrix are strictly positive. Physically relevant examples of states with the factorization property are the Gibbs states of translation-invariant finite-range interactions on a spin chain; for details, see Section \ref{sec:Gibbs}.
\medskip

In this section we show that if both $\rho$ and $\sigma$ satisfy the factorization property then the tests $T_n(a):=S_n(a)$,
where
\begin{align}\label{NP def}
S_n(a):=\{\rho_n-e^{na}\sigma_n>0\}
\end{align}
are the quantum Neyman-Pearson tests,
satisfy \eqref{assumption1} and \eqref{assumption2}, and hence \eqref{sc rate} holds.
We will prove \eqref{assumption1} and \eqref{assumption2} in Lemmas \ref{lemma:np upper fact} and \ref{lem:np-lower}, and give the formal statement of our main result in Theorem \ref{thm:sc rate with factorization}.
\medskip

We start with showing that under the factorization assumption, $\ol\psi$ exists as a limit, and give bounds on its deviation
from the $\psi$ functions for finite $n$.

\begin{lemma}
Let $\rho$ and $\sigma$ satisfy the factorization property,
and let $\eta$ be a common factorization constant.
Then $\ol\psi(\alpha|\rho\|\sigma)$ exists as a limit for every $\alpha>1$, and
\begin{align}
\ol\psi(\alpha|\rho\|\sigma)-\frac{2\alpha-1}{n}\log\eta&\le \frac{1}{n}\psi\nw(\alpha|\rho_n\|\sigma_n)\nonumber\\
&\le
\ol\psi(\alpha|\rho\|\sigma)+\frac{2\alpha-1}{n}\log\eta
\label{finite-size deviations1}
\end{align}
for every $\alpha>1$ and every $n\in\bN$.
\end{lemma}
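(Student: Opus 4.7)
The plan is to combine the factorization property with three basic features of $\psi\nw$ for $\alpha>1$: (i) multiplicativity under tensor products, i.e., $\psi\nw(\alpha|A\otimes B\|\,C\otimes D)=\psi\nw(\alpha|A\|C)+\psi\nw(\alpha|B\|D)$; (ii) the scaling identity \eqref{psi scaling}; and (iii) monotonicity in each argument. The goal is to turn the operator inequalities provided by the factorization property into approximate subadditivity and superadditivity of the scalar sequence $n\mapsto\psi\nw(\alpha|\rho_n\|\sigma_n)$, from which both the existence of the limit $\ol\psi(\alpha|\rho\|\sigma)$ and the desired finite-size bound follow by a Fekete-type argument.

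The main technical point is the monotonicity of $\psi\nw(\alpha|\cdot\|\cdot)$ for $\alpha>1$. Using the cyclicity of the trace, one rewrites $Q_\alpha\nw(\rho\|\sigma)=\Tr\bigl(\sigma^{(1-\alpha)/(2\alpha)}\rho\,\sigma^{(1-\alpha)/(2\alpha)}\bigr)^{\alpha}$. If $\rho\le\rho'$, then $\sigma^{(1-\alpha)/(2\alpha)}\rho\,\sigma^{(1-\alpha)/(2\alpha)}\le\sigma^{(1-\alpha)/(2\alpha)}\rho'\sigma^{(1-\alpha)/(2\alpha)}$, and since Weyl's inequality implies $A\le B\Rightarrow\Tr A^\alpha\le\Tr B^\alpha$ for positive $A,B$ and $\alpha\ge 1$, we get $\psi\nw(\alpha|\rho\|\sigma)\le\psi\nw(\alpha|\rho'\|\sigma)$. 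In the other slot, if $\sigma\le\sigma'$ then (since $(1-\alpha)/\alpha<0$ and $t\mapsto t^{(1-\alpha)/\alpha}$ is operator monotone decreasing) $\sigma^{(1-\alpha)/\alpha}\ge\sigma'^{(1-\alpha)/\alpha}$, and the same trace argument yields $\psi\nw(\alpha|\rho\|\sigma)\ge\psi\nw(\alpha|\rho\|\sigma')$.

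Fix $\alpha>1$ and write $n=km+r$ with $0\le r<m$. Applying monotonicity to the upper and lower factorization bounds for $\rho$ and $\sigma$ (with a common factorization constant $\eta$, using each in the direction that makes $\psi\nw$ larger, resp.\ smaller), and then using multiplicativity under tensor products together with \eqref{psi scaling}, I get
\begin{align*}
\psi\nw(\alpha|\rho_n\|\sigma_n) &\le k\alpha\log\eta+k(\alpha-1)\log\eta+k\,\psi\nw(\alpha|\rho_m\|\sigma_m)+\psi\nw(\alpha|\rho_r\|\sigma_r),\\
\psi\nw(\alpha|\rho_n\|\sigma_n) &\ge -k\alpha\log\eta-k(\alpha-1)\log\eta+k\,\psi\nw(\alpha|\rho_m\|\sigma_m)+\psi\nw(\alpha|\rho_r\|\sigma_r),
\end{align*}
i.e.\ the sequence is additive up to an error $k(2\alpha-1)\log\eta$ plus a bounded remainder $\psi\nw(\alpha|\rho_r\|\sigma_r)$. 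Dividing by $n=km+r$ and letting $k\to\infty$ (with $m$ fixed and $r$ bounded) gives
\[
\frac{1}{m}\psi\nw(\alpha|\rho_m\|\sigma_m)-\frac{2\alpha-1}{m}\log\eta\le\liminf_{n}\tfrac{1}{n}\psi\nw(\alpha|\rho_n\|\sigma_n)\le\limsup_{n}\tfrac{1}{n}\psi\nw(\alpha|\rho_n\|\sigma_n)\le\frac{1}{m}\psi\nw(\alpha|\rho_m\|\sigma_m)+\frac{2\alpha-1}{m}\log\eta.
\]
Sending $m\to\infty$ forces the $\liminf$ and $\limsup$ to coincide, so $\ol\psi(\alpha|\rho\|\sigma)$ exists as a limit; keeping $m$ finite yields \eqref{finite-size deviations1}.

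The only step I expect to require real care is the monotonicity of $\psi\nw$ in its arguments for $\alpha>1$, because $t\mapsto t^\alpha$ is not operator monotone. The workaround described above, reducing the claim to monotonicity of $A\mapsto\Tr A^\alpha$ on positive operators (which is true because the eigenvalues of $A\le B$ are pointwise dominated by those of $B$), makes this step clean, and everything else is essentially bookkeeping with \eqref{psi scaling} and tensor-product multiplicativity.
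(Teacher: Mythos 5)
Your proof is correct and follows essentially the same route as the paper's: you invoke the same three ingredients (operator monotone decreasing nature of $x\mapsto x^{(1-\alpha)/\alpha}$, monotonicity of $A\mapsto\Tr A^\alpha$, and multiplicativity of $Q_\alpha\nw$ over tensor products), and the factorization-then-Fekete argument matches the paper's, the only cosmetic difference being that you package the two monotonicity facts as a stated lemma and apply the scaling identity \eqref{psi scaling} explicitly where the paper tracks the $\eta$-powers inline.
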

\begin{proof}
Given $m\in\bN$, every $n\in\bN$ can be uniquely written in the form $n=km+r$ with $k,r\in\bN$, $r\in\{0,\ldots,m-1\}$.
Since $\alpha>1$, we have
$-1<\frac{1-\alpha}{\alpha}<0$, and hence $x\mapsto x^{\frac{1-\alpha}{\alpha}}$ is operator monotone decreasing. Thus
\begin{align}\label{sigma factorization}
\eta^{-k\frac{\alpha-1}{\alpha}}\bz\sigma_m^{\otimes k}\otimes\sigma_r\jz^{\frac{1-\alpha}{\alpha}}
\le
\sigma_n^{\frac{1-\alpha}{\alpha}}
\le
\eta^{k\frac{\alpha-1}{\alpha}}\bz\sigma_m^{\otimes k}\otimes\sigma_r\jz^{\frac{1-\alpha}{\alpha}}.
\end{align}
Taking into account that $A\mapsto\Tr A^{\alpha}$ is monotone increasing w.r.t.~the positive semidefinite ordering, we obtain
\begin{align*}
&Q_{\alpha}\nw(\rho_n\|\sigma_n)\\
&=
\Tr\bz\rho_n^{1/2}\sigma_n^{\frac{1-\alpha}{\alpha}}\rho_n^{1/2}\jz^{\alpha}\\
&\le
\eta^{k(\alpha-1)}\Tr\bz\rho_n^{1/2}\bz\sigma_m^{\otimes k}\otimes\sigma_r\jz^{\frac{1-\alpha}{\alpha}}\rho_n^{1/2}\jz^{\alpha}\\
&=
\eta^{k(\alpha-1)}\Tr\bz\bz\sigma_m^{\otimes k}\otimes\sigma_r\jz^{\frac{1-\alpha}{2\alpha}}\rho_n\bz\sigma_m^{\otimes k}\otimes\sigma_r\jz^{\frac{1-\alpha}{2\alpha}}\jz^{\alpha}\\
&\le
\eta^{k\alpha}\eta^{k(\alpha-1)}\Tr\Big(\bz\sigma_m^{\otimes k}\otimes\sigma_r\jz^{\frac{1-\alpha}{2\alpha}}\bz\rho_m^{\otimes k}\otimes\rho_r\jz\\
&\ds\ds\ds\ds\ds\ds\ds\ds\ds\ds\ds\ds\bz\sigma_m^{\otimes k}\otimes\sigma_r\jz^{\frac{1-\alpha}{2\alpha}}\Big)^{\alpha}\\
&=
\eta^{k(2\alpha-1)}Q_{\alpha}\nw(\rho_m\|\sigma_m)^kQ_{\alpha}\nw(\rho_r\|\sigma_r),
\end{align*}
and thus
\begin{align*}
&\limsup_{n\to+\infty}\frac{1}{n}\log Q_{\alpha}\nw(\rho_n\|\sigma_n)\\
&\ds\le
\frac{2\alpha-1}{m}\log\eta+\frac{1}{m}\log Q_{\alpha}\nw(\rho_m\|\sigma_m).
\end{align*}
Taking now the liminf in $m$, we get that $\ol\psi(\alpha|\rho\|\sigma)$ exists as a limit, and  the first inequality in \eqref{finite-size deviations1} holds, for
every $\alpha>1$.
Using the lower factorization for $\rho$ and upper factorization for $\sigma$, an analogous argument to the one above yields the second inequality in \eqref{finite-size deviations1}.
\end{proof}

\begin{cor}
For every $\alpha\in(1,+\infty)$, we have
\begin{align}\label{D alpha limit}
\ol D_{\alpha}(\rho\|\sigma)=\lim_{n\to+\infty}\frac{1}{n} D_{\alpha}\nw(\rho_n\|\sigma_n).
\end{align}
\end{cor}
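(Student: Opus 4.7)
The corollary is an immediate consequence of the preceding lemma, once one remembers that $\rho_n$ is a state (hence $\log\Tr\rho_n=0$) and that $\supp\rho\subseteq\supp\sigma$ is assumed throughout the section. My plan is to pass from the two-sided bound \eqref{finite-size deviations1} on $\frac{1}{n}\psi^*(\alpha|\rho_n\|\sigma_n)$ to the corresponding statement for $\frac{1}{n}D_\alpha^*(\rho_n\|\sigma_n)$ by dividing through by $\alpha-1>0$.

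More concretely, first I would observe that under the standing assumption $\supp\rho_n\subseteq\supp\sigma_n$ and the fact that $\Tr\rho_n=1$, the identity
\begin{equation*}
D_\alpha^*(\rho_n\|\sigma_n)=\frac{1}{\alpha-1}\psi^*(\alpha|\rho_n\|\sigma_n)
\end{equation*}
holds for every $\alpha>1$ and every $n\in\bN$; this is just the defining formula of $D_\alpha^*$ applied to a density operator. Dividing \eqref{finite-size deviations1} by $\alpha-1$ and multiplying by $\frac{1}{n}$ (trivially) gives
\begin{equation*}
\frac{1}{\alpha-1}\ol\psi(\alpha|\rho\|\sigma)-\frac{1}{n(\alpha-1)}(2\alpha-1)\log\eta
\le \frac{1}{n}D_\alpha^*(\rho_n\|\sigma_n)
\le \frac{1}{\alpha-1}\ol\psi(\alpha|\rho\|\sigma)+\frac{1}{n(\alpha-1)}(2\alpha-1)\log\eta.
\end{equation*}

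Letting $n\to\infty$, the error terms vanish, so the sequence $\frac{1}{n}D_\alpha^*(\rho_n\|\sigma_n)$ converges (not merely has a limsup) to $\frac{1}{\alpha-1}\ol\psi(\alpha|\rho\|\sigma)$, which is by definition $\ol D_\alpha(\rho\|\sigma)$. This yields \eqref{D alpha limit}. There is no real obstacle here; the substance of the corollary is entirely contained in the preceding lemma, and the corollary is essentially a restatement in the $D_\alpha^*$-normalization.
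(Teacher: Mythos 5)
Your proof is correct and is exactly the implicit argument the paper intends (the paper states the corollary without proof precisely because it is this immediate deduction from the two-sided bound \eqref{finite-size deviations1} together with the identity $D_\alpha^*(\rho_n\|\sigma_n)=\frac{1}{\alpha-1}\psi^*(\alpha|\rho_n\|\sigma_n)$ valid for states with $\supp\rho_n\subseteq\supp\sigma_n$). The only nit is the phrase ``multiplying by $\frac{1}{n}$ (trivially)''---the factor $\tfrac{1}{n}$ is already present in \eqref{finite-size deviations1}, so only division by $\alpha-1>0$ is needed; your displayed inequality and the ensuing limit argument are nonetheless correct.
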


\begin{lemma}\label{lemma:np upper fact}
Assume that $\rho$ and $\sigma$ satisfy the factorization property. Then
\begin{align}
\limsup_{n\to\infty}\frac{1}{n}\log\Tr\rho_nS_n(a)& \le -\phi(a),\label{type I upper fact}\\
\limsup_{n\to\infty}\frac{1}{n}\log\Tr\sigma_nS_n(a)& \le -(\phi(a)+a).\label{type II upper fact}
\end{align}
for any $a\in\bR$.
\end{lemma}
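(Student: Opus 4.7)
The plan is to derive a single non-asymptotic inequality from which both $\limsup$ bounds follow, by combining the monotonicity of the sandwiched R\'enyi divergence (Lemma~\ref{lemma:monotonicity}) with the defining property of the Neyman--Pearson test $S_n(a)$.

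First, I would observe that, because $S_n(a)$ is the spectral projection of $\rho_n-e^{na}\sigma_n$ onto its strictly positive eigenspaces, $S_n(a)$ commutes with $\rho_n-e^{na}\sigma_n$ and
$(\rho_n-e^{na}\sigma_n)S_n(a)=(\rho_n-e^{na}\sigma_n)_+\ge 0$; taking the trace gives the elementary Markov-type bound
\begin{align*}
\Tr\sigma_n S_n(a)\le e^{-na}\Tr\rho_n S_n(a),
\end{align*}
which already reduces \eqref{type II upper fact} to \eqref{type I upper fact}.

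Next, for any $\alpha>1$, I would apply the monotonicity of $D_\alpha\nw$ to the binary measurement $(S_n(a),I-S_n(a))$ and retain only the first of the two non-negative post-measurement terms, obtaining
\begin{align*}
Q_\alpha\nw(\rho_n\|\sigma_n)\ge (\Tr\rho_n S_n(a))^{\alpha}(\Tr\sigma_n S_n(a))^{1-\alpha}.
\end{align*}
Substituting the previous inequality and using $1-\alpha<0$ to reverse the direction inside the $(1-\alpha)$-power yields the clean deterministic estimate
\begin{align*}
\Tr\rho_n S_n(a)\le e^{-na(\alpha-1)}Q_\alpha\nw(\rho_n\|\sigma_n).
\end{align*}

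Finally, taking $\frac{1}{n}\log$, passing to $\limsup$ via the definition of $\ol\psi(\alpha|\rho\|\sigma)$ in \eqref{mean psis}, and then taking the infimum over $\alpha>1$ converts the right-hand side into $-\sup_{\alpha>1}\{a(\alpha-1)-\ol\psi(\alpha|\rho\|\sigma)\}=-\phi(a)$ by \eqref{phi def}, which proves \eqref{type I upper fact}; \eqref{type II upper fact} follows by combining this with the first step.

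I do not anticipate a real obstacle; the argument is simply the quantum Chernoff--Nagaoka mirror of Lemma~\ref{lemma:sc lower bound}. It is worth noting that the factorization hypothesis does not enter explicitly in this particular pair of $\limsup$ upper bounds -- only the general monotonicity and the $\limsup$-definition of $\ol\psi$ are used. Factorization becomes genuinely necessary at the subsequent step, where these upper bounds must be complemented by matching $\liminf$ lower bounds so that the tests $S_n(a)$ realize both \eqref{assumption1} and \eqref{assumption2} of Theorem~\ref{thm:exponent}; there, the finite-$n$ deviation estimate \eqref{finite-size deviations1} ensuring $\tfrac{1}{n}\psi\nw(\alpha|\rho_n\|\sigma_n)\to\ol\psi(\alpha|\rho\|\sigma)$ will be the main workhorse.
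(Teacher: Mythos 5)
Your proposal is correct and follows essentially the same route as the paper's proof: you derive the non-asymptotic inequality $\Tr\rho_nS_n(a)\le e^{-na(\alpha-1)}Q_\alpha\nw(\rho_n\|\sigma_n)$ by combining the Neyman--Pearson property $\Tr\sigma_nS_n(a)\le e^{-na}\Tr\rho_nS_n(a)$ with the monotonicity of $D_\alpha\nw$ under the measurement $(S_n(a),I-S_n(a))$, and then pass to $\limsup$ and take the infimum over $\alpha>1$. The paper writes the same three-step chain in a slightly different order (it starts from $\Tr\rho_nS_n(a)$ and inserts factors, whereas you start from $Q_\alpha\nw$ and drop a term), but the ingredients are identical, and your remark that the factorization hypothesis is not actually used in this lemma matches the paper's proof as well.
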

\begin{proof}
First, we prove that
\begin{align}\label{ineq1 factorization}
\Tr\rho_nS_n(a)&\le e^{-na(\alpha-1)} Q_{\alpha}\nw(\rho_n||\sigma_n)
\end{align}
for every $n\in\bN$, $\alpha\ge 1$ and $n\in\bN$. Indeed, this inequality holds trivially if $S_n(a)=0$.
Otherwise we can use
\begin{align}\label{ineq2 factorization}
\Tr\rho_nS_n(a) \ge e^{na} \Tr\sigma_nS_n(a),
\end{align}
to show that for $\alpha\ge 1$,
\begin{align*}
&\Tr\rho_nS_n(a)\\
&\ds= \left\{\Tr\rho_nS_n(a) \right\}^{\alpha} \left\{ \Tr\rho_nS_n(a) \right\}^{1-\alpha} \\
&\ds\le e^{na(1-\alpha)} \left\{\Tr\rho_nS_n(a) \right\}^{\alpha} \left\{ \Tr\sigma_nS_n(a) \right\}^{1-\alpha} \\
&\ds\le e^{-na(\alpha-1)} \Big[ \left\{\Tr\rho_nS_n(a) \right\}^{\alpha} \left\{ \Tr\sigma_nS_n(a) \right\}^{1-\alpha}\\
&\ds\ds\ds+ \left\{\Tr\rho_n(I_n-S_n(a)) \right\}^{\alpha} \left\{ \Tr\sigma_n(I_n-S_n(a)) \right\}^{1-\alpha} \Big] \\
&\ds\le e^{-na(\alpha-1)} Q_{\alpha}\nw(\rho_n\|\sigma_n),
\end{align*}
where the last inequality is due to Lemma \ref{lemma:monotonicity}.
From \eqref{ineq1 factorization} we obtain
\begin{align*}
\frac{1}{n}\log\Tr\rho_nS_n(a) \le - \left\{a(\alpha-1) - \frac{1}{n}\log Q_{\alpha}\nw(\rho_n||\sigma_n)\right\},
\end{align*}
and taking first the limsup in $n$ and then the infimum over $\alpha>1$ yields \eqref{type I upper fact}.
Finally, combining \eqref{ineq2 factorization} with \eqref{type I upper fact} yields \eqref{type II upper fact}.
\end{proof}

\begin{lemma}\label{lemma:iid limit}
For any $A,B\in\B(\hil)_+$, and any $c\in(D_1(A\|B),D_{\infty}\nw(A\|B))$, we have
\begin{align*}
&\lim_{n\to+\infty}\frac{1}{n}\log\Tr\bz A^{\otimes n}-e^{nc}B^{\otimes n}\jz_+\\
&\ds=
-\sup_{\alpha>1}\{c(\alpha-1)-\psi\nw(\alpha|A\|B)\}.
\end{align*}
\end{lemma}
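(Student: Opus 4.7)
The plan is to sandwich $\Tr(A^{\otimes n}-e^{nc}B^{\otimes n})_+$ between two sequences whose exponential rates can be read off from results already in the paper. For the upper bound, note that the i.i.d.\ sequence $\rho_n=A^{\otimes n},\ \sigma_n=B^{\otimes n}$ trivially satisfies the factorization property with $\eta=1$, and in this setting $\ol\psi(\alpha|\rho\|\sigma)=\pfn{\alpha}{A}{B}$, so $\phi(c)=\sup_{\alpha>1}\{c(\alpha-1)-\pfn{\alpha}{A}{B}\}$. Since
\[
\Tr(A^{\otimes n}-e^{nc}B^{\otimes n})_+ = \Tr A^{\otimes n}S_n(c)-e^{nc}\Tr B^{\otimes n}S_n(c) \le \Tr A^{\otimes n}S_n(c),
\]
the bound \eqref{type I upper fact} from Lemma \ref{lemma:np upper fact} immediately gives $\limsup_n\frac{1}{n}\log\Tr(A^{\otimes n}-e^{nc}B^{\otimes n})_+\le -\phi(c)$.

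For the matching lower bound I would pass to the pinched Neyman--Pearson test: set $\what A_n:=\E_{B^{\otimes n}}(A^{\otimes n})$ and $\what S_n(a):=\{\what A_n-e^{na}B^{\otimes n}>0\}$. Because $\what S_n(a)$ commutes with $B^{\otimes n}$, one has $\Tr A^{\otimes n}\what S_n(a)=\Tr\what A_n\what S_n(a)$, and therefore
\[
\Tr(\what A_n-e^{na}B^{\otimes n})_+ = \Tr A^{\otimes n}\what S_n(a)-e^{na}\Tr B^{\otimes n}\what S_n(a) \le \Tr(A^{\otimes n}-e^{na}B^{\otimes n})_+.
\]
In the i.i.d.\ case $\alpha\mapsto\pfn{\alpha}{A}{B}$ is differentiable on $(1,+\infty)$ by Lemma \ref{lemma:derivative}, and $v(B^{\otimes n})$ grows polynomially so Lemma \ref{lemma:ol equals what} together with \eqref{pinching limit} yields $\what\psi(\alpha|\rho\|\sigma)=\ol\psi(\alpha|\rho\|\sigma)=\pfn{\alpha}{A}{B}$. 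Theorem \ref{thm:sc rate with differentiability} therefore applies and, for every $a\in(D_1(A\|B),D_{\infty}\nw(A\|B))$, delivers
\[
\lim_n\tfrac{1}{n}\log\Tr A^{\otimes n}\what S_n(a) = -\phi(a), \qquad \lim_n\tfrac{1}{n}\log\Tr B^{\otimes n}\what S_n(a) = -(\phi(a)+a).
\]

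The subtle step, which I expect to be the main obstacle, is that $\Tr A^{\otimes n}\what S_n(c)$ and $e^{nc}\Tr B^{\otimes n}\what S_n(c)$ decay at the \emph{same} exponential rate $e^{-n\phi(c)}$, so their difference could \emph{a priori} decay strictly faster. I would sidestep this by perturbing the threshold: for any $\delta>0$, on the range of $\what S_n(c+\delta)$ we have $e^{nc}B^{\otimes n}\le e^{-n\delta}\what A_n$, hence $\what A_n-e^{nc}B^{\otimes n}\ge(1-e^{-n\delta})\what A_n$ there, and therefore
\[
\Tr(A^{\otimes n}-e^{nc}B^{\otimes n})_+ \ge \Tr(\what A_n-e^{nc}B^{\otimes n})_+ \ge (1-e^{-n\delta})\Tr A^{\otimes n}\what S_n(c+\delta).
\]
Taking $\liminf_n\frac{1}{n}\log$ yields $\liminf_n\frac{1}{n}\log\Tr(A^{\otimes n}-e^{nc}B^{\otimes n})_+ \ge -\phi(c+\delta)$, and letting $\delta\searrow 0$, continuity of the convex function $\phi$ at the interior point $c$ of its finiteness domain produces the desired matching lower bound $-\phi(c)$.
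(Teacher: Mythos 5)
Your approach is genuinely different from the paper's: the paper's own proof is essentially a reduction. It first normalizes, setting $\tilde A:=A/\Tr A$ and $\tilde B:=B/\Tr B$, invokes Theorem~IV.4 of \cite{MO} (which is exactly this statement in the normalized case), and then tracks the rescaling via \eqref{psi scaling} and \eqref{D scaling}. You instead give a self-contained derivation from within this paper, funnelling the claim through the pinched Neyman--Pearson machinery of Theorem~\ref{thm:sc rate with differentiability}. The upper bound via Lemma~\ref{lemma:np upper fact} is fine, and the perturbation-of-threshold trick (passing from $\what S_n(c)$ to $\what S_n(c+\delta)$ so that $e^{nc}B^{\otimes n}\le e^{-n\delta}\what A_n$ on the range of the test, then sending $\delta\searrow 0$ by continuity of $\phi$) is a correct and elegant way to handle the worry that the two traces decay at exactly the same rate.

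There is, however, a genuine gap in the lower-bound step: Theorem~\ref{thm:sc rate with differentiability}, as stated and proved in the paper, is for sequences of \emph{states} ($\Tr\rho_n=\Tr\sigma_n=1$). The lemma you are proving is for arbitrary $A,B\in\B(\hil)_+$, and this matters precisely for its intended use -- in Lemma~\ref{lem:np-lower} it is applied with $A=\eta^{-1}\rho_m$ and $B=\eta\sigma_m$, which are not normalized. When $\Tr A\ne1$ the quantities $\what D_1(\rho\|\sigma)=\inf_{\alpha>1}\frac{1}{\alpha-1}\what\psi(\alpha|\rho\|\sigma)$ and $D_1(A\|B)$ do not coincide, because $\what\psi(1|\rho\|\sigma)=\log\Tr A\ne 0$: for $\Tr A>1$ one has $\what D_1>D_1(A\|B)$, so the interval $(\what D_1,\what D_\infty)$ on which \eqref{pinched NP rates} is proved fails to cover $(D_1(A\|B),D_\infty^*(A\|B))$ in its entirety, and the cited theorem literally does not apply there. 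Moreover the proof of Theorem~\ref{thm:sc rate with differentiability} uses the identity $\lim_{t\searrow 1}\alm_1'(t)=\what D_1(\rho\|\sigma)$, which is only correct in the normalized case. You would need to add exactly the normalization step that the paper uses (rescale to $\tilde A,\tilde B$, apply your argument to the normalized pair, and undo the rescaling via \eqref{psi scaling}); once that patch is in place the rest of your argument is sound.
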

\begin{proof}
When $\Tr A=\Tr B=1$, the assertion follows from Theorem IV.4 in \cite{MO}. In general, let
$\tilde A:=A/\Tr A,\,\tilde B:=B/\Tr B$. Then
\begin{align}
&\Tr\bz A^{\otimes n}-e^{nc}B^{\otimes n}\jz_+\nonumber\\
&\ds=
(\Tr A)^n\Tr\bz \tilde A^{\otimes n}-e^{n(c+\log\Tr B-\log\Tr A)}\tilde B^{\otimes n}\jz_+.\label{iid rescaling}
\end{align}
By \eqref{D scaling}, we have $D_{\alpha}\nw(A\|B)=\log\Tr A-\log\Tr B+D_{\alpha}\nw(\tilde A\|\tilde B)$, and hence
$c+\log\Tr B-\log\Tr A\in(D_1(\tilde A\|\tilde B),D_{\infty}\nw(\tilde A\|\tilde B))$.
Thus, by \eqref{iid rescaling} and Theorem IV.4 in \cite{MO}, we have
\begin{align*}
&\lim_{n\to+\infty}\frac{1}{n}\log\Tr\bz A^{\otimes n}-e^{nc}B^{\otimes n}\jz_+\\
&\ds=
\log\Tr A-\sup_{\alpha>1}\{(c+\log\Tr B-\log\Tr A)(\alpha-1)\\
&\ds\ds\ds-\psi\nw(\alpha|\tilde A\|\tilde B)\}\\
&\ds=-\sup_{\alpha>1}\{c(\alpha-1)-\psi\nw(\alpha|A\|B)\},
\end{align*}
where the last equality is due to \eqref{psi scaling}.
\end{proof}

\begin{lemma}\label{lem:np-lower}
Assume that $\rho$ and $\sigma$ satisfy the factorization property. Then
\begin{align}
\liminf_{n\to\infty}\frac{1}{n}\log\Tr\rho_nS_n(a)&
\ge \liminf_{n\to\infty}\frac{1}{n}\log\Tr(\rho_n-e^{na}\sigma_n)_+\nonumber\\
&\ge -\phi(a)
\label{lem:np-lower-1}
\end{align}
for every $a\in\left(\ol D_1(\rho\|\sigma),\ol D_{\infty}(\rho\|\sigma)\right)$.
\end{lemma}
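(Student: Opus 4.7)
The first inequality in \eqref{lem:np-lower-1} is immediate: since $S_n(a)=\{\rho_n-e^{na}\sigma_n>0\}$ realizes the maximum in \eqref{positive part}, one has $\Tr\rho_nS_n(a)=\Tr(\rho_n-e^{na}\sigma_n)S_n(a)+e^{na}\Tr\sigma_nS_n(a)\ge\Tr(\rho_n-e^{na}\sigma_n)_+$. My plan for the second inequality is to use both factorization bounds to reduce, at a fixed block scale $m$, to the i.i.d.\ statement of Lemma \ref{lemma:iid limit}, and then take $m\to\infty$ using \eqref{finite-size deviations1} to transfer the resulting bound from the finite-$m$ R\'enyi quantities to $\ol\psi$.

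Fix $m\in\bN$ and, for each $n$, write $n=km+r$ with $0\le r<m$. Combining the lower factorization of $\rho$ and the upper factorization of $\sigma$ gives the operator inequality
\begin{align*}
\rho_n-e^{na}\sigma_n\ge \eta^{-k}\rho_m^{\otimes k}\otimes\rho_r-e^{na}\eta^{k}\sigma_m^{\otimes k}\otimes\sigma_r,
\end{align*}
so Lemma \ref{lem:pos-tr-mono} gives a corresponding lower bound on $\Tr(\rho_n-e^{na}\sigma_n)_+$. Applying \eqref{lem:pos-tr-mono2} with $\F$ the partial trace over the last $r$ tensor factors and using $\Tr\rho_r=\Tr\sigma_r=1$ eliminates the remainder factors and yields
\begin{align*}
\Tr(\rho_n-e^{na}\sigma_n)_+\ge \eta^{-k}\Tr\bz\rho_m^{\otimes k}-e^{na+2k\log\eta}\sigma_m^{\otimes k}\jz_+.
\end{align*}
Writing the exponent as $km\bz a+\tfrac{2\log\eta}{m}\jz+ra$ with $|ra|\le (m-1)|a|$, one more use of \eqref{lem:pos-tr-mono1} bounds this from below by $\eta^{-k}\Tr\bz\rho_m^{\otimes k}-e^{km(a+2\log\eta/m+\epsilon)}\sigma_m^{\otimes k}\jz_+$ for any $\epsilon>0$ and all sufficiently large $k$.

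Lemma \ref{lemma:iid limit} applies to the RHS with $A=\rho_m,\ B=\sigma_m$ and the constant rate $a+\tfrac{2\log\eta}{m}+\epsilon$. Letting first $k\to\infty$ (so that $n/(km)\to 1$) and then $\epsilon\searrow 0$ produces
\begin{align*}
\liminf_{n\to\infty}\frac{1}{n}\log\Tr(\rho_n-e^{na}\sigma_n)_+\ge -\frac{\log\eta}{m}-\sup_{\alpha>1}\left\{\bz a+\frac{2\log\eta}{m}\jz(\alpha-1)-\frac{1}{m}\psi\nw(\alpha|\rho_m\|\sigma_m)\right\}.
\end{align*}
Plugging in $\frac{1}{m}\psi\nw(\alpha|\rho_m\|\sigma_m)\ge\ol\psi(\alpha|\rho\|\sigma)-\tfrac{(2\alpha-1)\log\eta}{m}$ from \eqref{finite-size deviations1} and collecting the $\log\eta$-contributions, the inner supremum is at most $\phi\bz a+\tfrac{4\log\eta}{m}\jz+\tfrac{\log\eta}{m}$, so the full bound becomes $-\phi\bz a+\tfrac{4\log\eta}{m}\jz-\tfrac{2\log\eta}{m}$. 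Since $a\in(\ol D_1(\rho\|\sigma),\ol D_\infty(\rho\|\sigma))$, $\phi$ is finite and hence continuous at $a$ by convexity, so letting $m\to\infty$ yields $\liminf_n\tfrac{1}{n}\log\Tr(\rho_n-e^{na}\sigma_n)_+\ge-\phi(a)$.

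The main technical obstacle is the interplay between the remainder $r$ and the $k$-dependent correction to the exponent: the partial-trace step, exploiting the unit traces of $\rho_r$ and $\sigma_r$, cleanly removes them, and the $\epsilon$-bracketing absorbs the vanishing $\frac{ra}{km}$. A further subtle point is verifying that the exponent passed to Lemma \ref{lemma:iid limit} lies inside the interior $\bz\tfrac{1}{m}D_1\nw(\rho_m\|\sigma_m),\,\tfrac{1}{m}D_\infty\nw(\rho_m\|\sigma_m)\jz$ required for that lemma; this is ensured for $m$ large by \eqref{D alpha limit} together with the monotonicity of $\alpha\mapsto D_\alpha\nw$, which jointly force these endpoints to approach $\ol D_1(\rho\|\sigma)$ and $\ol D_\infty(\rho\|\sigma)$.
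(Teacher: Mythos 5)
Your proof is correct and takes essentially the same route as the paper's: reduce to an i.i.d.\ block via the factorization inequalities and Lemma~\ref{lem:pos-tr-mono}, apply Lemma~\ref{lemma:iid limit} at each fixed block size $m$, and then upgrade to $\ol\psi$ via \eqref{finite-size deviations1} before letting $m\to\infty$; the $\ep$-cushion you introduce to absorb the remainder $r$ plays exactly the role of the auxiliary parameter $b>a$ in the paper. The one place where you gloss over a point that the paper handles explicitly is the verification that the tilted exponent fed into Lemma~\ref{lemma:iid limit} is interior to $\bz D_1(\rho_m\|\sigma_m),D_\infty\nw(\rho_m\|\sigma_m)\jz$: this does not hold for an arbitrary fixed $m$, so the correct reading of your argument is that one first fixes $\ep$, chooses $m\ge m_\ep$ large enough (using \eqref{D alpha limit} and monotonicity of $\alpha\mapsto D_\alpha\nw$, as you say), then lets $k\to\infty$, and only then takes the joint limit in $m$ and $\ep$; the paper makes this explicit by fixing $b$ first and exhibiting a threshold $m_b$. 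With that ordering clarified, the argument goes through.
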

\begin{proof}
We will assume that $\ol D_1(\rho\|\sigma)\ne\ol D_{\infty}(\rho\|\sigma)$, since otherwise the statement is empty.
Let $\eta$ denote a common factorization constant for $\rho$ and $\sigma$, and let $b\in\bR$ be such that
$\ol D_1(\rho\|\sigma)<a<b<\ol D_{\infty}(\rho\|\sigma)$. Due to \eqref{mean dmax}, there exist
$1<\alpha_1<\alpha_2<+\infty$ such that $\ol D_{\alpha_1}(\rho\|\sigma)<a<b<\ol D_{\alpha_2}(\rho\|\sigma)$.
Note that for every $\alpha>1$,
\begin{align*}
\frac{1}{m}D_{\alpha}\nw\bz\eta\inv\rho_n\|\eta\sigma_m\jz&=
-\frac{1}{m}\log\eta^2+\frac{1}{m}D_{\alpha}\nw\bz\rho_n\|\sigma_m\jz\\
&\ds\ds\xrightarrow[m\to+\infty]{}\ol D_{\alpha}(\rho\|\sigma),
\end{align*}
where the limit follows from \eqref{D alpha limit}.
Thus we see the existence of an $m_b$ such that for all $m\ge m_b$,
\begin{align}
\frac{1}{m}D_1(\eta\inv\rho_m\|\eta\sigma_m)&\le
\frac{1}{m}D_{\alpha_1}\nw\bz\eta\inv\rho_m\|\eta\sigma_m\jz\nonumber\\
&<b<
\frac{1}{m}D_{\alpha_2}\nw\bz\eta\inv\rho_m\|\eta\sigma_m\jz\nonumber\\
&\le\frac{1}{m}D_{\infty}\nw\bz\eta\inv\rho_m\|\eta\sigma_m\jz,\label{b limits}
\end{align}
where the first and the last inequalities are due to the monotonicity of the R\'enyi divergences in the parameter $\alpha$.

For a fixed $m\ge m_b$, we can write every $n>m$ uniquely as $n=km+r$ with $k\in\bN$ and $r\in\{1,\ldots,m\}$. Then we have
\begin{align}
\Tr\rho_nS_n(a)
&=\Tr(\rho_n-e^{na}\sigma_n)S_n(a)+e^{na}\Tr\sigma_nS_n(a) \nn \\
&\ge\Tr(\rho_n-e^{na}\sigma_n)_+ \nn \\
&\ge\Tr(\rho_{km}-e^{na}\sigma_{km})_+ \nn \\
&\ge\Tr(\eta^{-k}\rho_m^{\otimes k}-e^{na}\eta^{k}\sigma_m^{\otimes k})_+ ,
\label{lem:np-lower-2}
\end{align}
where the second inequality follows from the monotonicity \eqref{lem:pos-tr-mono2} applied to the partial trace
over subsystems $km+1$ to $n$,
and the last inequality is due to the factorization properties and \eqref{lem:pos-tr-mono1}.
Note that
\begin{align*}
na = (km+r)a \le kma+ma= km\left( a+\frac{a}{k} \right) < kmb ,
\end{align*}
whenever $k>a/(b-a)$, and for any such $k$ we have
\begin{align}
\Tr(\rho_n-e^{na}\sigma_n)_+ &\ge \Tr(\eta^{-k}\rho_m^{\otimes k}-e^{kmb}\eta^{k}\sigma_m^{\otimes k})_+ \nonumber\\
&=\Tr((\eta^{-1}\rho_m)^{\otimes k}-e^{kmb}(\eta\sigma_m)^{\otimes k})_+\label{factorization ineq1}
\end{align}
due to \eqref{lem:np-lower-2} and \eqref{lem:pos-tr-mono1}.
By \eqref{b limits}, $mb\in(D_1(\eta\inv\rho_m\|\eta\sigma_m),D_{\infty}\nw(\eta\inv\rho_m\|\eta\sigma_m))$, and hence
\eqref{factorization ineq1} and lemma \ref{lemma:iid limit} yield
\begin{align}
&\liminf_{n\to\infty}\frac{1}{n}\log\Tr(\rho_n-e^{na}\sigma_n)_+\nonumber\\
&\ds\ge \frac{1}{m}\liminf_{k\to\infty}\frac{1}{k}\log\Tr((\eta^{-1}\rho_m)^{\otimes k}-e^{kmb}(\eta\sigma_m)^{\otimes k})_+\nonumber\\
&\ds= -\frac{1}{m}\sup_{\alpha>1}\left\{ mb(\alpha-1)-\psi\nw(\alpha|\eta^{-1}\rho_m\|\eta\sigma_m) \right\},
\label{lem:np-lower-3}
\end{align}

By \eqref{psi scaling} and \eqref{finite-size deviations1} we have
\begin{align*}
\frac{1}{m}\psi\nw(\alpha|\eta^{-1}\rho_m\|\eta\sigma_m)
&=\frac{1}{m}\psi\nw(\alpha|\rho_m\|\sigma_m)-\frac{(2\alpha-1)}{m}\log\eta \\
&\ge\ol\psi(\alpha|\rho\|\sigma)-\frac{(4\alpha-2)}{m}\log\eta .
\end{align*}
Combining the above inequality and \eqref{lem:np-lower-3}, we have
\begin{align*}
&\liminf_{n\to\infty}\frac{1}{n}\log\Tr(\rho_n-e^{na}\sigma_n)_+\\
&\ds\ge -\sup_{\alpha>1}\left\{\left(b+\frac{4}{m}\log\eta\right)(\alpha-1)-\ol\psi(\alpha|\rho\|\sigma)\right\}
-\frac{2}{m}\log\eta \\
&\ds= -\phi\left(b+\frac{4}{m}\log\eta\right) -\frac{2}{m}\log\eta.
\end{align*}
Note that $\phi$ is continuous on $\left(\ol D_1(\rho\|\sigma),\ol D_{\infty}(\rho\|\sigma)\right)$ and
$b+\frac{4}{m}\log\eta\in\left(\ol D_1(\rho\|\sigma),\ol D_{\infty}(\rho\|\sigma)\right)$ for all sufficiently large $m$. Hence,
by taking the limit $m\to\infty$, we get
\begin{align*}
\liminf_{n\to\infty}\frac{1}{n}\log\Tr(\rho_n-e^{na}\sigma_n)_+ \ge -\phi(b).
\end{align*}
Finally, taking the limit $b\searrow a$, we get
\begin{align}
\liminf_{n\to\infty}\frac{1}{n}\log\Tr(\rho_n-e^{na}\sigma_n)_+ \ge -\phi(a) .
\label{lem:np-lower-4}
\end{align}
Now \eqref{lem:np-lower-2} and \eqref{lem:np-lower-4} lead to the assertion.
\end{proof}

\begin{thm}\label{thm:sc rate with factorization}
Assume that $\rho$ and $\sigma$ satisfy the factorization property. Then
\begin{align}
\lim_{n\to\infty}\frac{1}{n}\log\Tr\rho_nS_n(a)
&=\lim_{n\to\infty}\frac{1}{n}\log\Tr(\rho_n-e^{na}\sigma_n)_+\nonumber\\
&= -\phi(a),\label{type I limit}\\
\lim_{n\to\infty}\frac{1}{n}\log\Tr\sigma_nS_n(a)
&= -(\phi(a)+a)\label{type II limit}
\end{align}
for every $a\in\left(\ol D_1(\rho\|\sigma),\ol D_{\infty}(\rho\|\sigma)\right)$. In particular, the conditions
of Theorem \ref{thm:exponent} are satisfied with $T_n(a):=S_n(a)$, and hence
\begin{align}\label{sc for factorization}
\sci(r|\rho\|\sigma)=\scs(r|\rho\|\sigma)= H_r^*(\rho\|\sigma),\ds\ds\ds r\ge 0.
\end{align}
\end{thm}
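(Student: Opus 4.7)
The plan is to apply Theorem \ref{thm:exponent} with $T_n(a):=S_n(a)$ to obtain the strong converse identity \eqref{sc for factorization}, and to derive the error limits \eqref{type I limit} and \eqref{type II limit} by combining the upper bounds in Lemma \ref{lemma:np upper fact} with matching lower bounds. The lower bounds come partly from Lemma \ref{lem:np-lower} (for the type I error and the positive part) and partly from a new convex-duality argument (for the type II error). The strong converse conclusion is in fact immediate once the hypotheses of Theorem \ref{thm:exponent} are checked: \eqref{assumption1} is precisely \eqref{type II upper fact}, and \eqref{assumption2} is precisely \eqref{lem:np-lower-1}.

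For the type I limit \eqref{type I limit}, the matching limsup and liminf come directly from \eqref{type I upper fact} and \eqref{lem:np-lower-1}. For the positive-part limit, the identity
\[
\Tr(\rho_n-e^{na}\sigma_n)_+=\Tr\rho_nS_n(a)-e^{na}\Tr\sigma_nS_n(a)
\]
and non-negativity of the subtracted term give $\Tr(\rho_n-e^{na}\sigma_n)_+\le\Tr\rho_nS_n(a)$, so the limsup is again bounded by $-\phi(a)$ via \eqref{type I upper fact}, and the matching liminf is the middle inequality of \eqref{lem:np-lower-1}.

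The main obstacle is the type II lower bound $\liminf_n(1/n)\log\Tr\sigma_nS_n(a)\ge-(\phi(a)+a)$, since Lemma \ref{lemma:np upper fact} only provides the matching limsup. The key idea is that
\[
h_n(c):=\Tr(\rho_n-c\sigma_n)_+=\max_{0\le T\le I}\{\Tr\rho_nT-c\Tr\sigma_nT\}
\]
is a pointwise maximum of affine functions in $c$, hence convex in $c$, with the maximum at $c=e^{na}$ attained by $T=S_n(a)$. The envelope theorem (subgradient inequality for $h_n$) then yields, for every $b>a$,
\[
\Tr\sigma_nS_n(a)\ge\frac{h_n(e^{na})-h_n(e^{nb})}{e^{nb}-e^{na}}.
\]
By the positive-part limit just established, $(1/n)\log h_n(e^{na})\to-\phi(a)$, while $h_n(e^{nb})\le\Tr\rho_nS_n(b)$ gives $\limsup(1/n)\log h_n(e^{nb})\le-\phi(b)$ via \eqref{type I upper fact}. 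Because $a\in(\ol D_1(\rho\|\sigma),\ol D_\infty(\rho\|\sigma))$, the condition $a>\ol D_1(\rho\|\sigma)=\derright\ol\psi(1|\rho\|\sigma)$ forces the optimizer $\alpha^*(a)$ in $\phi(a)=\sup_{\alpha>1}\{a(\alpha-1)-\ol\psi(\alpha|\rho\|\sigma)\}$ to satisfy $\alpha^*(a)>1$, and inserting this $\alpha^*(a)$ into the variational formula for $\phi(b)$ yields $\phi(b)\ge\phi(a)+(b-a)(\alpha^*(a)-1)>\phi(a)$ for every $b>a$ in the interval. Thus $h_n(e^{nb})$ is exponentially negligible compared to $h_n(e^{na})$, the numerator is of order $e^{-n\phi(a)+o(n)}$, and the denominator is at most $e^{nb}$, producing $\liminf(1/n)\log\Tr\sigma_nS_n(a)\ge-\phi(a)-b$. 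Letting $b\searrow a$, using continuity of $\phi$ on $(\ol D_1(\rho\|\sigma),\ol D_\infty(\rho\|\sigma))$, gives the desired bound $-(\phi(a)+a)$ and completes the proof of \eqref{type II limit}; feeding \eqref{type II upper fact} and \eqref{lem:np-lower-1} into Theorem \ref{thm:exponent} then yields \eqref{sc for factorization}.
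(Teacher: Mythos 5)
Your proof is correct, and the overall structure matches the paper's: you combine the upper bounds from Lemma \ref{lemma:np upper fact} with the lower bounds from Lemma \ref{lem:np-lower}, and then invoke Theorem \ref{thm:exponent}. The one place you diverge is the type II lower bound $\liminf_n \frac{1}{n}\log\Tr\sigma_n S_n(a)\ge -(\phi(a)+a)$, which the paper disposes of in a single sentence by citing ``the same argument as in \cite[Theorem IV.5]{MO}''. You instead give a self-contained argument: you observe that $c\mapsto h_n(c)=\Tr(\rho_n-c\sigma_n)_+$ is a pointwise maximum of affine functions with slope $-\Tr\sigma_n T$, so that $-\Tr\sigma_n S_n(a)$ is a subgradient of $h_n$ at $c=e^{na}$, and the resulting difference quotient $\Tr\sigma_n S_n(a)\ge (h_n(e^{na})-h_n(e^{nb}))/(e^{nb}-e^{na})$ together with the already-established $h_n(e^{na})\sim e^{-n\phi(a)}$ and $\limsup\frac{1}{n}\log h_n(e^{nb})\le-\phi(b)<-\phi(a)$ gives the bound after $b\searrow a$. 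The strict inequality $\phi(b)>\phi(a)$ is indeed forced by $a>\ol D_1(\rho\|\sigma)$, since the maximizer of $\alpha\mapsto a(\alpha-1)-\ol\psi(\alpha|\rho\|\sigma)$ must lie strictly above $1$ whenever $\phi(a)>0$. This is the same mechanism as the argument in \cite{MO} (which effectively compares $\Tr(\rho_n-e^{na}\sigma_n)_+$ with $\Tr(\rho_n-e^{nb}\sigma_n)_+$ using the maximizing test $S_n(a)$), but your packaging via convexity and the envelope/subgradient inequality is cleaner and makes the proof fully self-contained within this paper; that is a genuine improvement in exposition.
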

\begin{proof}
The identities in \eqref{type I limit} are immediate from \eqref{type I upper fact} and
\eqref{lem:np-lower-1}. By \eqref{type II upper fact}, $\limsup_{n\to\infty}\frac{1}{n}\log\Tr\sigma_nS_n(a)
\le -(\phi(a)+a)$, and the same argument as in \cite[Theorem IV.5]{MO} yields
 $\liminf_{n\to\infty}\frac{1}{n}\log\Tr\sigma_nS_n(a)
\ge -(\phi(a)+a)$, proving \eqref{type II limit}. Finally, \eqref{sc for factorization}
follows from Theorem \ref{thm:exponent}.
\end{proof}

\begin{rem}\label{rem:NH}
It has been shown in \cite[Theorem 4]{NH} that for general sequences $\{\rho_n\}_{n\in\bN}$, $\{\sigma_n\}_{n\in\bN}$ as in Section \ref{sec:asymp},
\begin{align}\label{NHsc}
&\scs(r|\rho\|\sigma)=\inf_a\max\left\{r-a,\,\ol s(a) \right\},\\
&\ol s(a):=\limsup_n-\frac{1}{n}\log\Tr\rho_n S_n(a),
\end{align}
where $S_n(a)$ is given in \eqref{NP def}.
(Note that the roles of $\rho$ and $\sigma$ are reversed here as compared to \cite{NH}, which is the reason why we have $r-a$ instead of $r+a$ as in \cite[Theorem 4]{NH}.) By \eqref{type I limit},
if both $\rho$ and $\sigma$ satisfy the factorization property then
$\ol s(a)=\phi(a)$ for
$a\in\left(\ol D_1(\rho\|\sigma),\ol D_{\infty}(\rho\|\sigma)\right)$. It is easy to see that both $\ol s$ and $\phi$ are non-negative and monotone increasing. Since $\lim_{a\searrow \ol D_1(\rho\|\sigma)}\phi\bz a\jz=0$, we get
$\phi(a)=0=\ol s(a)$ for every $a\le\ol D_1(\rho\|\sigma)$. It is also clear from the definitions that
$\phi(a)=+\infty=\ol s(a)$ for every $a>\ol D_{\infty}(\rho\|\sigma)$. Hence,
$\ol s(a)=\phi(a)$ for all $a\in\bR\setminus\{D_{\infty}(\rho\|\sigma)\}$,
from which we obtain
\begin{align}\label{NHsc2}
\scs(r|\rho\|\sigma)=\inf_a\max\left\{r-a,\,\ol s(a) \right\}
=
\inf_a\max\left\{r-a,\,\phi(a) \right\}.
\end{align}
(It is easy to see (e.g., by drawing a picture of the graphs of $\ol s$, $\phi$ and $a\mapsto r-a$) that the values of these functions at $D_{\infty}(\rho\|\sigma)$ do not play a role in the validity of the above identity.) By exactly the same argument as in \cite[Lemma IV.16]{MO}, we have
\begin{align}\label{MOlemma}
\inf_a\max\left\{r-a,\,\ol \phi(a) \right\}=H_r^*(\rho\|\sigma).
\end{align}
This gives an alternative derivation of \eqref{sc for factorization}, based on \eqref{NHsc}--\eqref{MOlemma} and
\eqref{type I limit}--\eqref{type II limit}, and without using Theorems \ref{thm:exponent gen}--\ref{thm:exponent}.
\end{rem}

\section{Examples}
\label{sec:ex}

\subsection{Gibbs states on spin chains}
\label{sec:Gibbs}

Let $\hil$ be a finite-dimensional Hilbert space. A translation-invariant, finite-range \ki{interaction} $\Phi$
on $\hil$
is specified by a number $r\in\bN$, and $\Phi_j\in\B(\hil^{\otimes j}),\,j\in[r]:=\{1,\ldots,r\}$, where each $\Phi_j$ is
self-adjoint.
For every $n\in\bN$, the \ki{local Hamiltonian} $H_n$ corresponding to $\Phi$ is defined as
\begin{align*}
H_n^{\Phi}:=\sum_{j=1}^r\sum_{k:\,k+j-1\le n}\Phi_{j,n,k},\ds\ds\text{where}\\
\Phi_{j,n,k}:=\bz\otimes_{i=1}^{k-1} I\jz\otimes\Phi_j\otimes\bz\otimes_{i=k+j}^{n} I\jz
\end{align*}
is the embedding of $\Phi_j$ into $\B(\hil^{\otimes n})$ from the $k$-th position. The corresponding \ki{local Gibbs state}
on $n$ sites at inverse temperature $\beta>0$ is defined as
\begin{align*}
\omega_n^{\Phi,\beta}:=\frac{e^{-\beta H_n^{\Phi}}}{\Tr e^{-\beta H_n^{\Phi}}}.
\end{align*}
The \ki{thermodynamic limit (TDL) Gibbs state} $\bar\omega_n^{\Phi,\beta}$ on $n$ sites is then given by
\begin{align*}
\Tr A\bar\omega_n^{\Phi,\beta} =\lim_{k\to+\infty}\Tr A\omega_{n+k}^{\Phi,\beta},\ds\ds\ds
A\in\B(\hil^{\otimes n}).
\end{align*}
The existence and the uniqueness of the TDL Gibbs state was shown in \cite{Araki1,Araki2}.
The following has been shown in \cite[Lemma 4.2]{HMO}:
\begin{lemma}\label{lemma:Gibbs factprization}
Let $\Phi$ be a translation-invariant, finite-range interaction, and $\omega:=\{\omega_n^{\Phi,\beta}\}_{n\in\bN}$ and
$\bar\omega:=\{\bar\omega_n^{\Phi,\beta}\}_{n\in\bN}$. Then both $\omega$ and $\bar\omega$ satisfy the factorization property.
\end{lemma}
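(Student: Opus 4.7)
The plan is to reduce the factorization property to a single-boundary perturbation estimate for Gibbs states and then iterate.

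First, for each $k,m,r\in\bN$, decompose $H_{km+r}^\Phi = C_{k,m,r}+W_{k,m,r}$, where $C_{k,m,r}$ is the block-diagonal Hamiltonian obtained by deleting all interaction terms $\Phi_{j,km+r,\ell}$ whose support crosses one of the $k$ block boundaries (so that $C_{k,m,r}$ decomposes as $k$ copies of $H_m^\Phi$ followed by one copy of $H_r^\Phi$, acting on disjoint consecutive intervals), and $W_{k,m,r}$ is the sum of the removed terms. By translation invariance and the finite range $r_0$ of $\Phi$, each of the $k$ boundaries absorbs at most $r_0 \max_{1\le j\le r_0}\|\Phi_j\|_\infty$ in operator norm, so $\|W_{k,m,r}\|_\infty \le Mk$ with $M=M(\Phi)$ depending only on $\Phi$. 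The block-diagonal structure yields $e^{-\beta C_{k,m,r}}=(e^{-\beta H_m^\Phi})^{\otimes k}\otimes e^{-\beta H_r^\Phi}$, so after normalization the Gibbs state associated with $C_{k,m,r}$ is precisely $\omega_m^{\otimes k}\otimes\omega_r$.

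The key technical input is an Araki-type single-boundary estimate: for any $n_1,n_2\in\bN$,
\[
\eta^{-1}\,\omega_{n_1}\otimes\omega_{n_2}\;\le\;\omega_{n_1+n_2}\;\le\;\eta\,\omega_{n_1}\otimes\omega_{n_2},
\]
with a constant $\eta=\eta(\Phi,\beta)\ge 1$ independent of $n_1,n_2$. Iterating this bound across the $k$ inter-block boundaries yields $\omega_{km+r}\le\eta^k\,\omega_m^{\otimes k}\otimes\omega_r$ and its symmetric lower counterpart, establishing the factorization property for the local Gibbs sequence $\omega=\{\omega_n^{\Phi,\beta}\}_{n\in\bN}$.

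For the TDL Gibbs state $\bar\omega$, one observes that $\bar\omega_n$ is the limit in $n$-site expectations of partial traces of $\omega_{n+N}^{\Phi,\beta}$ over the last $N$ sites as $N\to+\infty$. Applying the single-boundary estimate to $\omega_{n+N}^{\Phi,\beta}$ with the cut between sites $n$ and $n+1$ and then taking partial trace and $N\to+\infty$ yields $\eta^{-1}\omega_n^{\Phi,\beta}\le\bar\omega_n\le\eta\,\omega_n^{\Phi,\beta}$ for every $n\in\bN$. Combined with the factorization already proved for $\omega$, this shows that $\bar\omega$ also satisfies the factorization property, possibly with an enlarged constant absorbing the uniform local-to-TDL correction.

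The main obstacle is the single-boundary estimate itself. A naive attempt to derive it from an operator inequality $e^{-\beta(C+W)}\le e^{\beta\|W\|_\infty}e^{-\beta C}$ fails because $\exp$ is not operator monotone on non-commuting self-adjoint arguments (as one can check on small $2\times 2$ examples). Instead, one writes $e^{-\beta(C+W)}=e^{-\beta C/2}\Gamma(\beta)e^{-\beta C/2}$ via the Dyson/Duhamel imaginary-time expansion associated with $W$, and exploits the spatial locality of the perturbation (support of size $\le 2r_0$): this restricts the non-trivial action of $\Gamma(\beta)$ to a bounded neighbourhood of the interface and gives bounds on $\Gamma(\beta)$ and its inverse depending only on $\beta$, $r_0$, and $\max_j\|\Phi_j\|_\infty$, uniformly in $n_1,n_2$. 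The full verification, using Araki's theory of KMS states and quasi-local algebras on the 1D spin chain, is carried out in \cite[Lemma~4.2]{HMO}.
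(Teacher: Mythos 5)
Your proposal tracks the paper exactly: the paper gives no proof of its own and simply cites \cite[Lemma~4.2]{HMO}, which is where the Araki-expansional machinery you describe (block-plus-boundary decomposition, conjugation identity $e^{-\beta(C+W)}=e^{-\beta C/2}\Gamma^{*}\Gamma e^{-\beta C/2}$, uniform-in-volume norm bounds on the imaginary-time Dyson series, and the local-to-TDL comparison $\eta^{-1}\omega_n\le\bar\omega_n\le\eta\,\omega_n$) is actually carried out. One small caveat worth recording: the uniform boundedness of $\Gamma(\beta)$ is not a consequence of spatial locality of $W$ alone — after imaginary-time evolution $W$ delocalizes, and the uniform bound relies on Araki's one-dimensional analyticity theorem, which is precisely why the argument is special to spin chains and does not extend to higher-dimensional lattices.
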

\smallskip

Lemma \ref{lemma:Gibbs factprization} and Theorem \ref{thm:sc rate with factorization} yield immediately the following:

\begin{thm}
Let $\Phi^{(1)}$ and $\Phi^{(2)}$ be translation-invariant, finite-range interactions on a finite-dimensional Hilbert space
$\hil$, and let $\beta_1,\beta_2>0$. Let $\rho=\{\omega_n^{\Phi^{(1)},\beta_1}\}_{n\in\bN}$ or $\rho=\{\bar\omega_n^{\Phi^{(1)},\beta_1}\}_{n\in\bN}$, and
let $\sigma=\{\omega_n^{\Phi^{(2)},\beta_2}\}_{n\in\bN}$ or $\sigma=\{\bar\omega_n^{\Phi^{(2)},\beta_2}\}_{n\in\bN}$. Then
\begin{align*}
\sci(r|\rho\|\sigma)=\scs(r|\rho\|\sigma)=H_r^*(\rho\|\sigma).
\end{align*}
\end{thm}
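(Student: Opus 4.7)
The plan is to recognize that this theorem is essentially a direct corollary of Lemma \ref{lemma:Gibbs factprization} combined with Theorem \ref{thm:sc rate with factorization}. There is no deep new work to do; the task is simply to verify that the hypotheses of the factorization-based strong converse theorem are satisfied in all four possible configurations of the null and alternative hypotheses.

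First, I would observe that, by Lemma \ref{lemma:Gibbs factprization}, each of the four sequences $\{\omega_n^{\Phi^{(1)},\beta}\}$, $\{\bar\omega_n^{\Phi^{(1)},\beta}\}$, $\{\omega_n^{\Phi^{(2)},\beta}\}$, $\{\bar\omega_n^{\Phi^{(2)},\beta}\}$ satisfies the factorization property with some factorization constant $\eta_i\ge 1$. In particular, regardless of which of the two options is selected for $\rho$ and for $\sigma$, both sequences satisfy the factorization property.

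Next, I would note that the factorization property uses an individual constant for each sequence, so to invoke Theorem \ref{thm:sc rate with factorization} one just takes a common factorization constant (for instance $\eta := \max\{\eta_\rho,\eta_\sigma\}$), as remarked in the paragraph defining the factorization property. Furthermore, the support condition $\supp\rho_n\subseteq\supp\sigma_n$ standing in force throughout Section \ref{sec:sc} is automatic here, because Gibbs states (both local and TDL) of finite-range interactions on a finite-dimensional spin are everywhere faithful, so $\sigma_n$ has full support on $\hil^{\otimes n}$.

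With these two observations in hand, Theorem \ref{thm:sc rate with factorization} applies verbatim and yields $sc(r|\rho\|\sigma)=H_r^*(\rho\|\sigma)$ for every $r\ge 0$, which is exactly the asserted equality. There is no real obstacle: all the serious analytic work (existence of the regularized R\'enyi divergences, the Neyman--Pearson test bounds \eqref{type I upper fact}--\eqref{type II upper fact} and \eqref{lem:np-lower-1}, the reduction to the i.i.d.\ limit via Lemma \ref{lemma:iid limit}, and the upgrade from the general lower bound in Lemma \ref{lemma:sc lower bound}) has already been carried out in Section \ref{sec:sc}, and the only physics input needed, namely the factorization property of Gibbs states, is supplied by Lemma \ref{lemma:Gibbs factprization} from \cite{HMO}.
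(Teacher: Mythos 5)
Your proposal is correct and matches the paper's own reasoning exactly: the theorem is stated in the paper as an immediate consequence of Lemma \ref{lemma:Gibbs factprization} and Theorem \ref{thm:sc rate with factorization}. The extra observations you include (choice of a common factorization constant and full support of Gibbs states) are correct and make explicit what the paper leaves implicit.
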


\subsection{Quasi-free states of a fermionic lattice}
\label{sec:quasifree sc rate}

In this section we consider the hypothesis testing problem for the case where the null-hypothesis is a temperature state of a
non-interacting fermionic lattice system and the alternative hypothesis is a product state. For the basics on fermionic
quasi-free states, see Appendix \ref{sec:quasi-free review}.
\smallskip

Let $\omega_Q$ and $\omega_R$ be translation-invariant quasi-free states of fermions on the lattice $\bZ^{\nu}$,
with symbols $Q,R\in\B(l^2(\bZ^{\nu}))$. Then $Q$ and $R$ are translation-invariant, and hence there exist
measurable functions $q,r:\,[0,2\pi)^{\nu}\to \bR$ such that $Q=F\inv M_q F$ and $R=F\inv M_r F$, where
$M_q$ and $M_r$ denote the corresponding multiplication operators on $L^2([0,2\pi)^{\nu})$, and
$F$ is the Fourier transformation (see Section \ref{sec:Szego}).
To avoid technical complications, we assume that there exists a $c\in(0,1/2)$ such that $c\le q,r\le 1-c$ almost everywhere with respect to the Lebesgue measure, or equivalently, $c I\le Q,R\le (1-c)I$.

The state of the fermions confined to the hypercube $\C_n:=\{\vect{k}:\,k_1,\ldots,k_{\dimen}=0,\ldots,n-1\}$ is again
a quasi-free state, with symbol $Q_n:=P_nQP_n$ or $R_n:=P_nRP_n$, where
$P_n:=\sum_{k_1,\ldots,k_{\dimen}=0}^{n-1}\pr{\egy_{\{\vect{k}\}}}$,
and $\{\egy_{\{\vect{k}\}}\}_{\vect{k}\in\bZ^{\nu}}$ is the standard basis of $l^2(\bZ^{\nu})$.
These states have density operators on the Fock space $\hil_n:=\F(\ran P_n)$, given by
\begin{align}
\omega_{Q_n}&=\det(I-Q_n)\bigoplus_{k=0}^{n^{\nu}}\bigwedge\nolimits^k \what Q_n,\nonumber\\
\omega_{R_n}&=\det(I-R_n)\bigoplus_{k=0}^{n^{\nu}}\bigwedge\nolimits^k \what R_n,\label{DFP formula}
\end{align}
where $\what Q_n:=Q_n/(I-Q_n)$,  $\what R_n:=Q_n/(I-R_n)$.

With a slight abuse of notation, we identify $\omega_Q$ with $\{\omega_{Q_n}\}_{n\in\bN}$ and
$\omega_R$ with $\{\omega_{R_n}\}_{n\in\bN}$.
We consider the hypothesis testing problem with
\begin{align*}
H_0:\ds\omega_Q\ds\ds\ds\text{ vs. }\ds\ds\ds
H_1:\ds\omega_R.
\end{align*}
Note that the Hilbert space corresponding to one single mode $\egy_{\vect{k}}$ is
$\F(\ran\pr{\egy_{\vect{k}}})\cong\bC^2$, and
\begin{align}\label{Fock decomposition}
\F(\ran P_n)\cong\bigotimes_{\vect{k}\in\C_n}\F(\ran\pr{\egy_{\vect{k}}})\cong(\bC^2)^{\otimes n^{\nu}}.
\end{align}
That is, $\hil_n$ is the Hilbert space of $n^{\nu}$ elementary subsystems. Thus, we replace all the $1/n$ scalings in the previous sections with $1/n^{\nu}$. For instance, we define the strong converse exponents as
\begin{align*}
&\sci(r|\omega_Q\|\omega_R)\\
&\ds:=
\inf\left\{\liminf_{n\to+\infty}-\frac{1}{n^{\nu}}\log\Tr\omega_{Q_n} T_n\Bigm|\right.\\
&\ds\ds\ds\ds\ds\ds\ds\left.\limsup_{n\to\infty}\frac{1}{n^{\nu}}\log\Tr\omega_{R_n}T_n \le -r\right\},\\
&\scs(r|\omega_Q\|\omega_R)\\
&\ds:=
\inf\left\{\limsup_{n\to+\infty}-\frac{1}{n^{\nu}}\log\Tr\omega_{Q_n} T_n\Bigm|\right.\\
&\ds\ds\ds\ds\ds\ds\ds\left.\limsup_{n\to\infty}\frac{1}{n^{\nu}}\log\Tr\omega_{R_n}T_n \le -r\right\}.
\end{align*}
It is easy to verify that the results of Section \ref{sec:differentiable} hold true with appropriately modifying all formulas according to this scaling; e.g., the pinched Neyman-Pearson tests have to be defined as
$\what S_n(a):=\{\what\omega_{Q_n}-e^{n^{\nu}a}\what\omega_{R_n}>0\}$, etc.

We start with showing that $\ol\psi(\alpha|\omega_Q\|\omega_R)$ exists as a limit, and it is a differentiable function of
$\alpha$ for every $\alpha>0$.

\begin{thm}\label{differentiability for quasifree states}
Let $\omega_Q$ and $\omega_R$ be quasi-free states of a fermion system on the lattice $\bZ^{\nu}$, with symbols $Q=F\inv M_qF,\,R=F\inv M_r F$, and assume that
there exists a constant $c\in(0,1/2)$ such that $c\le q,r\le 1-c$ almost everywhere with respect to the Lebesgue measure.
Then
for every $\alpha>0$ and $\xx=\oldd$ or $\xx=\neww$, we have
\begin{align}
&\ol\psi(\alpha|\omega_Q\|\omega_R)\nonumber\\
&\ds=
\lim_{n\to+\infty}\frac{1}{n^{\nu}}\psi\x(\alpha|\omega_{Q_n}\|\omega_{R_n})\label{quasifree psi1}\\
&\ds=
\frac{1}{(2\pi)^{\dimen}}\int_{[0,2\pi)^{\dimen}}
\log\Big[ q(\vecc{x})^{\alpha}r(\vecc{x})^{1-\alpha}\nonumber\\
&\ds\ds\ds\ds\ds\ds\ds\ds\ds\ds\ds+(1-q(\vecc{x}))^{\alpha}(1-r(\vecc{x}))^{1-\alpha}\Big] \dd \vecc{x}.\label{quasifree psi2}
\end{align}
Moreover, $\ol\psi(.\,|\omega_Q\|\omega_R)$ is differentiable on $(0,+\infty)$, and
\begin{align}
&\frac{d}{d\alpha}\Big\vert_{\alpha=1}\ol\psi(\alpha|\omega_Q\|\omega_R)\nonumber\\
&\ds=
\frac{1}{(2\pi)^{\dimen}}\int_{[0,2\pi)^{\dimen}}\Bigg[q(\vecc{x})\log\frac{q(\vecc{x})}{r(\vecc{x})}\nonumber\\
&\ds\ds\ds\ds\ds\ds\ds\ds\ds\ds\ds\ds+(1-q(\vecc{x}))\log\frac{1-q(\vecc{x})}{1-r(\vecc{x})}\Bigg] \dd \vecc{x}
\label{quasifree psi3}\\
&\ds=
\lim_{n\to+\infty}\frac{1}{n^{\nu}}D_1\bz\omega_{Q_n}\|\omega_{R_n}\jz.\label{quasifree psi4}
\end{align}
\end{thm}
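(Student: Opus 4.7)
The plan is to reduce the computation of $\psi\x(\alpha|\omega_{Q_n}\|\omega_{R_n})$ to log-determinants on $\ran P_n\subseteq l^2(\bZ^\nu)$ via the explicit formula \eqref{DFP formula}, and then to invoke Szeg\H o's first limit theorem (from the appendix) to pass to the integral expression. Since $Q,R$ are translation-invariant, they are simultaneously diagonalized by the Fourier transform and hence commute on $l^2(\bZ^\nu)$; the hypothesis $cI\le Q,R\le (1-c)I$ keeps $\what Q=Q/(I-Q)$ and $\what R=R/(I-R)$ bounded and uniformly bounded away from $0$, so every subsequent operator function is well-conditioned and the Szeg\H o asymptotics apply on spectra contained in $[c,1-c]$.

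Using the second-quantization functor $\Gamma(X):=\bigoplus_{k}\bigwedge\nolimits^{k} X$, together with the identities $\omega_A=\det(I-A)\Gamma(\what A)$, $\omega_A^{s}=\det(I-A)^{s}\Gamma(\what A^{s})$, $\Gamma(B)\Gamma(C)=\Gamma(BC)$ and $\Tr\Gamma(X)=\det(I+X)$, one computes
\begin{align*}
Q_\alpha\old(\omega_{Q_n}\|\omega_{R_n}) &= \det(I-Q_n)^{\alpha}\det(I-R_n)^{1-\alpha}\det\bigl(I+\what Q_n^{\alpha}\what R_n^{1-\alpha}\bigr),\\
Q_\alpha\nw(\omega_{Q_n}\|\omega_{R_n}) &= \det(I-Q_n)^{\alpha}\det(I-R_n)^{1-\alpha}\det\bigl(I+\bigl(\what Q_n^{1/2}\what R_n^{(1-\alpha)/\alpha}\what Q_n^{1/2}\bigr)^{\alpha}\bigr).
\end{align*}
Dividing by $n^\nu$, taking $n\to\infty$, and applying Szeg\H o's theorem factor by factor --- the first two in the standard single-symbol form, and the third via a multi-symbol Szeg\H o limit for truncations of functions built from the commuting translation-invariant symbols $q,r$ --- yields a sum of three integrals over $[0,2\pi)^{\nu}$. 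The algebraic identity $(1-q)^{\alpha}(1-r)^{1-\alpha}\bigl(1+\hat q^{\alpha}\hat r^{1-\alpha}\bigr)=q^{\alpha}r^{1-\alpha}+(1-q)^{\alpha}(1-r)^{1-\alpha}$ collapses this sum into the single integrand of \eqref{quasifree psi2}; for $\{v\}=\{*\}$ the additional scalar simplification $(\hat q^{1/2}\hat r^{(1-\alpha)/\alpha}\hat q^{1/2})^{\alpha}=\hat q^{\alpha}\hat r^{1-\alpha}$, valid because the Fourier symbols are ordinary numerical functions, shows that both versions $\{v\}=\{\s\}$ and $\{v\}=\{*\}$ produce the same limit.

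For differentiability on $(0,+\infty)$, the spectral bounds $q,r\in[c,1-c]$ make the integrand $\alpha\mapsto\log\bigl[q^{\alpha}r^{1-\alpha}+(1-q)^{\alpha}(1-r)^{1-\alpha}\bigr]$ smooth in $\alpha$ with derivatives uniformly bounded in $(q,r)$ on any compact subinterval of $(0,+\infty)$, so dominated convergence permits differentiating under the integral. At $\alpha=1$ the bracket evaluates to $q+(1-q)=1$, hence the logarithm vanishes and its $\alpha$-derivative equals the $\alpha$-derivative of the bracket, which is exactly the integrand of \eqref{quasifree psi3}. Identity \eqref{quasifree psi4} then follows from a standard convex-analysis fact: each $\alpha\mapsto\frac{1}{n^{\nu}}\psi\x(\alpha|\omega_{Q_n}\|\omega_{R_n})$ is convex, the sequence converges pointwise to the differentiable function $\ol\psi(\alpha|\omega_Q\|\omega_R)$, and therefore the one-sided derivatives converge at every interior point of differentiability; specialising to the right derivative at $\alpha=1$ identifies $\lim_n\frac{1}{n^{\nu}}D_1(\omega_{Q_n}\|\omega_{R_n})$ with $\frac{d}{d\alpha}\ol\psi(\alpha|\omega_Q\|\omega_R)\big|_{\alpha=1}$.

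The main technical hurdle is the middle determinant $\det(I+\what Q_n^{\alpha}\what R_n^{1-\alpha})$ and its sandwiched counterpart: unlike the classical single-symbol Szeg\H o setup, $\what Q_n$ and $\what R_n$ are non-commuting functions of two \emph{different} truncations $P_nQP_n$ and $P_nRP_n$. Because $Q$ and $R$ genuinely commute on the ambient space $l^2(\bZ^\nu)$, the non-commutativity of their truncations is a pure boundary effect, and the quantitative step one needs --- replacing $\what Q_n^{\alpha}\what R_n^{1-\alpha}$ by $P_n\what Q^{\alpha}\what R^{1-\alpha}P_n=P_nF\inv M_{\hat q^{\alpha}\hat r^{1-\alpha}}FP_n$ up to an error of vanishing normalized log-determinant --- is precisely where the uniform spectral bounds $cI\le Q,R\le(1-c)I$ are indispensable.
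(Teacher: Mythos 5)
Your plan follows essentially the same route as the paper: reduce $\psi^{\{v\}}(\alpha|\omega_{Q_n}\|\omega_{R_n})$ to log-determinants on $\ran P_n$ via the Dierckx--Fannes--Pogorzelska formula and the multiplicativity of the second quantization $\F$, then pass to the limit via a multi-symbol generalization of Szeg\H o's theorem (the paper's Lemma C.1 and Corollary C.2). Your determinant formulas for $Q_\alpha\old$ and $Q_\alpha\nw$ are correct, and the observation that the sandwiching collapses on the level of scalar Fourier symbols, $\bigl(\hat q^{1/2}\hat r^{(1-\alpha)/\alpha}\hat q^{1/2}\bigr)^{\alpha}=\hat q^{\alpha}\hat r^{1-\alpha}$, is exactly what makes $\{v\}=\{\s\}$ and $\{v\}=\{*\}$ agree in the limit, as in the paper. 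The one place where you deviate usefully is \eqref{quasifree psi4}: you derive it from convexity of $\alpha\mapsto n^{-\nu}\psi\x(\alpha|\cdot\|\cdot)$ together with pointwise convergence to a differentiable limit, which is a clean self-contained argument, whereas the paper simply cites \cite[Prop.~4.1]{MHOF} for \eqref{quasifree psi3}--\eqref{quasifree psi4}. Be aware, though, that you describe the handling of the mixed truncation $\det\bigl(I+\hat Q_n^{\alpha}\hat R_n^{1-\alpha}\bigr)$ (the ``boundary effect'') only as a step ``one needs'' rather than carrying it out --- the paper resolves it by invoking Corollary \ref{cor:Szego} (reduction via Stone--Weierstrass to the polynomial multi-symbol form of Lemma \ref{lemma:Szego}), and you should either cite that corollary or give the Weierstrass approximation argument explicitly rather than leave it as an announced hurdle.
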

\begin{proof}
The identities in \eqref{quasifree psi1} and \eqref{quasifree psi2} for $\xx=\oldd$, and the identities in
\eqref{quasifree psi3}--\eqref{quasifree psi4} have been shown in
\cite[Proposition 4.1]{MHOF}. Here we use a similar proof for \eqref{quasifree psi1} and \eqref{quasifree psi2} in the case
$\xx=\neww$.

For the rest, we fix an $\alpha>0$. Let
\begin{align*}
&W_{n,\alpha}:=\bz\frac{Q_n}{1-Q_n}\jz^{\half}\bz\frac{R_n}{1-R_n}\jz^{\frac{1-\alpha}{\alpha}}\bz\frac{Q_n}{1-Q_n}\jz^{\half},\\
&w_{\alpha}:=\bz\frac{q}{1-q}\jz^{\half}\bz\frac{r}{1-r}\jz^{\frac{1-\alpha}{\alpha}}\bz\frac{q}{1-q}\jz^{\half}.
\end{align*}
Then
\begin{align}
&\frac{1}{n^{\nu}}\psi\nw(\alpha|\omega_{Q_n}\|\omega_{R_n})\nn\\
&\ds=
\frac{1}{n^{\nu}}\log\Tr\bz\omega_{Q_n}^{\half}\omega_{R_n}^{\frac{1-\alpha}{\alpha}}\omega_{Q_n}^{\half}\jz^{\alpha}\nn\\
&\ds=
\frac{1}{n^{\nu}}\log\Tr\Bigg(\left[\det(I-Q_n)\F(\what Q_n)\right]^{\half}\left[\det(I-R_n)\F(\what R_n)\right]^{\frac{1-\alpha}{\alpha}}\nn\\
&\ds\ds\ds\ds\ds\ds\ds\ds\ds\ds\ds\left[\det(I-Q_n)\F(\what Q_n)\right]^{\half}\Bigg)^{\alpha}\nn\\
&\ds=
\frac{1}{n^{\nu}}\log\det(I-Q_n)^{\alpha}+\frac{1}{n^{\nu}}\log\det(I-R_n)^{1-\alpha}\nn\\
&\ds\ds\ds+\frac{1}{n^{\nu}}\log\Tr\F(W_{n,\alpha}^{\alpha})\nn\\
&\ds=
\frac{1}{n^{\nu}}\Tr\log (I-Q_n)^{\alpha}+\frac{1}{n^{\nu}}\Tr\log(I-R_n)^{1-\alpha}\nn\\
&\ds\ds\ds+\frac{1}{n^{\nu}}\log\det\bz I+W_{n,\alpha}^{\alpha}\jz\nn\\
&\ds=
\frac{1}{n^{\nu}}\Tr\log (I-Q_n)^{\alpha}+\frac{1}{n^{\nu}}\Tr\log(I-R_n)^{1-\alpha}\nn\\
&\ds\ds\ds+\frac{1}{n^{\nu}}\Tr\log\bz I+W_{n,\alpha}^{\alpha}\jz
\label{quasi-free psi}
\end{align}
By lemma \ref{lemma:Szego}, we have
\begin{align}
&\lim_{n\to+\infty}\left[\frac{1}{n^{\nu}}\Tr\log (I-Q_n)^{\alpha}+\frac{1}{n^{\nu}}\Tr\log(I-R_n)^{1-\alpha}\right]\nn\\
&\ds=
\frac{1}{(2\pi)^{\dimen}}\int_{[0,2\pi)^{\dimen}}\log\left[ (1-q(\vecc{x}))^{\alpha}(1-r(\vecc{x}))^{1-\alpha}\right] \dd \vecc{x}.
\label{quasifree limit1}
\end{align}
To evaluate the limit of the last term in \eqref{quasi-free psi}, we use Corollary \ref{cor:Szego} with
$a^{(1)}=q,\,a^{(2)}=r$, $f^{(1)}(t)=(t/(1-t))^{\half},\, f^{(2)}(t)=(t/(1-t))^{\frac{1-\alpha}{2\alpha}}$ and $g(x)=\log(1+x^{\alpha})$, and
obtain
\begin{align}
&\lim_{n\to+\infty}\frac{1}{n^{\nu}}\Tr\log\bz I+W_{n,\alpha}^{\alpha}\jz\nn\\
&\ds=
\frac{1}{(2\pi)^{\dimen}}\int_{[0,2\pi)^{\dimen}}\log\bz 1+\bz \what q(\vecc{x})^{\half}\what r(\vecc{x})^{\frac{1-\alpha}{\alpha}}\what q(\vecc{x})^{\half}\jz^{\alpha}\jz,
\label{quasifree limit2}
\end{align}
where $\what q:=q/(1-q),\,\what r:=r/(1-r)$.
Combining \eqref{quasi-free psi}, \eqref{quasifree limit1}, and \eqref{quasifree limit2}, we get \eqref{quasifree psi1}--
\eqref{quasifree psi2}. Differentiability of $\ol \psi(\alpha|\omega_Q\|\omega_R)$ is straightforward to verify.
\end{proof}

In particular, Theorem \ref{differentiability for quasifree states} shows that $D_{\alpha}\old$ and $D_{\alpha}\nw$ give rise to the same asymptotic quantities:
\begin{cor}
In the setting of Theorem \ref{differentiability for quasifree states}, we have
\begin{align}\label{quasifree mean Renyi}
&\ol D_{\alpha}\x(\omega_Q\|\omega_R)\\
&\ds:=
\lim_{n\to+\infty}\frac{1}{n^{\nu}}D_{\alpha}\x(\omega_{Q_n}\|\omega_{R_n})\nn\\
&\ds=
\frac{1}{(2\pi)^{\dimen}}\int_{[0,2\pi)^{\dimen}}
\frac{1}{\alpha-1}\log\Big[ q(\vecc{x})^{\alpha}r(\vecc{x})^{1-\alpha}\nn\\
&\ds\ds\ds\ds\ds\ds\ds\ds\ds\ds+(1-q(\vecc{x}))^{\alpha}(1-r(\vecc{x}))^{1-\alpha}\Big] \dd \vecc{x}
\end{align}
for every $\alpha\in(0,+\infty)\setminus\{1\}$ and $\xx=\oldd$ or $\xx=\neww$.
\end{cor}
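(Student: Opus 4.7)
The plan is to deduce this directly from Theorem \ref{differentiability for quasifree states} by invoking the definition of $D_\alpha^{\{v\}}$ in terms of $\psi^{\{v\}}$. Recall that for a state $\rho$ (i.e.\ $\Tr\rho=1$) with $\rho^0\le\sigma^0$ and for $\alpha\in(0,1)\cup(1,+\infty)$ one has
\begin{equation*}
D_\alpha^{\{v\}}(\rho\|\sigma)=\frac{1}{\alpha-1}\psi^{\{v\}}(\alpha|\rho\|\sigma)-\frac{1}{\alpha-1}\log\Tr\rho=\frac{1}{\alpha-1}\psi^{\{v\}}(\alpha|\rho\|\sigma).
\end{equation*}
Since $\omega_{Q_n}$ and $\omega_{R_n}$ are density operators and $R_n\ge cI>0$ guarantees $\omega_{R_n}$ is faithful (so $\omega_{Q_n}^0\le\omega_{R_n}^0$ trivially), this formula applies to both $\{v\}=\{\ \}$ and $\{v\}=\{*\}$ for every $\alpha\in(0,+\infty)\setminus\{1\}$.

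First I would divide by $n^{\nu}$ and rewrite
\begin{equation*}
\frac{1}{n^{\nu}}D_\alpha^{\{v\}}(\omega_{Q_n}\|\omega_{R_n})=\frac{1}{\alpha-1}\cdot\frac{1}{n^{\nu}}\psi^{\{v\}}(\alpha|\omega_{Q_n}\|\omega_{R_n}).
\end{equation*}
Next I would invoke \eqref{quasifree psi1}--\eqref{quasifree psi2} of Theorem \ref{differentiability for quasifree states}, which guarantees that the right-hand side converges as $n\to+\infty$, and that its limit equals $\frac{1}{\alpha-1}\ol\psi(\alpha|\omega_Q\|\omega_R)$, independently of whether $\{v\}=\{\ \}$ or $\{v\}=\{*\}$. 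Substituting the integral representation \eqref{quasifree psi2} for $\ol\psi(\alpha|\omega_Q\|\omega_R)$ and absorbing the factor $\frac{1}{\alpha-1}$ inside the integral yields exactly \eqref{quasifree mean Renyi}.

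There is essentially no obstacle here, since all the analytic work—the Szeg\H{o}-type limit theorems producing the integral formula and the equality of the two versions of $\psi$ in the thermodynamic limit—has already been carried out in Theorem \ref{differentiability for quasifree states}. The only point to check is that the sign of $\alpha-1$ does not matter for the identity $D_\alpha^{\{v\}}=\psi^{\{v\}}/(\alpha-1)$ on the state level (it does not, since $\log\Tr\omega_{Q_n}=0$), so the same integral formula, divided by $\alpha-1$, is valid both for $\alpha\in(0,1)$ and $\alpha>1$, and for either choice of $\{v\}$. Thus the corollary follows by a one-line rescaling of Theorem \ref{differentiability for quasifree states}.
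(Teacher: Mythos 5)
Your proof is correct and is precisely the (trivial) deduction the paper intends; the paper itself gives no proof for this corollary, treating it as immediate from Theorem \ref{differentiability for quasifree states} via the identity $D_{\alpha}\x(\omega_{Q_n}\|\omega_{R_n})=\frac{1}{\alpha-1}\psi\x(\alpha|\omega_{Q_n}\|\omega_{R_n})$ (valid since $\Tr\omega_{Q_n}=1$ and $\omega_{R_n}$ is faithful).
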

\smallskip

Now we can obtain the strong converse exponent for quasi-free states.

\begin{thm}\label{thm:quasifree sc}
Let $\omega_Q$ and $\omega_R$ be quasi-free states of a fermion system on the lattice $\bZ^{\nu}$,
and assume that $cI\le Q,R\le (1-c)I$ for some $c\in(0,1/2)$.
Then
\begin{align}\label{quasifree sc rate}
\sci(r|\omega_Q\|\omega_R)&=\scs(r|\omega_Q\|\omega_R)\nn\\
&=H_r^*(\omega_Q\|\omega_R)\nn\\
&=
\sup_{\alpha> 1}\frac{\alpha-1}{\alpha}\left[ r-\ol D_{\alpha}(\omega_Q\|\omega_R)\right],
\end{align}
where $\ol D_{\alpha}(\omega_Q\|\omega_R)$ is given in \eqref{quasifree mean Renyi}.
\end{thm}
\begin{proof}
By Theorem \ref{differentiability for quasifree states}, $\ol\psi(\alpha|\omega_Q\|\omega_R)$ exists as a limit for $\alpha>1$, and the limit is differentiable. By \eqref{DFP formula} and the assumption that $cI\le R\le (1-c)I$,
\begin{align*}
\omega_{R_n}&\ge\det(cI_n)\bigoplus_{k=0}^{n^{\nu}}\bz\frac{c}{1-c}\jz^k I_{\wedge^k\hil_n}\\
&=
c^{n^{\nu}}\bigoplus_{k=0}^{n^{\nu}}\bz\frac{c}{1-c}\jz^k I_{\wedge^k\hil_n}\\
&\ge
\bz c\min\left\{1,\frac{c}{1-c}\right\}\jz^{n^{\nu}}I_{\F(\ran P_n)}.
\end{align*}
Hence, by Corollary \ref{cor:TH2}, we have $\ol\psi(\alpha|\omega_Q\|\omega_R)=\what\psi(\alpha|\omega_Q\|\omega_R)$ for the states
$\what\sigma_n=\what\omega_{R_n}$ in Example \ref{ex:TH}. Combining these two facts, the assertion follows from Theorem \ref{thm:sc rate with differentiability}.
\end{proof}

\appendices

\section{Classical large deviations}
\label{sec:ld}
Let $\mu_n,\,n\in\bN$, be a sequence of finite positive measures on $\bR$, and let $c_n,\,n\in\bN$, be a sequence of positive numbers such that $\lim_n c_n=+\infty$. For each $n$, define the \ki{logarithmic moment generating function} $\lm_n$ by
\begin{equation*}
\lm_n(t):=\log\int_{\bR}e^{tx}\,d\mu_n(x).
\end{equation*}
Here we use the convention $\log +\infty:=+\infty$.
Define
\begin{equation}\label{psibar}
\ol \lm(t):=\limsup_{n\to+\infty}\frac{1}{c_n}\lm_n(c_n t),\ds\ds\ds t\in\bR.
\end{equation}
H\"older's inequality yields that $\lm_n$ is convex for every $n\in\bN$, and hence $\alm$ is convex as well.

The following lemma is a standard generalization of the Markov inequality:
\begin{lemma}\label{lemma:ldp upper bounds}
For every $x\in\bR$,
\begin{align}
\limsup_{n}\frac{1}{c_n}\log\mu_n\bz [x,+\infty)\jz&\le
-\sup_{t\ge 0}\{tx-\overline\lm(t)\},\label{upper1}\\
\limsup_{n}\frac{1}{c_n}\log\mu_n\bz (-\infty,x]\jz&\le -\sup_{t\le 0}\{tx-\overline\lm(t)\}\label{upper2}.
\end{align}
\end{lemma}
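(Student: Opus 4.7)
The plan is to prove both inequalities by the standard Chernoff/Markov bound argument, optimized separately over $t\ge 0$ and $t\le 0$.

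First I would prove \eqref{upper1}. Fix $t\ge 0$ and $x\in\bR$. Since $y\mapsto e^{c_nty}$ is monotone increasing, for any $y\in[x,+\infty)$ we have $e^{c_nty}\ge e^{c_ntx}$, so
\begin{align*}
\mu_n([x,+\infty))
\le e^{-c_ntx}\int_{[x,+\infty)} e^{c_nty}\dd\mu_n(y)
\le e^{-c_ntx}\int_{\bR} e^{c_nty}\dd\mu_n(y)
= e^{-c_ntx+\lm_n(c_nt)}.
\end{align*}
Taking logarithms, dividing by $c_n$ and passing to the limsup gives
\begin{align*}
\limsup_{n\to+\infty}\frac{1}{c_n}\log\mu_n([x,+\infty))
\le -tx+\alm(t)=-(tx-\alm(t)),
\end{align*}
and since $t\ge 0$ was arbitrary, taking the infimum over $t\ge 0$ on the right (equivalently, the supremum of $tx-\alm(t)$) yields \eqref{upper1}.

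The second inequality \eqref{upper2} follows by the same argument with the sign of $t$ reversed: for $t\le 0$ the function $y\mapsto e^{c_nty}$ is monotone decreasing, so for $y\in(-\infty,x]$ we again have $e^{c_nty}\ge e^{c_ntx}$, and the same chain of inequalities gives
\begin{align*}
\mu_n((-\infty,x])\le e^{-c_ntx+\lm_n(c_nt)},
\end{align*}
from which \eqref{upper2} is immediate upon taking limsup and then the supremum over $t\le 0$.

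There is no real obstacle here; the lemma is the standard exponential Markov inequality, and the only point to be careful about is that $\lm_n(t)$ may take the value $+\infty$, but with the convention $\log(+\infty):=+\infty$ stated just before the lemma, the bound $\mu_n(\cdot)\le e^{-c_ntx+\lm_n(c_nt)}$ is trivially valid (and useless) in that case, so it does not affect the argument. No convexity or lower-semicontinuity of $\alm$ is required for this direction, which is why one cannot strengthen the two separate half-line suprema into a single $\sup_{t\in\bR}$ without further assumptions.
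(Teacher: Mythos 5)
Your proof is correct and uses the same standard Chernoff--Markov argument as the paper: bound the indicator of the half-line by the exponential $e^{c_nt(\cdot-x)}$ (with $t\ge 0$ for $[x,+\infty)$ and $t\le 0$ for $(-\infty,x]$), enlarge the integration domain to all of $\bR$, take limsup, and then optimize over the relevant half-line of $t$. The remark about $\lm_n$ possibly being $+\infty$ is a fine observation but the paper does not bother with it either; there is no substantive difference between your argument and the one in the paper.
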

\begin{proof}
For every $t\ge 0$,
\begin{align*}
\mu_n\bz [x,+\infty)\jz&=\int_x^{+\infty}\egy_{[x,+\infty)}(z)\,d\mu_n(z)\\
&\le
\int_x^{+\infty}e^{c_nt(z-x)}\,d\mu_n(z)\\
&\le
e^{-c_ntx}\int_{\bR}e^{c_ntz}\,d\mu_n(z),
\end{align*}
and hence,
\begin{align*}
&\limsup_n\frac{1}{c_n}\log\mu_n\bz [x,+\infty)\jz\\
&\ds\le
-tx+\limsup_n\frac{1}{c_n}\log\int_{\bR}e^{c_ntz}\,d\mu_n(z)\\
&\ds= -tx+\alm(t),
\end{align*}
from which \eqref{upper1} follows. The proof of \eqref{upper2} goes the same way.
\end{proof}

The following converse to Lemma \ref{lemma:ldp upper bounds} was essentially given in \cite{DC}, under the stronger
(for our purposes too strong) condition
that $\alm$ exists as a limit in a neighbourhood of $0$, where it is also differentiable.
The more general version below can be easily obtained by following the same line of argument as in
 \cite{DC}. Using the same approach, a generalization of the lower bound in \cite{DC} has
 been obtained in \cite{Chen}, that also generalizes Lemma \ref{lemma:ldp lower bounds} below.
For readers' convenience, we
include a detailed proof below, based on the proof of the G\"artner-Ellis theorem in \cite[pp.~49--50]{DZ}.

\begin{lemma}\label{lemma:ldp lower bounds}
Assume that
$\alm(t)=\lim_n\frac{1}{n}\lm_n(c_n t)$
in some interval $(\alpha,\beta)$, and, moreover, that $\alm$ is a finite-valued differentiable function on $(\alpha,\beta)$.
Then, for every $x\in J:=\bz\lim_{t\searrow \alpha}\alm'(t),\lim_{t\nearrow \beta}\alm'(t)\jz$,
there exists a $t_x\in(\alpha,\beta)$ such that $\alm'(t_x)=x$, and
\begin{align}
\alm^*(x)&:=\sup_{t\in\bR}\{xt-\alm(t)\}=
\sup_{t\in\I}\{xt-\alm(t)\}\nn\\
&=
xt_x-\alm(t_x)>xt-\alm(t),\label{L-transform}
\end{align}
where $\I\subset\bR$ is any interval such that $t_x\in\I$, and the last inequality holds for every
$t\in\bR$ such that $t\notin(\alpha,\beta)$ or $\alm'(t)\ne x$.
 Moreover, for every $x_0,x_1\in \bR\cup\{\pm\infty\}$ such that $x_0<x<x_1$, we have
\begin{align}
\liminf_n\frac{1}{c_n}\log\mu_n\bz (x,x_1)\jz&\ge -\alm^*(x),\label{ldp lowerbound1}\\
\liminf_n\frac{1}{c_n}\log\mu_n\bz (x_0,x)\jz&\ge -\alm^*(x).\label{ldp lowerbound2}
\end{align}
\end{lemma}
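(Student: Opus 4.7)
The plan is to split the argument into three parts. First, I would settle the Legendre-transform identity \eqref{L-transform} by purely convex-analytic means. Second, I would derive a two-sided large-deviation lower bound at any point $y \in J$ via the classical exponential tilting argument. Third, I would reduce the one-sided bounds \eqref{ldp lowerbound1}--\eqref{ldp lowerbound2} to the two-sided one by shifting the point $x$ slightly inside $(x,x_1)$ (respectively $(x_0,x)$) and appealing to continuity of $\alm^*$.

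For the first part, differentiability of the convex function $\alm$ on $(\alpha,\beta)$ implies that $\alm'$ is continuous and monotone non-decreasing there, with image equal to the open interval $J$. Given $x \in J$, the intermediate value theorem produces a $t_x \in (\alpha,\beta)$ with $\alm'(t_x) = x$; the concave function $t \mapsto xt - \alm(t)$ has vanishing derivative at $t_x$ and is therefore globally maximized there (by the subgradient inequality applied to $\alm$ on all of $\bR$), yielding $\alm^*(x) = xt_x - \alm(t_x)$. Strict inequality whenever $\alm'(t) \neq x$ or $t \notin (\alpha,\beta)$ follows from standard subdifferential arguments.

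For the second part, I would fix $y \in J$ with corresponding $t_y$ from the first part and introduce the tilted probability measures $\tilde\mu_n(dz) := e^{c_n t_y z - \lm_n(c_n t_y)} \mu_n(dz)$. Their logarithmic moment generating functions are $\tilde\lm_n(s) = \lm_n(c_n t_y + s) - \lm_n(c_n t_y)$, and since $t_y$ lies in the interior of the region where the limit defining $\alm$ exists, one has $\frac{1}{c_n}\tilde\lm_n(c_n s) \to \alm(t_y + s) - \alm(t_y)$ for all $s$ in a neighborhood of $0$, with derivative $\alm'(t_y) = y$ at $s=0$. Applying Lemma \ref{lemma:ldp upper bounds} to $\tilde\mu_n$ with $s>0$ (resp.\ $s<0$) of small absolute value shows that both tails $\tilde\mu_n([y+\delta,+\infty))$ and $\tilde\mu_n((-\infty,y-\delta])$ decay exponentially, so $\tilde\mu_n((y-\delta,y+\delta)) \to 1$. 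Combining the change-of-measure identity
\begin{equation*}
\mu_n(A) = \int_A e^{\lm_n(c_n t_y) - c_n t_y z}\,d\tilde\mu_n(z)
\end{equation*}
with the elementary bound $e^{-c_n t_y z} \ge e^{-c_n t_y y - c_n |t_y|\delta}$ on $A=(y-\delta,y+\delta)$ then gives
\begin{equation*}
\liminf_{n\to+\infty} \frac{1}{c_n} \log \mu_n\bigl((y-\delta,y+\delta)\bigr) \ge \alm(t_y) - t_y y - |t_y|\delta = -\alm^*(y) - |t_y|\delta .
\end{equation*}

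For the third part, given $x \in J$ and $x_1 > x$, I would pick $y \in J \cap (x,x_1)$ (nonempty because $J$ is open and contains $x$) and $\delta > 0$ small enough that $(y-\delta,y+\delta) \subseteq (x,x_1)$. The second part then yields $\liminf_n (1/c_n) \log \mu_n((x,x_1)) \ge -\alm^*(y) - |t_y|\delta$; letting $\delta \searrow 0$ and $y \searrow x$, and using that $\alm^*$ is continuous on the open interval $J \subseteq \inter(\mathrm{dom}\,\alm^*)$ (continuity of a convex function on the interior of its effective domain), delivers \eqref{ldp lowerbound1}. The bound \eqref{ldp lowerbound2} follows analogously with $y \in J \cap (x_0,x)$. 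The main obstacle is that $\alm$ need not be strictly convex at $t_x$, so $x$ need not be an ``exposed point'' in the sense of Dembo--Zeitouni, and a direct tilting at $t_x$ would only produce mass in a symmetric two-sided neighborhood of $x$; this is precisely what forces the shift to $y \neq x$ in the third step and the appeal to continuity of $\alm^*$.
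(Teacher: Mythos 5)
Your proposal is correct and follows essentially the same route as the paper: exponential tilting at $t_y$ for $y$ slightly displaced from $x$, concentration of the tilted measure via the upper-bound lemma, change of measure to pull the bound back, and a limiting argument. The only cosmetic differences are that you concentrate on a symmetric interval $(y-\delta,y+\delta)$ rather than $(x,x+\delta)$, and you pass to the limit via continuity of $\alm^*$ on $J$ rather than via the strict inequality in \eqref{L-transform}; both yield the same conclusion.
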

\begin{proof}
Since $\alm$ is convex, differentiability on $(\alpha,\beta)$ implies that $\alm'$ is monotone increasing and continuous on
$(\alpha,\beta)$, and hence for every $x\in J$ there exists a $t_x\in(\alpha,\beta)$ such that $\alm'(t_x)=x$.
The rest of the assertions in \eqref{L-transform} are immediate from the concavity of
$t\mapsto xt-\alm(t)$ on $\bR$. Hence, we are left to prove  \eqref{ldp lowerbound1} and  \eqref{ldp lowerbound2}, of which we only prove \eqref{ldp lowerbound1}, as the proof of \eqref{ldp lowerbound2} goes exactly the same way.

Let $x\in J$ and $x_1\in\bR\cup\{+\infty\}$ be such that $x<x_1$.
For every $\delta>0$ such that $x+\delta<x_1,\,(x,x+\delta)\subset J$,
choose a $y\in (x,x+\delta)$. By the above, there is a $t_y\in(\alpha,\beta)$ corresponding to $y$ such that $\alm'(t_y)=y$, and hence, $\alm^*(y)=yt_y-\alm(t_y)$. Since $\alm(t_y)<+\infty$, we have $\lm_n(c_nt_y)<+\infty$ for all large enough $n$, and hence we can define the probability measures $\mu_{n,y}$ by
\begin{align*}
\mu_{n,y}(B)&:=\frac{1}{\int_{\bR}e^{c_nt_ys}\,d\mu_n(s)}\int_B e^{c_nt_ys}\,d\mu_n(s)\\
&=
\int_B e^{c_nt_ys-\lm_n(c_nt_y)}\,d\mu_n(s),
\end{align*}
where $B\subset\bR$ is any Borel set.
Note that
\begin{align*}
&\mu_{n,y}\bz (x,x+\delta)\jz=\int_{(x,x+\delta)}e^{c_nt_ys-\lm_n(c_nt_y)}\,d\mu_n(s)\\
&\ds=
e^{c_nt_yx-\lm_n(c_nt_y)}\int_{(x,x+\delta)}e^{c_nt_y(s-x)}\,d\mu_n(s)\\
&\ds\le
e^{c_nt_yx-\lm_n(c_nt_y)}e^{c_n|t_y|\delta}\int_{(x,x+\delta)}\,d\mu_n(s)\\
&\ds=
e^{c_nt_yx-\lm_n(c_nt_y)}e^{c_n|t_y|\delta}\mu_n\bz (x,x+\delta)\jz,
\end{align*}
and therefore,
\begin{align}
&\liminf_n\frac{1}{c_n}\log\mu_n\bz (x,x_1)\jz\nn\\
&\ds\ge
\liminf_n\frac{1}{c_n}\log\mu_n\bz (x,x+\delta)\jz\nonumber\\
&\ds\ge
\alm(t_y)-t_yx-|t_y|\delta+\liminf_n\frac{1}{c_n}\log\mu_{n,y}\bz (x,x+\delta)\jz,\label{ldp lower1}
\end{align}
where we used that $\alm(t_y)$ exists as a limit.
If we can prove that
\begin{align}
0>\max\Bigg\{&\limsup_{n\to+\infty}\frac{1}{c_n}\log\mu_{n,y}\bz (-\infty,x]\jz,\nn\\
&\limsup_{n\to+\infty}\frac{1}{c_n}\log\mu_{n,y}\bz [x+\delta,+\infty)\jz
\Bigg\}\label{ldp lower2}
\end{align}
then we have $\lim_{n\to+\infty}\mu_{n,y}\bz (x,x+\delta)\jz=1$, and hence, by \eqref{ldp lower1},
\begin{align*}
\liminf_n\frac{1}{c_n}\log\mu_n\bz (x,x_1)\jz&\ge
\alm(t_y)-t_yx-|t_y|\delta\\
&> -\alm^*(x)-|t_y|\delta,
\end{align*}
where the second inequality is due to \eqref{L-transform}.
Using that $|t_y|\delta\to 0$ as $\delta\searrow 0$, \eqref{ldp lowerbound1} follows.
Hence, we are left to prove \eqref{ldp lower2}.

Let $\lm_{n,y}$ denote the logarithmic moment generating function of $\mu_{n,y}$, i.e., for every $t\in\bR$,
\begin{align*}
\lm_{n,y}(c_nt)&:=\log\int_{\bR}e^{c_nts}\,d\mu_{n,y}(s)\\
&=
\log\int_{\bR}e^{c_nts+c_nt_ys-\lm_n(c_nt_y)}\,d\mu_n(s)\\
&=\lm_n(c_nt+c_nt_y)-\lm_n(c_nt_y),
\end{align*}
and let $\alm_y(t):=\limsup_{n\to+\infty}\frac{1}{c_n}\lm_{n,y}(c_nt)$. By assumption,
\begin{align}\label{ldp lower3}
\alm_y(t)=\alm(t+t_y)-\alm(t_y)
\end{align}
for all $t$ such that $t+t_y\in (\alpha,\beta)$.
By lemma \ref{lemma:ldp upper bounds},
\begin{align*}
\limsup_{n\to+\infty}\frac{1}{c_n}\log\mu_{n,y}\bz [x+\delta,+\infty)\jz&\le-\sup_{t\ge 0}\{t(x+\delta)-\alm_y(t)\},\\
\limsup_{n\to+\infty}\frac{1}{c_n}\log\mu_{n,y}\bz (-\infty,x]\jz&\le-\sup_{t\le 0}\{tx-\alm_y(t)\},
\end{align*}
and hence \eqref{ldp lower2} will be proved if we can show that
\begin{equation}\label{formula2}
0<\sup_{t\ge 0}\{t(x+\delta)-\alm_y(t)\}\ds\ds\text{and}\ds\ds
0<\sup_{t\le 0}\{tx-\alm_y(t)\}.
\end{equation}
By \eqref{ldp lower3},
\begin{align*}
&\sup_{t\ge 0}\{t(x+\delta)-\alm_y(t)\}\\
&\ds=
\sup_{t\ge 0}\{t(x+\delta)-\alm(t+t_y)+\alm(t_y)\}\nonumber\\
&\ds=
\alm(t_y)-(x+\delta)t_y+\sup_{t\ge 0}\{(t+t_y)(x+\delta)-\alm(t+t_y)\}\nonumber\\
&\ds=
\alm(t_y)-(x+\delta)t_y+\sup_{t\ge t_y}\{t(x+\delta)-\alm(t)\}\nonumber\\
&\ds=
\alm^*(x+\delta)-\{(x+\delta)t_y-\alm(t_y)\}\\
&\ds>0,
\end{align*}
where the last identity and the inequality follows from
\eqref{L-transform}, since $\alm'(t_y)=y< x+\delta$.
The other half of \eqref{formula2} follows by the same kind of argument, which we omit.
\end{proof}

\section{Fermionic quasi-free states}
\label{sec:quasi-free review}

For a separable Hilbert space $\hil$ and $k\in\bN$, let $\wedge^k\hil$ denote the $k$-th antisymmetric tensor power of $\hil$, with the convention
$\wedge^0\hil:=\bC$. Given $x_1,\ldots,x_k\in\hil$, their anti-symmetrized tensor product is defined as
\begin{equation*}
x_1\wedge\ldots\wedge x_k:=\frac{1}{\sqrt{n!}}\sum_{\sigma\in S_k} s(\sigma) x_{\sigma(1)}\otimes\ldots\otimes x_{\sigma(k)},
\end{equation*}
where the sum runs over all permutations of $k$ points. We have $\wedge^k\hil=\spann\{x_1\wedge\ldots\wedge x_k:\,x_i\in\hil\}$.
The \ki{anti-symmetric-} or \ki{fermionic Fock space} $\F(\hil)$ is defined as
\begin{align*}
\F(\hil):=\bigoplus_{k=0}^{\dim\hil}\bigwedge\nolimits^k\hil,
\end{align*}
where $\dim\hil$ may be countably infinite. Note that $\dim\F(\hil)=2^{\dim\hil}$ when $\dim\hil<+\infty$, and otherwise
$\F(\hil)$ is countably infinite-dimensional.
In the physics terminology, $\F(\hil)$ is the Hilbert space of a system of at most $\dim\hil$ fermions,
and the pure state $\pr{x_1\wedge\ldots\wedge x_k}$ describes $k$ fermions in the modes $x_1,\ldots,x_k$.

For each $x\in\hil$, the corresponding \ki{creation operator} is defined as the unique bounded linear extension $ c^*(x):\,\F(\hil)\to\F(\hil)$ of
\begin{equation*}
c^*(x):\,x_1\wedge\ldots\wedge x_k\mapsto x\wedge x_1\wedge\ldots\wedge x_k\,
\end{equation*}
where $x_1,\ldots,x_k\in\hil,\, k\in\bN$,
and the corresponding \ki{annihilation operator} is its adjoint $c(x):=\bz c^*(x)\jz^*$.
The interpretation is that $c^*(x)$ creates a fermion in the mode $x$.
Creation and annihilation operators satisfy the \ki{canonical anticommutation relations (CAR)}:
$c(x)c(y)+c(y)c(x)=0$ and $c(x)c^*(y)+c^*(y)c(x)=\inner{x}{y}\unit$ for every $x,y\in\hil$.

Observable quantities of the system are elements of the algebra
$\A(\hil)$ generated by the creation and the annihilation operators. When $\hil$ is finite-dimensional,
$\A(\hil)$ is equal to all the bounded operators on $\F(\hil)$. In the infinite-dimensional case, we need to take the closure in
some topology; closure
in the norm topology yields the so-called \ki{CAR algebra} $\car(\hil)$, which is strictly smaller than the closure in the weak
topology, which is $\B(\F(\hil))$.

Note that for any $A\in\B(\hil)$,
$A^{\otimes k}$ leaves $\wedge^k\hil$ invariant, and we denote the restriction of $A^{\otimes k}$ onto $\wedge^k\hil$ by $\wedge^k A$. If $\norm{A}\le 1$ or $\dim\hil<+\infty$ then
\begin{align*}
\F(A):=\bigoplus_{k=0}^{\dim\hil}\bigwedge\nolimits^k A
\end{align*}
is a bounded operator on $\F(\hil)$.
If $\hil$ is finite-dimensional and $A$ has eigenvalues $\lambda_1,\ldots,\lambda_d$, counted with multiplicities, then the eigenvalues of $\wedge^k A$ are
$\{\lambda_{i_1}\cdot\ldots\cdot\lambda_{i_k}\,:\,i_1<\ldots<i_k\}$. Thus we get that in this case
\begin{equation*}
\Tr\F(A)=\det(I+A).
\end{equation*}

Given an operator $Q\in\B(\hil)_+$ such that $Q\le I$, there exists a unique positive linear functional $\omega_Q$ on $\car(\hil)$ such that
$\omega_Q(I)=1$, and for any $x_1,\ldots,x_n,y_1,\ldots,y_m$,
\begin{align*}
&\omega_{Q}\,\bz c(x_1)^*\ldots c(x_n)^* c(y_m)\ldots c(y_1)\jz\\
&\ds= \delta_{m,n}\det \{\inner{y_i}{Q\s x_j}\}_{i,j=1}^n.
\end{align*}
That is, $\omega_Q$ is uniquely determined by its two-point correlation functions on creation and annihiliation operators.
Such a functional $\omega_Q$ is called a \ki{(gauge-invariant) quasi-free state}, and $Q$ the \ki{symbol} of the state.
If $\hil$ is finite-dimensional then $\omega_Q$ can be given by a density operator on $\F(\hil)$ which, with a slight abuse of notation, we also denote by $\omega_Q$.
If, moreover, $Q<I$ then $\omega_Q$ can be written explicitly as
\begin{equation*}
\omega_{Q}=\det(I-Q)\bigoplus_{k=0}^{\dim\hil}\bigwedge\nolimits^k\frac{Q}{I-Q}=\det(I-Q)\F\bz\frac{Q}{I-Q}\jz,
\end{equation*}
according to \cite[Lemma 3]{DFP}.

The dynamics of a system of non-interacting fermions is determined by a single-particle Hamilton operator, i.e., a self-adjoint operator $H$ on $\hil$. Assume for the rest that $\hil$ is finite-dimensional, and
let $H_{k,i}:=I^{\otimes (i-1)}\otimes H\otimes I^{\otimes (k-i)}$ be the embedding of $H$ into the $i$-th position in $\hil^{\otimes k}$. It is easy to see that
$\sum_{i=1}^kH_{k,i}$ leaves $\wedge^k\hil$ invariant, and we denote the restriction of  $H_{k,i}$ onto
$\wedge^k\hil$ by $\Gamma_k(H)$, and define the second-quantized Hamiltonian $\Gamma(H):=\oplus_{k=1}^{\dim\hil}\Gamma_k(\hil)$.
If the initial state of the system is a pure state given by the vector
$x_1\wedge\ldots\wedge x_r$ then the state after time $t$ is the pure state given by the vector
$\bz e^{-itH}x_1\jz\wedge\ldots\wedge\bz e^{-itH}x_r\jz=
e^{-it\Gamma(H)}(x_1\wedge\ldots\wedge x_r)$.
Thus, the dynamics of the many-particle system is governed by the Hamiltonian $\Gamma(H)$. Hence, the equilibrium state at
inverse temperature $\beta$ (Gibbs state) is $e^{-\beta\Gamma(H)}/\Tr e^{-\beta\Gamma(H)}$, and a direct computation shows that
this is a quasi-free state with
\begin{align*}
Q=\frac{e^{-\beta H}}{I+e^{-\beta H}}.
\end{align*}
Vice versa, any quasi-free state with symbol $Q$ such that $0<Q<1$ is the Gibbs state of non-interacting fermions at some
inverse temperature $\beta$. The infinite-dimensional case is slightly more complicated: it is still true that the equilibrium
state of non-interacting fermions at inverse temperature $\beta$ with one-particle Hamiltonian $H$ is the quasi-free state with
symbol $e^{-\beta H}/(I+e^{-\beta H})$, but the equilibrium state in this case is defined through the KMS condition
\cite[Section 5.2.4]{BR2}.

To describe a system of non-interacting fermions occupying sites of a $\nu$-dimensional cubic lattice, we choose
$\hil:=l^2(\bZ^{\nu})$.
As we have seen, the equilibrium state of the system at any inverse temperature is a quasi-free state, with some symbol
$Q\in\B(l^2(\bZ^{\nu}))$.
Let $\{\egy_{\{\vect{k}\}}\,:\,\vect{k}\in\Z^{\dimen}\}$ denote
the standard basis of $l^2(\Z^{\dimen})$, and for every
$n\in\bN$, let $P_n:=\sum_{k_1,\ldots,k_{\dimen}=0}^{n-1}\pr{\egy_{\{\vect{k}\}}}$. The Hilbert space of the fermions occupying
sites of the hypercube $\C_n:=\{\vect{k}:\,k_1,\ldots,k_{\dimen}=0,\ldots,n-1\}$ is then
$\F(P_n\hil)$, and the state of this subsystem is quasi-free with symbol $Q_n:=P_nQP_n$.

The \ki{translation operators} are the unique linear extensions of
$S_{\vect{j}}\egy_{\{\vect{k}\}}\mapsto \egy_{\{\vect{k}+\vect{j}\}},\s
\vect{k},\vect{j}\in\Z^{\dimen}$.
The map $\gamma_{\vect{j}}(c(x)):=c\bz S_\vect{j}x\jz$ extends to an automorphism of CAR$\bz l^2(\Z^{\dimen})\jz$ for all
$\vect{j}\in\Z^{\dimen}$,
and $\gamma_{\vect{j}},\s \vect{j}\in\Z^{\dimen}$, is a group of automorphisms, called the group of \ki{translation
automorphisms}. A quasi-free state $\omega_Q$ is called \ki{translation-invariant} if $\omega_{Q}\circ\gamma_\vect{j}=\omega_Q,\s \vect{j}\in\Z^{\dimen}$, which
holds if and only if its symbol $Q$ is translation-invariant, i.e., it commutes with all the unitaries $S_{\vect{j}},\s  \vect{j}\in\Z^{\dimen}$.

\section{Generalizations of Szeg\H o's theorem}
\label{sec:Szego}

Translation-invariant operators on $l^2(\Z^{\dimen})$ commute with each other, and they are simultaneously diagonalized by the Fourier transformation
\begin{align*}
&F:\,l^2(\Z^{\dimen})\to L^2([0,2\pi)^{\dimen})\\
&F\egy_{\{\vect{k}\}}:=\vfi_{\vect{k}}\,,\ds \vfi_{\vect{k}}(\vecc{x}):=e^{i\inner{\vect{k}}{\vecc{x}}}\,,\ds\vecc{x}\in[0,2\pi)^{\dimen}\,,\vect{k}\in\Z^{\dimen}\,,
\end{align*}
where $\inner{\vect{k}}{\vecc{x}}:=\sum_{i=1}^{\dimen}k_ix_i$. That is,
every translation-invariant operator $A$ arises in the form $A=F^{-1}M_{ a}F$, where $M_{ a}$ denotes the multiplication operator by a bounded measurable function $a$ on $[0,2\pi)^{\dimen}$.
In the rest of this section we use the convention that lower and upper case versions of the same letter refer to a measurable function on $[0,2\pi)^{\dimen}$ and the Fourier transform of the corresponding multiplication operator, respectively. Moreover, we use the notation
\begin{align*}
&A_n:=P_nAP_n,\ds\ds\ds A\in\B\bz l^2(\Z^{\dimen})\jz, \ds\ds\ds\text{where}\\
&P_n:=\sum_{k_1,\ldots,k_{\dimen}=0}^{n-1}\pr{\egy_{\{\vect{k}\}}}.
\end{align*}

Let $\Sigma(A)$ denote the convex hull of the spectrum of a self-adjoint operator $A$.
The following generalization of Szeg\H o's theorem \cite{GSz} has been shown in \cite[lemma 3.1]{MHOF}:
\begin{lemma}\label{lemma:Szego}
For all $k=1\ldots,r$, let $ a^{(k)}:\,[0,2\pi)^{\dimen}\to\bR$ be a bounded measurable function, and
$f^{(k)}:\,\Sigma(A^{(k)})\to\bR$ be continuous. Then
\begin{align}\label{convergence}
&\lim_{n\to \infty}\frac{1}{n^{\dimen}}\Tr f^{(1)}( A_n^{(1)})\cdot\ldots\cdot f^{(r)}( A_n^{(r)})\nn\\
&\ds=
\frac{1}{(2\pi)^{\dimen}}\int_{[0,2\pi)^{\dimen}} f^{(1)}(  a^{(1)}(\vecc{x}))\cdot\ldots\cdot f^{(r)}( a^{(r)}(\vecc{x})) \dd \vecc{x}.
\end{align}
\end{lemma}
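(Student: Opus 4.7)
My plan is a two-stage approximation: reduce the continuous functional calculus to polynomials by Stone--Weierstrass, then reduce the bounded measurable symbols to trigonometric polynomials, for which the compressions are banded matrices and boundary effects can be controlled explicitly.

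Each $A^{(k)}$ is a bounded self-adjoint operator with $\|A^{(k)}\|\le\|a^{(k)}\|_\infty$, so $\Sigma(A^{(k)})$ is a compact interval that also contains the spectra of all compressions $A_n^{(k)}$. Stone--Weierstrass approximates each continuous $f^{(k)}$ uniformly on $\Sigma(A^{(k)})$ by a polynomial $p^{(k)}$, and the elementary bound $\frac{1}{n^{\dimen}}|\Tr X_1\cdots X_r|\le\prod_i\|X_i\|$ on the $n^{\dimen}$-dimensional space $P_n\hil$ shows that the replacement $f^{(k)}(A_n^{(k)})\to p^{(k)}(A_n^{(k)})$ costs an error uniform in $n$, with the same uniform control for the integral on the right. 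Expanding polynomials by linearity reduces the claim to
\begin{align*}
\frac{1}{n^{\dimen}}\Tr P_n B_1 P_n B_2 P_n\cdots B_s P_n \xrightarrow[n\to+\infty]{} \frac{1}{(2\pi)^{\dimen}}\int_{[0,2\pi)^{\dimen}} b_1(\vecc{x})\cdots b_s(\vecc{x})\dd\vecc{x},
\end{align*}
for every finite sequence $B_j=F\inv M_{b_j}F$ taken from the translation-invariant operators, where the $b_j$ are bounded measurable.

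The underlying clean identity $\frac{1}{n^{\dimen}}\Tr P_n B_1 B_2\cdots B_s P_n=\frac{1}{(2\pi)^{\dimen}}\int b_1\cdots b_s\dd\vecc{x}$ is immediate from translation invariance: $B_1\cdots B_s=F\inv M_{b_1\cdots b_s}F$, and $\inner{\egy_{\{\vect{k}\}}}{F\inv M_b F\,\egy_{\{\vect{k}\}}}=\frac{1}{(2\pi)^{\dimen}}\int b$ is independent of $\vect{k}$. It therefore suffices to show that reinserting the intermediate $P_n$'s affects the trace by at most $o(n^{\dimen})$. A telescoping expansion using $P_n B_j P_n=P_n B_j-P_n B_j(I-P_n)$ writes the difference as a finite sum of terms each carrying one factor $(I-P_n)B_j P_n$, and Cauchy--Schwarz in the Hilbert--Schmidt norm bounds each such term by products involving $\|(I-P_n)B_j P_n\|_2$. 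For a trigonometric-polynomial symbol $b_j$ of degree $d$, $B_j$ is banded of bandwidth $d$ in the $\egy_{\{\vect{k}\}}$ basis, so $(I-P_n)B_j P_n$ is supported on an $O(n^{\dimen-1})$-thick slab around $\partial\C_n$ and has $\|(I-P_n)B_j P_n\|_2^2=O(n^{\dimen-1})$; balanced against $O(n^{\dimen/2})$ Hilbert--Schmidt bounds on the remaining factors, the total error is $o(n^{\dimen})$.

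To pass from trigonometric-polynomial to general $L^{\infty}$ symbols, I would approximate each $b_j$ by trigonometric polynomials $\tilde b_j$ via Fej\'er convolution, preserving $\|\tilde b_j\|_\infty\le\|b_j\|_\infty$ while $\|\tilde b_j-b_j\|_2\to 0$. Replacing one symbol at a time and again using Cauchy--Schwarz, together with the single-operator identity $\|(B_j-\tilde B_j)P_n\|_2^2=n^{\dimen}\cdot\frac{1}{(2\pi)^{\dimen}}\int|b_j-\tilde b_j|^2$ (itself a special case of the clean identity above), bounds the change in $\frac{1}{n^{\dimen}}\Tr P_n B_1 P_n\cdots B_s P_n$ by a constant multiple of $\sum_j\|b_j-\tilde b_j\|_2$, and the same $L^2$ control bounds the corresponding change in the integral on the right. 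The main obstacle is the middle step: once the symbols are banded, a careful accounting of which Hilbert--Schmidt factors carry the full $n^{\dimen/2}$-growth and which only the boundary $n^{(\dimen-1)/2}$-growth is what forces the telescoped error to be genuinely $o(n^{\dimen})$ rather than merely $O(n^{\dimen})$; the two approximation layers that wrap around it are standard once this estimate is in hand.
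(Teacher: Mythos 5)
Your proof is correct, but there is nothing in the paper to compare it against: the paper does not prove this lemma at all, it imports it from \cite{MHOF} (Lemma 3.1 there). What you have written is essentially the standard proof of such Szeg\H{o}-type statements, and it matches the structure of the argument in the cited source: (i) uniform polynomial approximation of the $f^{(k)}$ on $\Sigma(A^{(k)})$, which is legitimate because the spectra of all compressions $P_nA^{(k)}P_n$ lie in the numerical range of $A^{(k)}$, and the replacement error is uniform in $n$ thanks to $\frac{1}{n^{\nu}}|\Tr X_1\cdots X_r|\le\prod_i\norm{X_i}$; (ii) the exact identity $\Tr P_nBP_n=n^{\nu}(2\pi)^{-\nu}\int b$ for translation-invariant $B$, together with the estimate that deleting the interior projections costs only $o(n^{\nu})$; (iii) $L^2$-approximation of the symbols by Fej\'er means, with the $\norm{\cdot}_\infty$-contractivity of the Fej\'er kernel supplying the uniform operator-norm bounds needed for the one-symbol-at-a-time replacement. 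The only genuinely delicate point is your step (ii), and your accounting is right: after telescoping, each error term carries a factor $(I-P_n)B_jP_n$ or $P_nB_j(I-P_n)$, whose Hilbert--Schmidt norm is $O(n^{(\nu-1)/2})$ for banded $B_j$ because only the $O(n^{\nu-1})$ sites within the bandwidth of $\partial\C_n$ contribute; paired with an $O(n^{\nu/2})$ Hilbert--Schmidt bound on the remaining factors this gives $O(n^{\nu-1/2})=o(n^{\nu})$ (in fact both factors adjacent to the inserted $I-P_n$ can be taken $O(n^{(\nu-1)/2})$, improving this to $O(n^{\nu-1})$, though that is not needed). One cosmetic remark: $f^{(k)}(a^{(k)}(\vecc{x}))$ is defined only for almost every $\vecc{x}$, since $a^{(k)}(\vecc{x})$ lies in the essential range of $a^{(k)}$ only off a null set; this is harmless for the integral but worth a sentence.
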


From this lemma, we can easily get the following:

\begin{cor}\label{cor:Szego}
In the setting of lemma \ref{lemma:Szego}, we have
\begin{align*}
&\lim_{n\to \infty}\frac{1}{n^{\dimen}}\Tr g\bz \left[\prod_{k=1}^r f^{(k)}( A_n^{(k)})\right]\left[\prod_{k=1}^r f^{(k)}( A_n^{(k)})\right]^*\jz\\
&\ds=\frac{1}{(2\pi)^{\dimen}}\int_{[0,2\pi)^{\dimen}} g\bz \left[\prod_{k=1}^r f^{(k)}( a^{(k)})\right]^2\jz\dd \vecc{x}
\end{align*}
for any continuous function $g:\,[0,\Delta]\to\bR$, where $\Delta:=\prod_{k=1}^r\max_{s\in\Sigma(A^{(k)})}|f^{(k)}(s)|^2$.
\end{cor}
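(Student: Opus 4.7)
The plan is to reduce the corollary to Lemma \ref{lemma:Szego} by means of the Weierstrass approximation theorem, after first checking that the functional calculus with $f^{(k)}$ applied to the compressions is well-behaved.

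First, since each $a^{(k)}$ is real-valued and bounded, $A^{(k)}=F^{-1}M_{a^{(k)}}F$ is bounded self-adjoint, and so is the compression $A_n^{(k)}=P_nA^{(k)}P_n$ viewed as an operator on $P_n\hil$. Its spectrum lies in its numerical range, which is contained in that of $A^{(k)}$, i.e.\ in $\Sigma(A^{(k)})$. Hence $f^{(k)}(A_n^{(k)})$ is a well-defined bounded self-adjoint operator on $P_n\hil$, with $\|f^{(k)}(A_n^{(k)})\|\le M_k:=\max_{s\in\Sigma(A^{(k)})}|f^{(k)}(s)|$. Setting $B_n:=\prod_{k=1}^r f^{(k)}(A_n^{(k)})$, this gives $\|B_nB_n^*\|\le\Delta$ uniformly in $n$, so the spectrum of $B_nB_n^*$ is contained in $[0,\Delta]$ for every $n$.

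Next I would establish the corollary when $g$ is a polynomial. For $g(t)=t^j$, expand $(B_nB_n^*)^j$ as a product of $2rj$ factors of the form $f^{(k_i)}(A_n^{(k_i)})$, with the index pattern $1,2,\ldots,r,r,\ldots,2,1$ repeated $j$ times (using that each $f^{(k)}(A_n^{(k)})$ is self-adjoint, so $B_n^*=\prod_{k=r}^{1}f^{(k)}(A_n^{(k)})$). Applying Lemma \ref{lemma:Szego} with these $2rj$ indices (reusing the same $a^{(k)},f^{(k)}$ whenever an index repeats) yields
\begin{equation*}
\lim_{n\to\infty}\frac{1}{n^{\dimen}}\Tr(B_nB_n^*)^j=\frac{1}{(2\pi)^{\dimen}}\int_{[0,2\pi)^{\dimen}}\prod_{i=1}^{2rj}f^{(k_i)}(a^{(k_i)}(\vecc{x}))\dd\vecc{x}.
\end{equation*}
Because the integrand now consists of commuting scalars, it reorders to $\bigl(\prod_{k=1}^r f^{(k)}(a^{(k)}(\vecc{x}))^2\bigr)^j$, which equals $g\bigl([\prod_{k=1}^rf^{(k)}(a^{(k)}(\vecc{x}))]^2\bigr)$. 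By linearity this identity extends to any polynomial $g$.

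Finally, for general continuous $g$ on $[0,\Delta]$, by Weierstrass, for every $\ep>0$ there is a polynomial $p$ with $\sup_{t\in[0,\Delta]}|g(t)-p(t)|<\ep$. Since $\dim(P_n\hil)=n^{\dimen}$ and $\sigma(B_nB_n^*)\subseteq[0,\Delta]$,
\begin{equation*}
\frac{1}{n^{\dimen}}\bigl|\Tr g(B_nB_n^*)-\Tr p(B_nB_n^*)\bigr|\le\ep,
\end{equation*}
and the same estimate holds for the integrals on the right-hand side (since the integrand there also takes values in $[0,\Delta]$). Combining these two bounds with the polynomial case and letting $\ep\to 0$ concludes the proof. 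The main technical point is the inclusion $\sigma(A_n^{(k)})\subseteq\Sigma(A^{(k)})$, which secures both the definition and the uniform norm bound on $f^{(k)}(A_n^{(k)})$; once that and the bookkeeping in the expansion of $(B_nB_n^*)^j$ are handled, the rest is a standard Weierstrass-plus-linearity argument.
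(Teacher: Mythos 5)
Your proof is correct and follows essentially the same route as the paper's: reduce to polynomial $g$ via Weierstrass approximation (using the uniform spectral bound $\sigma(B_nB_n^*)\subseteq[0,\Delta]$ to control both the trace and the integral terms), and settle the polynomial case by expanding $(B_nB_n^*)^j$ and invoking Lemma \ref{lemma:Szego} with repeated symbols. The only difference is that you spell out the spectral inclusion $\sigma(A_n^{(k)})\subseteq\Sigma(A^{(k)})$ and the $2rj$-factor expansion explicitly, steps the paper leaves implicit.
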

\begin{proof}
Let $B_n:=\prod_{k=1}^r f^{(k)}( A_n^{(k)})$ and $b:=\prod_{k=1}^r f^{(k)}( a^{(k)})$.
Since $g$ is continuous on $[0,\Delta]$, the Stone-Weierstrass theorem
tells that for every $\ep>0$, there exists a polynomial $g_{\ep}$ such that
\begin{align*}
\norm{g-g_{\ep}}_{\infty}:=\max_{x\in[0,\Delta]}|g(x)-g_{\ep}(x)|<\ep.
\end{align*}
Now, for a fixed $\ep>0$, lemma \ref{lemma:Szego} yields that there exists an $N_{\ep}$ such that for all $n\ge N_{\ep}$,
\begin{align*}
\abs{\frac{1}{n^{\dimen}}\Tr g_{\ep}\bz B_nB_n^*\jz
-
\frac{1}{(2\pi)^{\dimen}}\int_{[0,2\pi)^{\dimen}} g_{\ep}\bz b(\vecc{x})^2\jz\dd \vecc{x}}<\ep.
\end{align*}
Hence, for every $n\ge N_{\ep}$,
\begin{align*}
&\abs{\frac{1}{n^{\dimen}}\Tr g\bz B_nB_n^*\jz
-\frac{1}{(2\pi)^{\dimen}}\int_{[0,2\pi)^{\dimen}} g\bz b(\vecc{x})^2\jz\dd \vecc{x}}\\
&\ds\le
\abs{\frac{1}{n^{\dimen}}\Tr g\bz B_nB_n^*\jz-\frac{1}{n^{\dimen}}\Tr g_{\ep}\bz B_nB_n^*\jz
}\\
&\ds\ds+\abs{\frac{1}{n^{\dimen}}\Tr g_{\ep}\bz B_nB_n^*\jz
-
\frac{1}{(2\pi)^{\dimen}}\int_{[0,2\pi)^{\dimen}} g_{\ep}\bz b(\vecc{x})^2\jz\dd \vecc{x}}\\
&\ds\ds
+\abs{\frac{1}{(2\pi)^{\dimen}}\int_{[0,2\pi)^{\dimen}} g_{\ep}\bz b(\vecc{x})^2\jz\dd \vecc{x}
-
\frac{1}{(2\pi)^{\dimen}}\int_{[0,2\pi)^{\dimen}} g\bz b(\vecc{x})^2\jz\dd \vecc{x}}\\
&\ds\le
2\norm{g-g_{\ep}}_{\infty}+\ep
<3\ep.
\end{align*}
\end{proof}

\section{Classical i.i.d. and Markov chains}
\label{sec:classical}

Let $\X$ be a finite set, and
$\rho$ and $\sigma$ be probability measures on the sigma-field generated by the cylinder sets of
$\X^{\infty}:=\times_{k=1}^{\infty}\X$. We denote by $\rho_n$ and $\sigma_n$ the restrictions of
$\rho$ and $\sigma$, respectively, to $\X^n=\times_{k=1}^n\X$. Then $\rho_n$ and $\sigma_n$ can be identified with their respective probability mass functions, which we also denote by $\rho_n$ and $\sigma_n$.
Let $\hil_n:=l^2(\X^n)=l^2(\X)^{\otimes n}$, where for any finite set $Y$,
$l^2(Y)=\bC^Y$ equipped with the inner product $\inner{f}{g}:=\sum_{y\in \Y}\ol f(y)g(y)$. The
multiplication operators by $\rho_n$ and $\sigma_n$ on $l^2(X)^{\otimes n}$ are density operators, which we also denote by $\rho_n$ and $\sigma_n$ if no confusion arises.
Moreover, we identify the probability measures $\rho$ and $\sigma$ with $\{\rho_n\}_{n\in\bN}$ and $\{\sigma_n\}_{n\in\bN}$.

Our aim here is to show how the expressions in \cite{NKiid,NKMarkov} for the strong converse exponent in the classical i.i.d. and Markov case can be recovered from our Theorems
\ref{thm:sc rate with differentiability} and \ref{thm:iid}.
Note that in the classical case any choice of the auxiliary sequence $\what\sigma$
yields $\what\rho_n=\rho_n,\,n\in\bN$, and thus
$\ol\psi=\what\psi$. Hence, in the classical case it is sufficient to verify the differentiability of $\ol\psi$ to apply
Theorem \ref{thm:sc rate with differentiability}.
\medskip

First, we consider the i.i.d.~case, where $\rho_n=\rho_1^{\otimes n}$, $\sigma_n=\sigma_1^{\otimes n}$, $n\in\bN$. We assume that
$\supp\rho_1\subseteq\supp\sigma_1$. Then
$\psi(t):=\ol\psi(t|\rho\|\sigma)=\psi(t|\rho_1\|\sigma_1)=\log Z(t)$, where
$Z(t):=\sum_{x}\rho_1(x)^t\sigma_1(x)^{1-t}$. Define
\begin{align*}
\omega_1\t(x):=\rho_1(x)^t\sigma_1(x)^{1-t}/Z(t),\ds\ds\ds x\in\X,\ds t\in\bR,
\end{align*}
and let
\begin{align*}
\omega_1^{(\infty)}(x):=\lim_{t\to+\infty}\omega_1\t(x)=
\begin{cases}
0,&x\notin\X^*,\\
\sigma_1(x)/\sigma(\X^*),&x\in\X^*,
\end{cases}
\end{align*}
where $x\in\X^*$ if $\rho_1(y)/\sigma_1(y)\le\rho_1(x)/\sigma_1(x)$ for all $y\in\X$, i.e.,
if $\log\bz\rho_1(x)/\sigma_1(x)\jz=D_{\max}(\rho_1\|\sigma_1)$.
Then $\omega_1\t$ is a probability mass function on $\X$ for every $t\in\bR\cup\{+\infty\}$, and a straightforward computation shows
\begin{align}
&D(\omega_1\t\|\sigma_1)=-\psi(t)+t\psi'(t),\label{NK3}\\
&D(\omega_1^{(\infty)}\|\sigma_1)=-\log\sigma(\X^*),\\
&D(\omega_1\t\|\rho_1)=-\psi(t)+(t-1)\psi'(t),\\
&D(\omega_1^{(\infty)}\|\rho_1)=-\log\sigma(\X^*)-D_{\max}(\rho_1\|\sigma_1),\label{NK4}\\
&\tau(t):=D(\omega_1\t\|\sigma_1)-D(\omega_1\t\|\rho_1)=\psi'(t),\\
&\tau(\infty):=\lim_{t\to+\infty}\tau(t)=D_{\max}(\rho_1\|\sigma_1).
\end{align}
Using now the notations and results of Section \ref{sec:LF} with $f:=\psi$, we get
\begin{align*}
D_{\psi,1}&=D(\rho_1\|\sigma_1)=D(\omega_1^{(1)}\|\sigma_1),\\
D_{\psi,\infty}&=D_{\max}(\rho_1\|\sigma_1),\\
r_{\max}&=\sup_{1<t<+\infty}(-\psi(t)+t\psi'(t))=\sup_{1<t<+\infty}D(\omega_1\t\|\sigma_1)\\
&=D(\omega_1^{(\infty)}\|\sigma_1),
\end{align*}
where in the last line we used \eqref{rmax} and the fact that $t\mapsto -\psi(t)+t\psi'(t)=F'\bz\frac{t-1}{t}\jz$ is monotone increasing due to the
convexity of $F$. Using now \eqref{diff r}--\eqref{diff Hr} and \eqref{NK3}--\eqref{NK4}, we see that for every
$r\in(D(\omega_1^{(1)},\sigma_1),D(\omega_1^{(\infty)}\|\sigma_1))$,
there exists a $t_r\in(1,+\infty)$ such that
\begin{align}\label{NK1}
r&=-\psi(t_r)+t\psi'(t_r)=D(\omega_1^{(t_r)}\|\sigma),\nn\\
H_r^*(\rho_1\|\sigma_1)&=-\psi(t_r)+(t_r-1)\psi'(t_r)=D(\omega_1^{(t_r)}\|\rho),
\end{align}
and for $r\ge D(\omega_1^{(\infty)}\|\sigma_1)$,
\begin{align}\label{NK2}
H_r^*(\rho_1\|\sigma_1)&=r-D_{\max}(\rho_1\|\sigma_1)\nn\\
&=r-D(\omega_1^{(\infty)}\|\sigma_1)+D(\omega_1^{(\infty)}\|\rho_1),
\end{align}
due to \eqref{Hr expressions}, \eqref{amax} and \eqref{NK3}--\eqref{NK4}. Combining now Theorem \ref{thm:iid} with
\eqref{NK1}--\eqref{NK2}, we get Theorems 2 and 3 in \cite{NKiid} with $p_0=\sigma_1,\,p_1=\rho_1$.
\medskip

Next, we consider the case where $\rho$ and $\sigma$ are Markov chains, with transition matrices $R$ and $S$, respectively.
That is, for any $x_1,\ldots,x_n\in\X$, we have
\begin{align*}
\rho_n(x_1,\ldots,x_n)&=\rho_1(x_1)R_{x_1x_2}R_{x_2x_3}\ldots R_{x_{n-1}x_n},\\
\sigma_n(x_1,\ldots,x_n)&=\sigma_1(x_1)S_{x_1x_2}S_{x_2x_3}\ldots S_{x_{n-1}x_n}.
\end{align*}
We assume that
$R$ is irreducible, i.e., there exists an $n\in\bN$ such that all the entries of $(I+R)^n$ are strictly positive. We also assume that
$\supp R\subseteq\supp S$, i.e., $R_{xy}>0\imp S_{xy}>0$ for all $x,y\in\X$; then $S$ is also irreducible. By the Perron-Frobenius theory (see, e.g., \cite[Theorem 3.1.1]{DZ}), for any irreducible matrix
with non-negative entries, the spectral radius of the matrix is an eigenvalue, and the corresponding left and right eigenvectors can be chosen to have strictly positive entries.
Here we don't assume that the Markov chains are stationary, only that their initial distributions are strictly positive, i.e., $\rho_1(x)>0,\,\sigma_1(x)>0$
for all $x\in\X$. For every $t\in\bR$, let $T_{xy}(t):=R_{xy}^tS_{xy}^{1-t}$. Then $T(t)$ is irreducible for every $t\in\bR$. Let $\lambda(t)$ be the spectral radius of $T(t)$, and let $v(t)$ be a corresponding right eigenvector
of $T(t)$ with strictly positive entries.
Noting that $v_x(t)/M(t)\le 1\le v_x(t)/m(t)$ for every $x\in\X$, where $m(t):=\min_{x}v_x(t),\,M(t):=\max_x v_x(t)$, we get
\begin{align*}
\ol\psi(t)&:=\ol\psi(t|\rho\|\sigma)\\
&=\lim_{n\to+\infty}\frac{1}{n}\log\sum_{x_1,\ldots,x_n\in\X}
\rho_1(x_1)^t\sigma_1(x_1)^{1-t}\\
&\ds\ds\ds\ds\ds\ds\ds\ds\ds\ds\ds\ds\ds  R_{x_1x_2}^tS_{x_1x_2}^{1-t}\ldots R_{x_{n-1}x_n}^tS_{x_{n-1}x_n}^{1-t}\\
&=
\lim_{n\to+\infty}\frac{1}{n}\log\inner{u(t)}{T(t)^{n-1}\mathbf{1}}=\log\lambda(t),
\end{align*}
where $u_x(t):=\rho_1(x)^t\sigma_1(x)^{1-t},\,x\in\X$, and $\mathbf{1}$ stands for the constant one vector
(see also \cite[Theorem 3.1.1]{DZ}). By standard results in perturbation theory \cite{Kato}, $\lambda(t)$ is an analytic function of $t$. In particular, $\ol\psi(t|\rho\|\sigma)$ is a differentiable function of $t$, and Theorem
\ref{thm:sc rate with differentiability} yields that \eqref{differentiable sc rate} holds. Our aim now is derive alternative expressions for
$H_r^*(\rho\|\sigma)$, from which we can recover the results of \cite{NKMarkov}.

Following \cite{Vasek,Natarajan}, we define
\begin{align*}
Q_{x,y}(t):=\frac{T_{xy}(t)v_y(t)}{\lambda(t)v_x(t)}.
\end{align*}
Then $Q(t)$ is irreducible with stationary distribution $q(t)$, and we denote the generated irreducible Markov chain
by $\omega\t$. The following explicit expressions for the asymptotic relative entropies are easy to verify:
\begin{align*}
\ol D(\omega\t\|\sigma)&:=\lim_{n\to+\infty}\frac{1}{n}D(\omega\t_n\|\sigma_n)\\
&=
\sum_{x,y}q_x(t)Q_{xy}(t)\log\frac{Q_{xy}(t)}{S_{xy}(t)},\\
\ol D(\omega\t\|\rho)&:=\lim_{n\to+\infty}\frac{1}{n}D(\omega\t_n\|\rho_n)\\
&=
\sum_{x,y}q_x(t)Q_{xy}(t)\log\frac{Q_{xy}(t)}{R_{xy}(t)}.
\end{align*}
Now we follow a modification of the proof of \cite[Lemma 2.3]{Vasek} to connect the above formulas to $\ol\psi$.
Normalizing $v(t)$ such that $\sum_x q_x(t)v_x(t)=1$, we have, for every $n\in\bN$,
\begin{align*}
\ol\psi(t|\rho\|\sigma)&=\log\lambda(t)\\
&=
\frac{1}{n}\log\sum_{x_0,x_1,\ldots,x_n}\frac{q_{x_0}(t)}{v_{x_0}(t)}R_{x_0x_1}^tS_{x_0x_1}^{1-t}\ldots\\
&\ds\ds\ds\ds\ds\ds\ds\ds\ds\ds\ds\ldots R_{x_{n-1}x_n}^tS_{x_{n-1}x_n}^{1-t}v_{x_n}(t).
\end{align*}
A straightforward calculation gives
\begin{align}
&\ol\psi'(t)\nn\\
&\ds=\frac{1}{t}\log\lambda(t)\label{Vasek0}\\
&\ds\ds+\frac{1}{t}\frac{1}{n}\sum_{x_0,x_1,\ldots,x_n}q_{x_0}(t)Q_{x_0x_1}\ldots Q_{x_{n-1}x_n}\nn\\
&\ds\ds\ds\ds\ds\ds\ds\ds\ds\ds\ds\ds\ds\log\frac{Q_{x_0x_1}\ldots Q_{x_{n-1}x_n}}{S_{x_0x_1}\ldots S_{x_{n-1}x_n}}\label{Vasek1}\\
&\ds\ds+\frac{1}{t}\frac{1}{n}\sum_{x_0,x_1,\ldots,x_n}q_{x_0}(t)Q_{x_0x_1}\ldots Q_{x_{n-1}x_n}\left[v_{x_0}(t)-v_{x_n}(t)\right]
\label{Vasek2}\\
&\ds\ds+\frac{1}{n}\sum_{x_0,x_1,\ldots,x_n}q_{x_0}(t)Q_{x_0x_1}\ldots Q_{x_{n-1}x_n}\frac{d}{dt}\frac{q_{x_0}(t)v_{x_n}(t)}{v_{x_0}(t)}.\label{Vasek3}
\end{align}
The term in \eqref{Vasek2} is equal to zero, and taking the limit $n\to+\infty$ yields
\begin{align}
\ol\psi'(t)=\frac{1}{t}\log\lambda(t)+\frac{1}{t}\ol D(\omega\t\|\sigma).
\end{align}
It is easy to see that \eqref{Vasek0}--\eqref{Vasek3} holds also if \eqref{Vasek0}--\eqref{Vasek1} is replaced with
\begin{align*}
\frac{1}{t-1}\log\lambda(t)+\frac{1}{t-1}\frac{1}{n}\sum_{x_0,x_1,\ldots,x_n}&q_{x_0}(t)Q_{x_0x_1}\ldots Q_{x_{n-1}x_n}\\
&\log\frac{Q_{x_0x_1}\ldots Q_{x_{n-1}x_n}}{R_{x_0x_1}\ldots R_{x_{n-1}x_n}},
\end{align*}
and as above, we obtain
\begin{align}
\ol\psi'(t)=\frac{1}{t-1}\log\lambda(t)+\frac{1}{t-1}\ol D(\omega\t\|\rho).
\end{align}
Hence,
\begin{align}
\ol D(\omega\t\|\sigma)&=-\ol\psi(t)+t\ol\psi'(t),\\
\ol D(\omega\t\|\rho)&=-\ol\psi(t)+(t-1)\ol\psi'(t),
\end{align}
in complete analogy with \eqref{NK3}--\eqref{NK4}.
Using again the general considerations in Section \ref{sec:LF} with $f=\ol\psi$, and Theorem \ref{thm:sc rate with differentiability}, we recover Theorems 2 and 3 from \cite{NKMarkov}.
\medskip

We refer to \cite{NKMarkov,HW,WH} for more details on exponentially decaying tail probabilities, hypothesis testing, and the information geometry of classical Markov chains.

\section*{Acknowledgments}

MM would like to thank Prof.~Fumio Hiai for discussions on the eigenvalues of Gibbs states,
and Vincent F.~Tan and Marco Tomamichel for pointing out the paper \cite{Chen}.
This work was partially supported by the MEXT Grant-in-Aid
(A) No.~20686026 ``Project on Multi-user Quantum Network'' (TO),
and by
the European Research Council Advanced Grant ``IRQUAT'', the
Spanish MINECO  Project No. FIS2013-40627-P, the Generalitat de Catalunya CIRIT Project No. 2014 SGR 966, and
by the Technische Universit\"at M\"unchen -- Institute for Advanced Study, funded by the German Excellence Initiative and the European Union Seventh Framework Programme under grant agreement no. 291763
 (MM). The authors are grateful to two anonymous referees for their comments that helped to improve the paper, and in particular for one referee for suggesting the application of the construction from the poof of \cite[Theorem 14]{TH}.

\begin{IEEEbiographynophoto}
{Mil\'an Mosonyi} Received his PhD in Physics from the Catholic University of Leuven in 2005. He
joined the Department of Analysis at the Budapest University of Technology and Economics as an assistant professor in 2005, and
he has been an associate professor there since 2012. Currently he is on a research leave at the
Technische Universit\"at M\"unchen -- Institute for Advanced Study.
His main research interests are quantum Shannon theory and quantum statistics.
\end{IEEEbiographynophoto}

\begin{IEEEbiographynophoto}
{Tomohiro Ogawa}
was born in Kanagawa, Japan, in 1969.
He received the B. Eng. and M. Eng. degrees in 1995 and 1997,
respectively, from the University of Tokyo and the Dr. Eng. degree
from the University of Electro-Communications in 2000.

He worked at the University of Tokyo from 2000 to 2005, at the Japan Science
and Technology Agency from 2005 to 2008, and since then he has been with
the University of Electro-Communications.  His research interests
include quantum information theory and information geometry.
\end{IEEEbiographynophoto}

\end{document}